\newcommand{\cb}{\operatorname{cb}}
\newcommand{\outerp}[2]{\ket{#1}\!\bra{#2}}
\newcommand{\cA}{\mathcal{A}}
\newcommand{\ZZ}{\mathbb{Z}}
\newcommand{\RR}{\mathbb{R}}
\newcommand{\CC}{\mathbb{C}}
\newcommand{\cK}{\mathcal{K}}
\newcommand{\cF}{\mathcal{F}}
\newcommand{\cB}{\mathcal{B}}
\newcommand{\LL}{\mathcal{L}}
\newcommand{\cL}{\mathcal{L}}
\newcommand{\ntr}{\operatorname{tr}}
\newcommand{\e}{\operatorname{e}}
\newcommand{\tr}{{\operatorname{Tr}}}
\newcommand{\cH}{\mathcal{H}}
\newcommand{\cD}{\mathcal{D}}
\newcommand{\cP}{\mathcal{P}}
\theoremstyle{plain}
\newtheorem{theorem}{Theorem}[section]
\newtheorem{lemma}[theorem]{Lemma}
\newtheorem{proposition}[theorem]{Proposition}
\newtheorem{corollary}[theorem]{Corollary}
\theoremstyle{definition}
\newtheorem{definition}[theorem]{Definition}
\newtheorem{hypothesis}[theorem]{Hypothesis}
\theoremstyle{remark}
\newtheorem{remark}[theorem]{Remark}
\newcommand{\Hcal}{\mathcal{H}}
\newcommand{\Lcal}{\mathcal{L}}
\newcommand{\Dcal}{\mathcal{D}}
\newcommand{\Bcal}{\mathcal{B}}
\newcommand{\cE}{\mathcal{E}}
\newcommand{\cT}{\mathcal{T}}
\newcommand{\Fcal}{\mathcal{F}}
\newcommand{\Rbb}{\mathbb{R}}
\newcommand{\Cbb}{\mathbb{C}}
\newcommand{\Ebb}{\mathbb{E}}
\newcommand{\NN}{\mathbb{N}}
\newcommand{\HC}{\operatorname{HC}}
\newcommand{\LSI}{\operatorname{LSI}}
\newcommand{\Nfix}{{N_{fix}}}
\newcommand{\norm}[1]{\left\| #1 \right\|}
\newcommand{\Id}{\mathds{1}}
\newcommand{\eps}{\varepsilon}
\newcommand{\Ind}{\mathds{1}}
\newcommand{\EE}{\mathbb{E}}
\newcommand{\Tr}{\text{Tr}\,}
\newcommand{\id}{\operatorname{id}}
\newcommand{\cM}{\mathcal{M}}
\newcommand{\cU}{\mathcal{U}}
\newcommand{\la}{\lambda}
\newcommand{\si}{\sigma}
\newcommand{\al}{\alpha}
\newcommand{\ten}{\otimes}
\newcommand{\ssubset} {\!\!\subset\! \!}
\newcommand{\kl}{\pl \le \pl}
\newcommand{\gl}{\pl \ge \pl}
\newcommand{\lel}{\pl = \pl}
\renewcommand{\L}{\mathcal{L}}
\newcommand{\rz}{{\mathbb R}}
\newcommand{\Mz}{{\mathbb M}}
\newcommand{\cz}{{\mathbb C}}
\newcommand{\pl}{\hspace{.1cm}}
\newcommand{\qd}{\end{proof}\vspace{0.5ex}}
\newcommand{\om}{\omega}
\newcommand{\Si}{\Sigma}
\newcommand{\C}{{\mathcal C}}
\newcommand{\pf}{\begin{proof}}
\newcommand{\xspace}{\hbox{\kern-2.5pt}}
\newcommand{\xyspace}{\hbox{\kern-1.1pt}}
\newcommand{\lb}{\langle \langle }
\newcommand{\rb}{\rangle \rangle }
\begin{document}
\title{Group transference techniques for\\ the estimation of the decoherence times \\and capacities of quantum Markov semigroups.}

\author{
    \IEEEauthorblockN{Ivan Bardet\IEEEauthorrefmark{1}, Marius Junge\IEEEauthorrefmark{2}, Nichcolas Laracuente\IEEEauthorrefmark{2}, Cambyse Rouz\'{e}\IEEEauthorrefmark{3}, Daniel Stilck Fran\c{c}a\IEEEauthorrefmark{4}\\}
    \IEEEauthorblockA{\IEEEauthorrefmark{1}Institut National de Recherche en Informatique et en Automatique, Paris, France
    \\}
    \IEEEauthorblockA{\IEEEauthorrefmark{2}Department of Mathematics
University of Illinois at Urbana-Champaign
Champaign, IL, USA\\
}
    \IEEEauthorblockA{\IEEEauthorrefmark{3}Faculty of Mathematics,
Technische Universit\"at M\"unchen,
Munich, Germany\\
}

\IEEEauthorblockA{\IEEEauthorrefmark{4}Centre for the Mathematics of Quantum Theory,
University of Copenhagen, Copenhagen, Denmark\\
}
}

\maketitle

\begin{abstract}
Capacities of quantum channels and decoherence times both quantify the extent to which quantum information can withstand degradation by interactions with its environment. However, calculating capacities directly is known to be intractable in general. Much recent work has focused on upper bounding certain capacities in terms of more tractable quantities such as specific norms from operator theory. In the meantime, there has also been substantial recent progress on estimating decoherence times with techniques from analysis and geometry, even though many hard questions remain open. In this article, we introduce a class of continuous-time quantum channels that we called \emph{transferred channels}, which are built through representation theory from a classical Markov kernel defined on a compact group. In particular, we study two subclasses of such kernels: H\"ormander systems on compact Lie-groups and Markov chains on finite groups. Examples of transferred channels include the depolarizing channel, the dephasing channel, and collective decoherence channels acting on $d$ qubits. Some of the estimates presented are new, such as those for channels that randomly swap subsystems. We then extend tools developed in earlier work by Gao, Junge and LaRacuente to transfer estimates of the classical Markov kernel to the transferred channels and study in this way different non-commutative functional inequalities. The main contribution of this article is the application of this transference principle to the estimation of decoherence time, of private and quantum capacities, of entanglement-assisted classical capacities as well as estimation of entanglement breaking times, defined as the first time for which the channel becomes entanglement breaking. Moreover, our estimates hold for non-ergodic channels such as the collective decoherence channels, an important scenario that has been overlooked so far because of a lack of techniques.
\end{abstract}

\IEEEpeerreviewmaketitle

\section{Introduction}
\IEEEPARstart{I}{n} quantum mechanics, the evolution of a global, closed system leads to a unitary evolution of states. However, any realistic quantum system undergoes dissipative dynamics, due to its unavoidable interaction with its surrounding environment. Understanding how this noise limits the usefulness  of these systems for various information processing tasks is of central importance to the development of quantum technologies.

The dynamics of open systems are modeled by completely positive trace preserving maps. In the Markovian approximation, continuous time evolutions are then modeled by quantum Markov semigroups $(T_t)_{t\ge 0}$ of such maps. Given a concrete quantum Markov semigroup, it is then important to identify short time versus long time behaviour of the evolution. For example, it is important to know how long entanglement can be preserved. This remains a challenging problem. Even for classical systems, precise decoherence time estimates are very delicate, see \cite{saloff1994precise,DSC93group}. The aim of this paper is to obtain some `concrete' estimates on the decoherence time of such dissipative evolutions, as well as to derive bounds on various capacities using classical and quantum functional inequalities.

In the classical setting, the connection between functional inequalities and decoherence times is very well-established, (see e.g.~\cite{saloff1994precise,Diaconis1996a,Ledoux}), and many works started to establish the connections in the quantum case in recent years. But many of the techniques only work for semigroups with a unique invariant state, and actually proving such inequalities for quantum systems remains challenging.
In this paper, we start mending this gap by showing how to obtain various quantum functional inequalities starting from a classical one.
To make the connection between classical and quantum Markov semigroups, we consider mixed unitary quantum channels in which the unitaries form a representation of a group, and their weights come from a classical Markovian process on the group. We call such semigroups transferred semigroups, and they include widely studied error models, such as depolarizing or dephasing quantum channels.
We are particularly interested in non-ergodic semigroups of channels, that is, semigroups admitting more than one invariant state. These semigroups have been known to play an important role in various quantum error prevention schemes \cite{lidar2013quantum}. Unfortunately, even the literature for non-ergodic semigroups in the commutative community is relatively sparse. However, we will show how to directly translate classical results to the quantum setting if the underlying classical dynamics is ergodic, even if the quantum semigroup is not.

Besides the obvious application of  decoherence times estimate, we use this transference principle to obtain entropic inequalities for these semigroups.  By showing some tensorization results and that the inequalities remain valid when we tensorize the underlying semigroup with the identity channel, we use them to estimate several different capacities of these semigroups, such as the (two-way) private and quantum capacity, classical capacity and the entanglement assisted classical capacity. Moreover, these bounds are in the strong converse sense and have the right asymptotics.

The geometry of the underlying space (i.e. of the compact Lie group or the finite group) is crucial for these a priori estimates, in particular for concrete estimates. However, it should be noted that these inequalities are not sharp in general, as they do not depend on the representation at hand. Furthermore, as we will observe later, one may obtain the same quantum channel from transfering Markov kernels from different groups, which can lead to more or less pertinent estimates.

To exemplify the power of these methods, consider collective channels. These are quantum channels in which the same error occurs in different registers. The simplest examples for collective channels are derived from the standard Pauli matrices, and correspond to the same Pauli error occurring on different qubits at the same time. These quantum channels clearly do not have a unique invariant state and thus, it is difficult to quantify how fast they mix using current techniques in the literature. Nevertheless, using group transference, we will be able to estimate decoherence times of these channels \emph{independently} from the number of qubits, and similar estimates hold for their capacities. Moreover, we will derive some estimates that are new even in the classical literature. This will be the case for quantum channels that randomly swap subsystems.

Inspired by the techniques of~\cite{gao2018fisher}, we also show a smorgasbord of functional inequalities for these semigroups, such as spectral gap, hypercontractivity, logarithmic Sobolev inequalities and ultracontractive estimates. The latter three are particularly important to obtain good estimates for small times, before the spectral gap kicks in and closes the deal. We also exemplify why in the non-transference case, the correct notion for applications of the decoherence-time is a \emph{complete} version, i.e. where we consider the semigroup tensorized with the identity on a matrix algebra.

Finally, let us point out that in contrast to \cite{gao2018fisher}, which focuses on Lie groups and Horm\" ander systems (where very good estimates are available from the fundamental work of Rothschild and Stein~ \cite{Rostein}), we are also interested in finite groups and jump processes, as mixed unitary channels with unitaries arising from a representation of a finite group play an important role in quantum information theory.

\paragraph{Layout of the paper:} In \Cref{sect1}, we introduce the framework of quantum Markov semigroups and explain their connection to classical diffusions and jump processes on groups via the so-called \textit{transference technique}. In \Cref{sect2}, we explain the technical tools that allow us to bound various norm estimates of a quantum Markov semigroup in terms of the kernel of an associated classical process: namely, noncommutative $L_p$ spaces and the norm transference technique. \Cref{examplessec} is devoted to some examples of transferred semigroups to illustrate the technique. We then illustrate how to use our techniques to estimate capacities and apply them to other resource theories in \Cref{capa}.
Then, in \Cref{sect3}, we show how contractivity properties (in particular, ultracontractivity) of the quantum Markov semigroups can also provide a way to estimates some capacities in the quantum case, without the use of transference.

%
%

%
%

%
%

%
%
%
%
%
%
%
%
%
%
%
%
%
%
%
%
%
%
%
%
%
%

%
%

%
%
%
%
%
%
%
%
%
%
%
%
%
%
%
%
%
%
%
%
%
%
%
%
%
%

%
%
%
%
%
%
%
%
%
%
%
%
%
%
%
%
%
%
%
%
%
%
%
%
%
%
%

%
%
%
%
%
%
%
%
%

\section{Quantum Markov semigroups via group transference}\label{sect1}

A \emph{quantum Markov semigroup} (QMS) $(T_t)_{t\geq0}$ on $\Bcal(\Hcal)$ is a uniformly continuous semigroup of completely positive maps such that $T_0=\id$ and $T_t(I_\Hcal)=I_\Hcal$ for all $t\geq0$ \cite{alicki2007quantum}. The limit $\cL=\lim_{t\to0}(\id-T_t)/t$ exists and is called the \emph{Lindblad generator}. We insist on our convention that consequently $T_t=e^{-tL}$ \emph{with a minus sign}! This is not the most often used convention in the quantum case but it is more consistent with the classical situations we will consider.\\

The QMS $(T_t)_{t\geq0}$ models the evolution of observables in the \emph{Heisenberg picture}. In the dual \emph{Schr\"odinger picture}, one is instead interested in the evolution of states of density matrices. We recall that a density matrix $\rho\in\Bcal(\Hcal)$ on $\Hcal$ is a trace-one positive semi-definite operator. We denote by $\cD(\cH)$ the set of density matrices on $\Hcal$ and by $\cD_+(\Hcal)$ the set of invertible (full-rank) density matrices.

\noindent We shall mainly (but not only) study self-adjoint (or symmetric) QMS for the Hilbert Schmidt scalar product:
\[\Tr[T_t(x^*)\,y]=\Tr[x^*\,T_t(y)]\qquad\forall x,y\in\Bcal(\Hcal)\,,\forall t\geq0\,.\]
This is equivalent to the fact that $T_t=T_t^\dagger$, where $T^\dagger$ is the adjoint of $T_t$ with respect to the Hilbert-Schmidt inner product. This also implies that the maximally mixed state is an invariant state: $T_t(\frac{I_\Hcal}{d_\Hcal})=\frac{I_\Hcal}{d_\Hcal}$.\\

Form now on, we always assume that \emph{the maximally-mixed state is an invariant state}. Then, the set of fixed-points $N_{fix}$ of the QMS becomes an algebra, as proved by Frigerio in \cite{frigerio1978}. It is defined by:
\[N_{fix}=\{x\in\Bcal(\Hcal)\,;\,T_t(x)=x\quad\forall t\geq0\}\,.\]
Let $E_{fix}$ be the orthogonal projection on $\Nfix$ for the Hilbert-Schmidt scalar product. As proved in the same article, it is a conditional expectation in the sense of operator algebra, that is $ E_{fix}(a\,x\,b)=a\,E_{fix}(x)\,b$ for all $a,b\in\Nfix$ and $x\in\Bcal(\Hcal)$. Remark also that as an orthogonal projection, it is self-adjoint: $E_{fix}=E_{fix}^\dagger$. We denote by $\Dcal(\Nfix)\equiv E[\cD(\cH)]$ the image of the density matrices for this conditional expectation: one has $\rho\in\cD(\Nfix)$ if and only if $\rho=E_{fix}[\rho]$. Selfajoint QMS are in particular \emph{ergodic}, in the sense that:
\begin{equation}\label{eq_def_ergodic}
	T_t(x)\underset{t\to+\infty}{\longrightarrow} E_{fix}(x)\qquad\forall x\in\Bcal(\Hcal)\,.
\end{equation}
This can for instance be seen by considering the spectrum of the QMS, which in this case is real with no peripheral eigenvalue (see \cite{wolftour} for instance).\\

We now proceed to the presentation of the class of QMS we shall study in this article. We start in \Cref{sect11} by introducing the general method based on group transference, which allows to build a QMS from a (classical) symmetric Markov semigroup on a group with right invariant kernel. In \Cref{sect12} and \Cref{sect13} we specialize this discussion to two classes of Markov semigroups: H\"ormander diffusions and jumps. On the other hand, given a QMS, we show in \Cref{sect14} how to find a Markov semigroup for which the QMS can be transferred. This construction can be interpreted as a finer version of the characterization of quantum convolution semigroups of \cite{K-M4}.

\subsection{General construction}\label{sect11}

The starting point is a compact group $G$, either Lie or finite, with Haar measure $\mu_G$ (we shall simply write $\mu$ when there is no ambiguity). Let $(S_t)_{t\geq0}$ be a Markov semigroup on the space $L_\infty(G)$ of bounded, measurable functions on $G$. We will always assume that $(S_t)_{t\geq0}$ admits the following kernel representation:
\begin{equation}\label{eq_conv_kernel}
	S_t(f)(g)=\int_G\,k_t(g,h)\,f(h)\,d\mu_G(h)\,.
\end{equation}
We also assume that $(S_t)_{t\ge 0}$ is right-invariant, which means that the probability to visit $h$ from $g$ only depends on $gh^{-1}$. This implies that $\mu_G$ is an invariant probability distribution and that $k_t(g,h)=k_t(gh^{-1},\e)$, where $\e$ is the neutral element of the group. We keep the same notation $k_t(g)$ for $k_t(g,e)$.

Let $g\mapsto u(g)$ be a projective representation of $G$ on some finite dimensional Hilbert space $\Hcal$. We define the following convolution QMS on $\Bcal(\Hcal)$ which we call a \emph{transferred QMS}:
\begin{equation}\label{eq_groupQMS}
	T_t(x)=\int_G\,k_t(g^{-1})\,u(g)^*\, x\, u(g)\,d\mu_G(g)\,.
\end{equation}
At the root of the transference techniques that we study in this article is a factorization property between $(S_t)_{t\geq0}$ and $(T_t)_{t\geq0}$, involving the standard co-representation
\[\pi\,:\,\Bcal(\Hcal)\to L_\infty \left(G,\Bcal(\Hcal)\right)\,,\,\,\,\pi(x)(g)=u(g)^* x u(g)\,.\]
The following lemma, which is a special case of a result from \cite{gao2018fisher}, is at the heart of the transference method. In particular, it will allow us to obtain contraction properties of a transferred quantum Markov semigroup in terms of the ones of the classical Markov semigroup from which it is transferred.
\begin{lemma}[Lemma 4.6 in \cite{gao2018fisher}]\label{lem_factorization}
	The following relation holds for all $t\geq0$:
	\begin{equation}\label{eq_fact}
		\pi\circ T_t=(S_t\otimes \id_{\Bcal(\Hcal)})\circ\pi\,.
	\end{equation}
\end{lemma}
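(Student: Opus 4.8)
The plan is to verify \eqref{eq_fact} pointwise. Since both sides of the claimed identity are elements of $L_\infty\left(G,\Bcal(\Hcal)\right)$, it suffices to fix an arbitrary $x\in\Bcal(\Hcal)$ and an arbitrary evaluation point $g\in G$, and to check that the two resulting operators in $\Bcal(\Hcal)$ coincide. I would set up the computation by expanding the left-hand side directly from the definitions of $\pi$ and of $T_t$ in \eqref{eq_groupQMS}:
\[
(\pi\circ T_t)(x)(g)=u(g)^*\,T_t(x)\,u(g)=\int_G k_t(h^{-1})\,u(g)^*u(h)^*\,x\,u(h)u(g)\,d\mu_G(h)\,.
\]

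The key algebraic step is to recombine $u(h)u(g)$ into a single group element $u(hg)$. Because $u$ is only a \emph{projective} representation, one has $u(h)u(g)=\lambda(h,g)\,u(hg)$ for a scalar cocycle $\lambda(h,g)$ of modulus one. The conjugation $y\mapsto y^*xy$ is insensitive to such scalar phases, so $u(g)^*u(h)^*\,x\,u(h)u(g)=u(hg)^*\,x\,u(hg)$ and the cocycle cancels. This phase cancellation is the one place where the projective (as opposed to genuine) nature of the representation has to be addressed, and I expect it to be the main point to get right, though it presents no real obstacle.

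Next I would perform the change of variables $s=hg$ in the integral. Since $G$ is compact its Haar measure is bi-invariant (the group is unimodular), so $d\mu_G(h)=d\mu_G(s)$, while $h^{-1}=gs^{-1}$. This rewrites the left-hand side as
\[
(\pi\circ T_t)(x)(g)=\int_G k_t(gs^{-1})\,u(s)^*x\,u(s)\,d\mu_G(s)=\int_G k_t(gs^{-1})\,\pi(x)(s)\,d\mu_G(s)\,.
\]

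Finally I would compare this with the right-hand side. Applying $S_t\otimes\id_{\Bcal(\Hcal)}$ to the $\Bcal(\Hcal)$-valued function $\pi(x)$ amounts to applying $S_t$ in the $G$-variable coefficientwise, so the kernel representation \eqref{eq_conv_kernel} together with the right-invariance identity $k_t(g,s)=k_t(gs^{-1})$ yields exactly $(S_t\otimes\id_{\Bcal(\Hcal)})(\pi(x))(g)=\int_G k_t(gs^{-1})\,\pi(x)(s)\,d\mu_G(s)$, which matches the expression above and establishes \eqref{eq_fact}. The only bookkeeping to watch is the orientation of the conventions — the inverse $g^{-1}$ appearing in the definition of $T_t$ versus the $gh^{-1}$ appearing in the kernel — and the matching of the right-invariance of $k_t$ with the direction of the substitution $s=hg$; once that substitution is chosen correctly, everything lines up automatically.
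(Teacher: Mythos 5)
Your proposal is correct and follows essentially the same computation as the paper's proof: expand $(\pi\circ T_t)(x)(g)$ from the definitions, absorb $u(g)$ into the integrand via $u(h)u(g)=\lambda(h,g)u(hg)$, substitute $s=hg$ using the invariance of the Haar measure to turn $k_t(h^{-1})$ into $k_t(gs^{-1})$, and recognize the result as the convolution kernel representation of $S_t$ acting on $\pi(x)$. If anything, you are slightly more careful than the paper in noting explicitly that the projective cocycle cancels under conjugation.
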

\begin{proof}
	We recall the proof for sake of completness. We have, for any $x\in\cB(\cH)$,
	\begin{align*}
		\pi&\circ T_t(x)(g)\\
		&=u(g)^*\int_Gk_t(h^{-1})\,u(h^{-1})\,x\,u(h)\,d\mu_G(h)\,u(g)\\
		&=\int_G k_t(gg^{-1}h^{-1})\,u((hg)^{-1})\,x\,u(hg)\,d\mu_G(h)\\
		&=\int_Gk_t(gh^{-1})\,u(h^{-1})\,x\,u(h)\,d\mu_G(h)\\
		&=(S_t\otimes \id_{\cB(\cH)})(\pi(x))(g)\,.
		\qedhere \end{align*}
\end{proof}	
From the invariance of $\mu_G$, one can also easily verify that any QMS $(T_t)_{t\ge 0}$ transferred from $(S_t)_{t\ge 0}$ is \textit{doubly stochastic}: $T_{t}^\dagger(d_\cH^{-1} I_{\cH})=d_\cH^{-1}\,I_{\cH}$ for any $t\ge 0$. On the other hand, the reversibility of $(S_t)_{t\ge 0}$ is transferred to the QMS $(T_t)_{t\ge 0}$:
\begin{lemma}\label{lem_QMS_self-adjoint}
	Assume that the Markov semigroup $(S_t)_{t\ge 0}$ is reversible, or equivalently that $k_t(g)=k_t(g^{-1})$ for any $g\in G$. Then any QMS $(T_t)_{t\ge 0}$ transferred from $(S_t)_{t\ge 0}$ is self-adjoint with respect to $d_{\cH}^{-1}\,I_\cH$.
\end{lemma}	

\begin{proof}
	The result follows from the simple calculation:
	\begin{align*}
		\langle x,&\,T_t(y)\rangle_{\operatorname{HS}}\\
		&=\int_G\,k_t(g^{-1})\,\tr(x^*\,u(g)^*\,y\,u(g))\,d\mu_G(g)\\
		&=\int_Gk_t(g^{-1})\,\tr((u(g)\,x\,u(g)^*)^*\,y)\,d\mu_G(g)\\
		&=\int_Gk_t(g^{-1})\,\tr((u(g)^*\,x\,u(g))^*\,y)\,d\mu_G(g)\\
		&=\langle T_t(x),\,y\rangle_{\operatorname{HS}}\,,
	\end{align*}	
	where the third line follows from the identity $k_t(g)=k_t(g^{-1})$ for all $g\in G$.
\end{proof}
Since $d_\cH^{-1}\,I_\cH$ is an invariant state of $(T_t)_{t\ge 0}$, the set $N_{fix}$ of fixed points is an algebra (see \cite{Spohn1977,frigerio1978}, Theorem 6.12 of \cite{wolftour}), and is characterized as the commutant of the projective representation (see Theorem 6.13 of \cite{wolftour}):
\begin{equation}\label{eq_fixedpoints}
	N_{fix} = \{ \sigma \in \cB(\cH)\,|\forall g\in G: \sigma u(g)=u(g)\sigma\} \equiv u(G)' \,.
\end{equation}

By definition, it is also the algebra of fixed points of the $*$-automorphisms $x\mapsto u(g)^*\,x\,u(g)$, $g\in G$. This implies that the following commuting diagram holds:
\begin{center}
	\begin{tikzcd}
		\cB(\cH) \arrow[r, "E_{fix}"] \arrow[d, "\pi"]
		& N_{fix}\arrow[d, "\pi"]  \\
		{L}_\infty(G,\cB(\cH))  \arrow[r, "\EE_{\mu_G}"]
		& \cB(\cH)\,.
	\end{tikzcd}
\end{center}
Here $\Bcal(\Hcal)$ in the lower right corner has to be understood as the subalgebra of constant value functions on $G$ with value in $\Bcal(\Hcal)$.

In practice, we will only consider situations where the classical Markov semigroup $(S_t)_{t\geq0}$ is \emph{primitive}, that is, $\mu_G$ is the unique invariant distribution and furthermore
\[S_t f\underset{t\to+\infty}{\longrightarrow}\, \EE_{\mu_G}[f]=\int_G f(g)\,d\mu_G(g)\,.\]
This does not imply that $(T_t)_{t\geq0}$ is also primitive, however, it will always be ergodic as defined in \Cref{eq_def_ergodic}.\\

We now turn our attention to two special cases of the above construction. In both cases, we explicitly construct the Lindblad generator of the QMS.

\subsection{Diffusion}\label{sect12}

Given a Riemannian manifold $\cM$, a  \textit{H\"{o}rmander system} on $\cM$ is a set of vector fields $V=\{V_1,...,V_m\}$ such that, at each point $p\in \cM$, there exists an integer $K$ such that the iterated commutators $\big[V_{i_1}\,,\,\left[V_{i_2}\,,\,\cdots[V_{i_k}\,,\,\cdot]\,\right] \big]$, $k=1,...,K$, generate the tangent space $T_p\,\cM$. Specializing to the case of a Lie group $G$, a H\"ormander system $V=\{V_1,...,V_m\}$ can more simply be defined as a set of vectors in the Lie algebra, i.e. the tangent space at the neutral element $e$, such that for some $K\in\NN$ the iterated commutators of order at most $K$
span the whole tangent space. For fixed $j\in \{1,...,K\}$, we find a geodesic $g_j(t)$ with $g_j(0)=e$ such that for any $f\in C^1(G)$
\[ V_j(f)(h) = \left.\frac{d}{dt} f(g_j(t)\,h)\, \right|_{t=0} \,  .\]
This leads to the corresponding left invariant classical generator
\begin{align}\label{thegeneratorclassicaldiff1}
	L_V := -\sum_j V_j^2  \, .
\end{align}
The generator $L_V$ generates a Markov semigroup $P_t=e^{-t L_V}$ on ${L}_{\infty}(G)$. Whenever $V$ is a basis for the Lie algebra, then $L_V\equiv -\Delta$ is the negative of the Laplacian and $(P_t)_{t\geq0}$ is called the \emph{heat semigroup}.\\
Since the semigroup commutes with the right action of the group it is implemented by a right-invariant convolution kernel as in \Cref{eq_conv_kernel} and it is reversible with respect to the Haar measure.\\

Next, considering a projective representation $g\mapsto u(g)$ of $G$ on some finite dimensional Hilbert space $\Hcal$, we want to find the Lindblad generator of the QMS defined by \Cref{eq_groupQMS}. We first observe that, for fixed $j\in\{1,...,K\}$ and given the geodesic $g_j$ associated to the vector field $V_j$, $u(g_j(t))$ is a one-parameter family of unitaries and hence
\begin{equation}\label{eq_Hormander_generator}
	\left.\frac{d}{dt} u(g_j(t))\right|_{t=0} = i\,a_j
\end{equation}
where $a_j\in\Bcal(\Hcal)$ is self-adjoint. This implies that, for any $x\in\cB(\cH)$,
\begin{align*}
	(V_j\otimes \id_{\cB(\cH)}&\circ \pi(x)(g) \\
	&=
	\left. \frac{d}{dt}  \pi(x)(g_j(t)g) \right|_{t=0} \\
	&= \left.\frac{d}{dt} u(g)^*\,u(g_j(t))^*\,x\, u(g_j(t))\,u(g) \right|_{t=0} \\
	&= -i \pi([a_j,x])(g) \,.
\end{align*}
Therefore we get
\begin{align*}
( L_V\otimes\id_{\cB(\cH)})&\circ\pi(x)\\
&=
- \sum_j i^2 \,\pi([a_j,[a_j,x]])\\
&= \pi(\sum_j a_j^2\,x+x\, a_j^2-2a_j\,x \,a_j) \, .
\end{align*}
It means that the Lindblad generator of the transferred QMS is given by
\[  \L_V(x)=\sum_j a_j^2\,x+x\, a_j^2-2\,a_j\,x\, a_j\,.\]
Considering the more general case of H\"ormander systems instead of the Laplacian is motivated by simple examples relevant to quantum information. One such example is the Lindblad generator with Kraus operators $\sigma_x$ and $\sigma_z$. These are transferred from a H\"ormander system for the group $SU(2)$. However, the third direction is missing and hence it is not a Laplace-Beltrami Laplacian, which would involve the whole orthogonal basis of the Lie algebra.\\

Conversely, if a Lindblad generator of a QMS on $\Bcal(\Hcal)$ has the form given by the previous equation for some self-adjoint elements $a_j\in\Bcal(\Hcal)$,  then we can consider the anti-self-adjoint operators $i\,a_j$ as tangent elements of the Lie group $\mathcal{U}(\Hcal)$ at the identity $I_\Hcal$. Therefore they generate a H\"ormander system. Furthermore, we can consider the Lie-subgroup $G$ of $\mathcal U(\Hcal)$ with tangent space at identity spanned by this H\"ormander system. The corresponding generator $L_V$ is therefore the generator of a primitive Markov semigroup $(S_t)_{t\geq0}$ on $L_\infty(G)$. \\

We summarize this discussion in the next theorem.
\begin{theorem}\label{theo_QMS_diffusive}
	Let $g\mapsto u(g)$ be a projective representation of a compact Lie group $G$ on some finite dimensional Hilbert space $\Hcal$. Then the Lindblad generator of the transferred QMS $(T_t=e^{-t\Lcal_V})_{t\geq0}$ as defined by \Cref{eq_groupQMS} is given by
	\begin{equation}\label{eq_theo_QMS_diffusive}
		\L_V(\rho)=\sum_j a_j^2\,\rho+\rho\, a_j^2-2\,a_j\,\rho\, a_j\,,
	\end{equation}
	where the $a_j$ are defined by \Cref{eq_Hormander_generator}. Conversely, let $\L$ be the Lindblad generator of a QMS on $\Bcal(\Hcal)$ which takes the form \eqref{eq_theo_QMS_diffusive} for some self-adjoint elements $a_j$ in $\Bcal(\Hcal)$. Then there exists a compact Lie group $G$, a continuous projective representation $u:G\to U(\Hcal)$ and a H\"ormander system $V=\{V_1,...,\,V_m\}$ in the Lie algebra of $G$ such that $\pi:x\mapsto (g\mapsto u(g)^*\,x\,u(g))$ satisfies
	\[ \pi(\L(x))\lel (L_V\otimes \id_{\cB(\cH)})\circ \pi(x)\qquad\forall x\in\Bcal(\Hcal)\,.\]
\end{theorem}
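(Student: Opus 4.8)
The plan is to treat the two implications separately, in each case reducing to the algebraic computation already carried out in the paragraphs preceding the theorem. The forward direction is essentially a differentiation of the transference identity, whereas the converse is a reconstruction: I build a Lie group out of the Lindblad data $\{a_j\}$ and then feed it back through the forward direction to identify the transferred generator.

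For the forward direction I start from the factorization identity $\pi\circ T_t=(S_t\otimes\id_{\Bcal(\Hcal)})\circ\pi$ of \Cref{lem_factorization}, with $S_t=e^{-tL_V}$ the H\"ormander semigroup generated by $L_V=-\sum_j V_j^2$ and $T_t=e^{-t\L_V}$ the transferred QMS. Differentiating at $t=0$ (both semigroups carry the same minus-sign convention, so the signs cancel) gives $\pi\circ\L_V=(L_V\otimes\id_{\Bcal(\Hcal)})\circ\pi$. I then invoke the infinitesimal computation $(V_j\otimes\id_{\Bcal(\Hcal)})\circ\pi(x)=-i\,\pi([a_j,x])$, with $a_j$ defined by \Cref{eq_Hormander_generator}; iterating it once yields $(V_j^2\otimes\id_{\Bcal(\Hcal)})\circ\pi(x)=-\pi([a_j,[a_j,x]])$, and summing over $j$ identifies $(L_V\otimes\id_{\Bcal(\Hcal)})\circ\pi(x)$ with $\pi\big(\sum_j a_j^2 x+x a_j^2-2a_j x a_j\big)$. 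Since $\pi$ is injective — evaluation of $\pi(y)$ at the neutral element returns $y$ — I cancel $\pi$ on both sides and read off the stated formula for $\L_V$.

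For the converse I take the given self-adjoint $a_j\in\Bcal(\Hcal)$ and regard the anti-self-adjoint operators $ia_j$ as elements of $\mathfrak{u}(\Hcal)=\operatorname{Lie}(U(\Hcal))$. Let $\mathfrak g\subseteq\mathfrak{u}(\Hcal)$ be the real Lie subalgebra they generate; as $\mathfrak u(\Hcal)$ is finite dimensional, there is a finite $K$ for which iterated brackets of $\{ia_j\}$ of order at most $K$ already span $\mathfrak g$, so $\{ia_j\}$ is a H\"ormander system for $\mathfrak g$. Taking $G$ to be the connected Lie subgroup of $U(\Hcal)$ with Lie algebra $\mathfrak g$ and $u\colon G\hookrightarrow U(\Hcal)$ the inclusion representation, the one-parameter subgroups $g_j(t)=\exp(t\,ia_j)$ are geodesics realizing \Cref{eq_Hormander_generator} with exactly the operators $a_j$. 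Applying the forward direction to this data shows that the transferred generator equals $\sum_j a_j^2 x+x a_j^2-2a_j x a_j=\L(x)$, and the intertwining $\pi(\L(x))=(L_V\otimes\id_{\Bcal(\Hcal)})\circ\pi(x)$ is precisely the relation produced along the way.

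The step I expect to require the most care is ensuring that $G$ is genuinely \emph{compact}, as demanded by the transference framework, which integrates against a Haar \emph{probability} measure. The connected subgroup generated by $\{ia_j\}$ is only an immersed subgroup of $U(\Hcal)$ and need not be closed: its central (torus) directions may wind irrationally, in which case its closure is strictly larger and $\{ia_j\}$ fails to bracket-generate the Lie algebra of that closure, so the H\"ormander property is lost. When $\mathfrak g$ is semisimple this difficulty evaporates, since a semisimple connected subgroup of a compact group is automatically closed, and the delicate case is the abelian part of $\mathfrak g$. The natural tool is the freedom to replace $a_j$ by $a_j+c_jI_\Hcal$, which leaves $\L$ unchanged because $\L$ depends on the $a_j$ only through $[a_j,\cdot]$, possibly together with enlarging the system. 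One should however be aware of a genuine constraint: a positivity/kernel argument forces every $b_k$ with $\sum_k[b_k,[b_k,\cdot]]=\L$ to lie in the commutant of $\ker\L=\{a_j\}'$, i.e.\ in the algebra $\{a_j\}''$, so the commuting spectra cannot be freely deformed. Consequently the clean converse really holds under a commensurability hypothesis ensuring the $ia_j$ exponentiate to a closed torus; pinning down the exact condition under which a compact H\"ormander realization exists is where the otherwise formal argument must engage with the arithmetic of the spectrum of the $a_j$.
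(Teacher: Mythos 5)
Your argument follows the paper's own route exactly: the forward direction is the differentiation of the transference identity of \Cref{lem_factorization} via the computation $(V_j\otimes\id)\circ\pi(x)=-i\,\pi([a_j,x])$ carried out in \Cref{sect12}, and the converse is the same reconstruction of $G$ as the Lie subgroup of $\mathcal{U}(\Hcal)$ generated by the $ia_j$. The one place you go beyond the paper is your final paragraph, and the concern you raise there is genuine: the paper simply asserts that one "can consider the Lie-subgroup $G$ of $\mathcal{U}(\Hcal)$" without checking that it is closed, hence compact, which is needed for the Haar probability measure and the kernel representation \Cref{eq_conv_kernel} to make sense. Your example of an abelian direction winding irrationally (e.g.\ a single $a$ with eigenvalues whose pairwise differences are incommensurable, which cannot be repaired by the shift $a\mapsto a+cI$) shows the immersed subgroup need not be closed, and passing to its closure destroys the H\"ormander property; so the converse as literally stated does require either an additional hypothesis of the commensurability type you describe, or a reformulation in which $G$ is allowed to be the closure and the transference is rephrased without demanding that $V$ span the full Lie algebra of $G$. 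This gap is present in the paper's own treatment and is not something your proof introduces.
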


\subsection{Jumps}\label{sect13}

Let now $G$ be a finite group and let $(k_t(g,h))_{g,h\in G}$ be a right-invariant density kernel on $G$. Let us denote by $(g_t)_{t\geq0}$ the stochastic process on $G$ induced by this kernel. The corresponding Markov semigroup admits a transition matrix $L$ such that $S_t=\e^{-tL}$ for all $t\ge 0$. In view of \Cref{eq_conv_kernel}, the connection between the Markov kernel and the transition matrix is therefore given by:
\[k_t(g,h)=|G|\,e^{-tL}(g,h)\,,\qquad g,h\in G\,.\]
Writing $c_h=-L(h^{-1},\e)$ for all $h\ne e$, we then have by right invariance that for all $f\in L_\infty(G)$,
\[ L(f)(g) \lel -\sum_{h\in G} c_h [f(hg)-f(g)] \pl \,.\]
Thanks to the right-invariance, we can define a family of independent Poisson processes $\left((\tilde N_t^h)_{t\geq0}\right)_{h\in G}$ with intensity $c_h$ such that for any function $f\in L_\infty(G)$:
\[f(g_{t})-f(g_{t^-})=\sum_{h\in G}\,\left(f(hg_{t^-})-f(g_{t^-})\right)\, \left(\tilde N_t^h-\tilde N_{t^-}^h\right)\,.\]
Define the compensated Poisson process with intensity $c_h$ and jumps $1/\sqrt{c_h}$, $h\ne e$:
\[N_t^h=\frac{1}{\sqrt{c_h}}\,(\tilde N_t^h-c_h t)\,.\]
Writing $df(g_t):=f(g_{t})-f(g_{t^-})$ and $d N_t^h:= N_t^h- N_{t^-}^h$, we can rewrite the previous equation as the stochastic differential equation:
\begin{align}\label{eq_SDE_jumps}
	df(g_t) &= \sum_{h\in G\backslash \{e\}}\,c_h\left(f(hg_{t^-})-f(g_{t^-})\right)\,dt\\
	&+\sum_{h\in G\backslash \{e\}}\,\sqrt{c_h}\left(f(hg_{t^-})-f(g_{t^-})\right)\,d N_t^h\nonumber \\
	&= \sum_{h\in G\backslash \{e\}} (f(hg_{t-})-f(g_{t-}))d\tilde{N}_{t}^h \, .  \nonumber
\end{align}
We are now ready to build a QMS from this Markov chain. Let $g\mapsto u(g)$ be a projective representation of $G$ on some finite dimensional Hilbert space $\Hcal$. We want to find a stochastic differential equation for $\left(u(g_t)\right)_{t\geq0}$. To this end, take $y\in\Bcal(\Hcal)$ and define $f_y:h\in G\mapsto \Tr[y\,u(h)]$. Applying \Cref{eq_SDE_jumps} to $f_y$ we find
\begin{align*}
df_y(g_t)&=\sum_{h\in G\backslash \{e\}}\,c_h\,\Tr\left[y\,\left(u(hg_{t^-})-u(g_{t^-})\right)\right]\,dt\\
&+\sum_{h\in G\backslash \{e\}}\,\sqrt{c_h}\,\Tr\left[y\,\left(u(hg_{t^-})-U(g_{t^-})\right)\right]\,dN_t^h\,.
\end{align*}
From this we deduce
\begin{align}
	du(g_{t}) 
	& =\sum_{h\in G\backslash \{e\}}\,c_h\,\left(u(h)-I_\Hcal\right)\,u(g_{t^-})\,dt\nonumber\\
	&+\sum_{h\in G\backslash \{e\}}\,\sqrt{c_h}\,\left(u(h)-I_\Hcal\right)\,u(g_{t^-})\,dN_t^h\,\nonumber \\
	& = \sum_{h\in G\backslash \{e\}}\,\left(u(h)-I_\Hcal\right)\,u(g_{t^-})\,d\tilde N_t^h\,.\label{eq_QSDE_jumps}
\end{align}
This equation is well-known in the theory of quantum stochastic calculus, see \cite{H-P,Mey2}.

\begin{theorem}\label{theo_QMS_jumps}
	Let $(S_t=e^{-tL})_{t\geq0}$ be a Markov semigroup on a finite group $G$ with right-invariant Markov kernel. Write $c_g=-L(g^{-1},\e)$ for $g\ne e$. Then the generator of the QMS $(T_t)_{t\geq0}$ defined by \Cref{eq_groupQMS} is given for all $x\in\cB(\cH)$ by
	\begin{equation}\label{eq_theo_QMS_jumps}
		\L(x)=\sum_{g\in G\backslash \{e\}}\,c_g\,\big(x-u(g)^*\,x\,u(g)\big)\,,\qquad x\in\Bcal(\Hcal)\,.
	\end{equation}
	Furthermore, ${d_\Hcal}^{-1}I_\Hcal$ is an invariant density matrix and if $(S_t)_{t\geq0}$ is reversible, then so is $(T_t)_{t\geq0}$.\\
	Conversely, let $\L$ be a Lindblad generator on $\Bcal(\Hcal)$ of the form
	\[ \L(x) \lel \sum_{k=1}^m c_k (x -u_k^*\, x\, u_k)\,,\qquad x\in\Bcal(\Hcal), \]
	for some unitary operators $u_k\in\cU(\Hcal)$ and some positive constants $c_k$. Assume that the group $G$ generated by  $u_1,...,u_m$ is finite and define
	\[ L(f)(g) \lel -\sum_{k=1}^m c_k [f(u_kg)-f(g)] \pl .\]
	Then $L$ is the generator of a primitive Markov semigroup $(S_t=e^{-tL})_{t\geq0}$ on the oriented graph
	\[ E \lel \{(g,u_kg)\,|\, k=1,...,m\,;\,g\in G\}\,. \]
	Furthermore, the map $\pi:\Bcal(\Hcal)\to L_\infty(G,\Bcal(\Hcal))$ defined by
	\[ \pi(x)(k) \lel u_k^*\, x\, u_k \]
	extend to a $*$-representation of $\Bcal(\Hcal)$ on $L_\infty(G,\Bcal(\Hcal))$ such that $(L\otimes \id_{\cB(\cH)})\circ\pi(x)=\pi\circ\L(x)$ for all $x\in\Bcal(\Hcal)$.
\end{theorem}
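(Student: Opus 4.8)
The plan is to handle the two directions separately, reducing each to the general transference setup together with a short differentiation (forward) or intertwining computation (converse), the only genuinely delicate point being the primitivity of the reconstructed classical chain.

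For the forward direction, I first specialize the defining formula \eqref{eq_groupQMS} to the finite group $G$, where $\mu_G$ is the normalized counting measure, so that
\[T_t(x)=\frac{1}{|G|}\sum_{g\in G}k_t(g^{-1})\,u(g)^*\,x\,u(g).\]
Substituting the stated relation $k_t(g^{-1})=|G|\,e^{-tL}(g^{-1},\e)$ turns this into $T_t(x)=\sum_{g\in G}e^{-tL}(g^{-1},\e)\,u(g)^*\,x\,u(g)$, which visibly gives $T_0=\id$. Differentiating at $t=0$ and using $\frac{d}{dt}\big|_{t=0}e^{-tL}(g^{-1},\e)=-L(g^{-1},\e)=c_g$ (for every $g$, with the convention $c_\e=-L(\e,\e)$) yields $\mathcal{L}(x)=-\sum_g c_g\,u(g)^*xu(g)$. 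Formula \eqref{eq_theo_QMS_jumps} then follows by adding $0=\big(\sum_g c_g\big)x$, where $\sum_g c_g=0$ expresses, via right-invariance, that the rows of the Markov generator $L$ sum to zero. Invariance of $d_\Hcal^{-1}I_\Hcal$ I would take either from the double stochasticity already noted after \Cref{lem_factorization} or directly from $\mathcal{L}(I_\Hcal)=\sum_g c_g(I_\Hcal-I_\Hcal)=0$; reversibility of $(T_t)_{t\ge 0}$ under the hypothesis $k_t(g)=k_t(g^{-1})$ is immediate from \Cref{lem_QMS_selfadjoint}.

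For the converse, the group $G=\langle u_1,\dots,u_m\rangle$ sits inside $\mathcal{U}(\Hcal)$, so $u$ is simply the inclusion and $\pi(x)(g)=u(g)^*xu(g)$ is defined on all of $G$, extending the values $u_k^*xu_k$ on the generators. That $\pi$ is a unital $*$-representation is immediate from unitarity: $\pi(xy)(g)=u(g)^*xy\,u(g)=\pi(x)(g)\pi(y)(g)$ and $\pi(x^*)(g)=\pi(x)(g)^*$. I would then verify the intertwining by a one-line computation: since $u(u_kg)=u_k\,u(g)$, one has $\pi(x)(u_kg)=u(g)^*u_k^*x\,u_k u(g)$, so that both $(L\otimes\id_{\cB(\cH)})\circ\pi(x)(g)$ and $\pi\circ\mathcal{L}(x)(g)$ reduce to $\sum_k\sigma_k\big(\pi(x)(g)-\pi(x)(u_kg)\big)$.

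It remains to check that $L$ generates a primitive Markov semigroup on the oriented graph $E$. Reading off $L$, the process jumps $g\mapsto u_kg$ at rate $\sigma_k>0$, so $L$ is a legitimate generator (in the $S_t=e^{-tL}$ convention), it is right-invariant by construction, and the uniform measure is stationary since every vertex has the same total in- and out-rate $\sum_k\sigma_k$. The one point that needs care — and where I expect the real, if modest, difficulty to lie — is \emph{irreducibility}: the edges of $E$ only permit multiplication by the generators $u_k$, not by their inverses, so a priori the forward process might fail to reach all of $G$. Here finiteness is essential: in a finite group each $u_k$ has finite order $d_k$, whence $u_k^{-1}=u_k^{d_k-1}$ is itself a positive word in $u_k$, and therefore the sub-semigroup generated by $\{u_1,\dots,u_m\}$ coincides with the group $G$ that they generate. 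Thus every $h\in G$ is reachable from every $g$, the chain is irreducible, and on a finite state space an irreducible continuous-time chain with uniform stationary measure is automatically primitive.
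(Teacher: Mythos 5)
Your proposal is correct, but it reaches formula \eqref{eq_theo_QMS_jumps} by a different and more elementary route than the paper. The paper derives the generator from the stochastic representation $T_t(x)=\Ebb\big[u(g_t)^*\,x\,u(g_t)\big]$ and invokes the It\^o formula for compensated Poisson processes, relying on the SDE machinery set up in \eqref{eq_SDE_jumps}--\eqref{eq_QSDE_jumps}; you instead substitute $k_t(g^{-1})=|G|\,e^{-tL}(g^{-1},\e)$ into \eqref{eq_groupQMS} and differentiate the resulting finite sum at $t=0$, which avoids the probabilistic apparatus entirely and makes the role of $\sum_g c_g=0$ (column sums of $L$ vanishing, which by right-invariance is the same as row sums vanishing) completely transparent. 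For the converse, the paper dismisses everything as ``straightforward from what precedes,'' whereas you supply the two pieces that actually need an argument: the intertwining $(L\otimes\id)\circ\pi=\pi\circ\L$, and, more importantly, the irreducibility of the oriented graph $E$ --- the observation that finiteness of $G$ forces each $u_k$ to have finite order, so that $u_k^{-1}$ is a positive word in $u_k$ and the forward-reachable set from any vertex is all of $G$. That is a genuine gap in the paper's exposition that your write-up closes. One small imprecision: to conclude that $d_\Hcal^{-1}I_\Hcal$ is an \emph{invariant state} you need $\cL^\dagger(I_\Hcal)=0$ rather than $\cL(I_\Hcal)=0$ (the latter only expresses unitality); since $\cL^\dagger(\rho)=\sum_g c_g\,(\rho-u(g)\,\rho\,u(g)^*)$ has the same form, the computation is identical, and your alternative appeal to the double stochasticity noted after \Cref{lem_factorization} is in any case sufficient.
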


\begin{proof}
	We begin by proving \Cref{eq_theo_QMS_jumps}. By definition, we have for all $x\in\Bcal(\Hcal)$
	\[\cL(x)=-\left.\frac{d}{dt} \, T_t \, \right|_{t=0}=-\left.\frac{d}{dt}\Ebb\big[\left(u(g_t)^*\,x\,u(g_t)\right)]\right|_{t=0}\,.\]
	\Cref{eq_theo_QMS_jumps} follows from an application of the \^Ito formula for compensated Poisson processes. The fact that ${d_\Hcal}^{-1}I_\Hcal$ is an invariant density matrix is straighthforward as clearly
	\[\cL^\dagger\left(\frac{I_\Hcal}{d_\Hcal}\right)=0\,,\]
	where $\cL^\dagger$ is the adjoint of $\L$ for the Hilbert-Schmidt scalar product. The case where $(S_t)_{t\geq0}$ is reversible is the content of \Cref{lem_QMS_self-adjoint}. The second part of the proof is straightforward from what preceded.
\end{proof}

\begin{remark} We should warn the reader that a selfadjoint Lindblad generator may sometimes be both transferred from a compact Lie group or a finite group. For example, the partial depolarizing generator $\cL(\rho)=\rho-\frac{X\rho X+Y\rho Y}{2}$ is both discrete and continuous in this sense.
\end{remark}

\subsection{The general situation}\label{sect14}

The two cases explored above are particular instances of convolution QMS as defined by Kossakowski in \cite{Kossakowski1972}. Such QMS were then entirely characterized by Kümmerer and Maassen in \cite{K-M4}, both in terms of their Lindblad generator and as the QMS having an essentially commutative dilation. We recall the first characterization.

\begin{theorem}[Theorem 1.1.1 in \cite{K-M4}]\label{theo_convQMS}
	Let $(T_t)_{t\geq0}$ be a QMS on $\Bcal(\Hcal)$. The two following assertions are equivalent.
	\begin{enumerate}
		\item There exists a weak$^*$-continuous convolution semigroup $(\rho_t)_{t\geq0}$ of probability measures on the group $\operatorname{Aut}\,(\Bcal(\Hcal))$ of automorphisms on $\Bcal(\Hcal)$ such that
		\[T_t(x)=\int_{\operatorname{Aut}\,(\Bcal(\Hcal))}\,\alpha(x)\,d\rho_t(\alpha)\,,\qquad x\in\Bcal(\Hcal)\,.\]
		\item The Lindblad generator $\cL$ of $T$ takes the form
		\begin{align}
			\cL(x)&=-i[h,x]+\sum_{j=1}^n\,\frac12\left(a_j^2\,x+x\,a_j^2-a_j\,x\,a_j\right)\nonumber\\
			&+\sum_{i=1}^m\,\kappa_i\left(x-u_i^*\,x\,u_i\right)\,,\label{eq_theo_convQMS}
		\end{align}
		where $h$ and the $a_j$ are self-adjoint operators in $\Bcal(\Hcal)$, where the $u_i$ are unitary operators on $\Hcal$ and where the $\kappa_i$ are positive real numbers.
	\end{enumerate}
\end{theorem}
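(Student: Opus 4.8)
The plan is to invoke the classical structure theory of convolution semigroups of measures on Lie groups, i.e. \textbf{Hunt's theorem}, and to transport it through the evaluation map. The starting observation is that, since $\Hcal$ is finite dimensional, $\Bcal(\Hcal)\cong M_d(\CC)$ and every $*$-automorphism is inner (Skolem--Noether): $\alpha(x)=u^*xu$ for a unitary $u$ unique up to a phase. Hence $\operatorname{Aut}(\Bcal(\Hcal))\cong PU(d)=U(d)/U(1)$ is a compact connected Lie group, whose Lie algebra I identify with the (traceless) self-adjoint operators $a$ via the one-parameter subgroups $s\mapsto\operatorname{Ad}_{e^{isa}}$. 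For fixed $x$ the matrix coefficient $\phi_x:\alpha\mapsto\alpha(x)$ is a smooth $\Bcal(\Hcal)$-valued function on this group, and assertion (1) reads $T_t(x)=\int\phi_x\,d\rho_t$.

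For \emph{$(1)\Rightarrow(2)$}, I would apply Hunt's theorem to the weak$^*$-continuous convolution semigroup $(\rho_t)$: its generator, on smooth functions, is the sum of a left-invariant first-order (drift) term, a left-invariant second-order (diffusion) term with positive semidefinite coefficient matrix, and a jump term $f\mapsto\int(f(\alpha\,\cdot)-f)\,d\nu$; because the QMS is uniformly continuous, $\cL$ is bounded and the Lévy measure $\nu$ may be taken finite on $\operatorname{Aut}(\Bcal(\Hcal))\setminus\{\id\}$. Evaluating each piece on $\phi_x$ and using $\cL=-\frac{d}{dt}\big|_{t=0}T_t$ translates them one by one: a drift vector field $ia$ produces $\frac{d}{ds}\big|_0 e^{-isa}xe^{isa}=-i[a,x]$, hence the Hamiltonian term $-i[h,x]$; a second-order term $\tfrac12 X_{a_j}^2$ produces the double commutator $\tfrac12[a_j,[a_j,x]]=\tfrac12(a_j^2x+xa_j^2-2a_jxa_j)$, i.e. the diffusive Lindblad terms; and the jump part produces $\int(x-u^*xu)\,d\nu$. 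The final step here is a \textbf{Carathéodory reduction}: the maps $x\mapsto x-u^*xu$ live in the finite-dimensional real vector space of superoperators on $\Bcal(\Hcal)$, so the integral against $\nu$ lies in the convex cone they generate and can therefore be rewritten as a \emph{finite} positive combination $\sum_{i=1}^m\kappa_i(x-u_i^*xu_i)$. Summing the three contributions gives exactly \Cref{eq_theo_convQMS}.

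For the converse \emph{$(2)\Rightarrow(1)$}, I would assemble the convolution semigroup from the three summands, which the paper has already realized individually. The Hamiltonian $-i[h,x]$ is the generator of the group of inner automorphisms $\operatorname{Ad}_{e^{ith}}$, i.e. the (degenerate) convolution semigroup of point masses $\rho_t^{H}=\delta_{\operatorname{Ad}_{e^{ith}}}$. The diffusive summand is, by the converse part of \Cref{theo_QMS_diffusive}, a QMS transferred from a Hörmander diffusion on the Lie subgroup of $\operatorname{Aut}(\Bcal(\Hcal))$ generated by the $e^{isa_j}$, hence of convolution type. The jump summand is, by the converse part of \Cref{theo_QMS_jumps}, transferred from a finite jump process with Lévy measure $\nu=\sum_i\kappa_i\,\delta_{\operatorname{Ad}_{u_i}}$, again of convolution type. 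Their sum is a generator of Hunt type on the single group $\operatorname{Aut}(\Bcal(\Hcal))$, so by the existence direction of Hunt's theorem it generates a weak$^*$-continuous convolution semigroup $(\rho_t)$ of probability measures; setting $\widetilde T_t(x)=\int\alpha(x)\,d\rho_t(\alpha)$ yields a uniformly continuous semigroup whose generator equals $\cL$ by construction, and uniqueness of the semigroup with a given bounded generator forces $\widetilde T_t=T_t$.

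I expect the main obstacle to be the careful invocation and translation of Hunt's theorem: matching the left-invariant drift and diffusion vector fields on $PU(d)$ with self-adjoint generators $h$ and $a_j$ (keeping track of the phase freedom in $PU(d)$ versus $U(d)$ and of the sign convention $T_t=e^{-t\cL}$), and in particular justifying that the Lévy-measure jump part compresses to a finite sum via the finite dimensionality of the superoperator space. The combination step in the converse also relies on the fact that a sum of Hunt-type generators is again of Hunt type, which is where Hunt's theorem — rather than a naive Trotter product of non-commuting flows — does the real work.
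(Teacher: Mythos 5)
The paper itself gives no proof of this statement: it is imported verbatim from Kümmerer--Maassen \cite{K-M4}, so there is no internal argument to compare yours against. That said, your route --- identify $\operatorname{Aut}(\Bcal(\Hcal))\cong PU(d)$ via Skolem--Noether, apply the Hunt/L\'evy--Khinchine structure theory for convolution semigroups on this compact Lie group, and push the generator through the evaluation map $\alpha\mapsto\alpha(x)$ --- is indeed the classical strategy behind the cited result, and your treatment of the converse direction (assemble the three pieces into a single Hunt generator rather than Trotterizing) is sound. Two cosmetic points: your computation produces the double commutator $\tfrac12(a_j^2x+xa_j^2-2a_jxa_j)$, which is the correct diffusive term (the displayed \Cref{eq_theo_convQMS} is missing the factor $2$; compare the bullet list immediately after the theorem and \Cref{eq_theo_QMS_diffusive}); and the Carath\'eodory step is cleaner if you apply it to the compact convex set of random-unitary channels, writing $\int\operatorname{Ad}_{u^*}\,d\nu=\nu(G)\sum_ip_i\operatorname{Ad}_{u_i^*}$, rather than to the cone generated by $\{\,\mathrm{id}-\operatorname{Ad}_{u^*}\,\}$, whose closedness is not immediate since it contains $0$ in the closure of its generating set.

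The genuine gap is the sentence ``because the QMS is uniformly continuous, $\cL$ is bounded and the L\'evy measure $\nu$ may be taken finite.'' This implication is false: assertion (1) only supplies a weak$^*$-continuous convolution semigroup, and such a semigroup on $PU(d)$ may well have an infinite L\'evy measure (e.g.\ a stable-type jump process concentrated near the identity) while still inducing a perfectly bounded generator on the finite-dimensional space of matrix coefficients $\phi_x$ --- boundedness of $\cL$ on $\Bcal(\Hcal)$ tells you nothing about $\nu$ away from the finitely many irreps that occur in the adjoint action. Consequently your Carath\'eodory reduction of the jump term does not apply as stated. The honest argument is exactly the L\'evy--Khinchine limiting step you are skipping: split $\nu$ at a scale $\delta$, handle $\nu|_{\{d(e,h)>\delta\}}$ (which is finite) by Carath\'eodory, and show via a second-order Taylor expansion of $h\mapsto u(h)^*xu(h)$ that the compensated small-jump integral over $\{d(e,h)\le\delta\}$ converges, as $\delta\to0$, to a sum of Hamiltonian and double-commutator (Gaussian) terms --- equivalently, that the closure of the cone generated by the maps $x\mapsto\kappa(x-u^*xu)$ adds precisely the drift and diffusion generators and nothing else. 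Without this step the decomposition \Cref{eq_theo_convQMS} is only established for the subclass of convolution semigroups with finite L\'evy measure, which is strictly smaller than what assertion (1) allows.
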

Remark that in quantum information terms, the theorem above characterizes the generators of quantum dynamical semigroups consisting of mixed unitary channels.
The generators of the form given by \Cref{eq_theo_convQMS} are thus the sum of three parts:
\begin{itemize}
	\item The first part corresponds to a unitary evolution with generator given by $\Bcal(\Hcal)\ni x\mapsto i[h,x]$ where $h$ is self-adjoint;
	\item A diffusive part, given by
	\[\Bcal(\Hcal)\ni x\mapsto \sum_{j=1}^n\,\frac12\left(a_j^2\,x+x\,a_j^2-2a_j\,x\,a_j\right)\,,\]
	where the $a_j$ are self-adjoint operators. Any such family $\{a_j\}$ is a H\"ormander system for the sub-Lie algebra that they generate, as elements of the unitary group $ \mathcal{U}(\Hcal)$ of $\Hcal$. Consequently the result of \Cref{sect12} applies.
	\item A jump part, given by
	\[\Bcal(\Hcal)\ni x\mapsto \sum_{i=1}^m\,\kappa_i\left(x-u_i^*\,x\,u_i\right)\,,\]
	where the $u_i$ are unitary operators on $\Hcal$. Compared to previously, this class is larger than the one presented in \Cref{sect12}. Indeed, the family $\{u_i\}$ spans a subgroup of the unitary group $ \mathcal{U}(\Hcal)$, however, in general, it will not be a finite group.
\end{itemize}

\begin{remark}
	\begin{enumerate}\
		\item Starting with a Lindblad generator, there may be an ambiguity on the choice of the underlying group and classical Markov semigroup leading to it. Indeed, in the jump scenario when the QMS is self-adjoint, it is always possible to write the Lindblad generator as in the diffusive case. Then either the group is large, i.e. the commutator is $\Cbb\,I_{\cH}$, and we can treat it as an H\"ormander system, or the group is small (for us finite) and we can treat it as a Markov semigroup with jumps on the Cayley graph of the group. In both cases, estimates on the decoherence time of the corresponding QMS can be found. We shall illustrate this fact in \Cref{examplessec}.
		\item It should be clear that the construction of QMS in \Cref{sect11} is in essence different from the one of convolution QMS. In the former, we can start from \emph{any} compact group with a Markov kernel. Then we shall see that from the $*$-corepresentation $\pi$ and \Cref{lem_factorization}, we can transfer certain properties of this kernel to the induced QMS. The existence of this $*$-corepresentation, which was absent in \cite{K-M4} and only discovered in \cite{gao2018fisher}, stands at the root of this transference principle.
	\end{enumerate}
\end{remark}

\subsection{Collective decoherence}

Motivated by applications in quantum information theory, we shall study a particular class of transferred QMS. These QMS are particularly relevant in the study of fault-tolerant passive error correction as they display non-trivial decoherence-free subsystems, that is, subsystems preserved from dissipative effects. The interesting QMS are therefore non-primitive (with non-trivial fixed-point algebras). Let $G$ be a group and $u:G\to \mathcal{B}(\mathcal{H})$ a projective representation of $G$ on some finite dimensional Hilbert space $\Hcal$. For all $n\geq 1$, this representation induces a new representation on $\Hcal^{\otimes n}$ given by:
\[g\mapsto u(g)^{\otimes n}\,.\]
Let $(S_t)_{t\geq0}$, $(T_t)_{t\geq0}$ be defined as in \Cref{eq_conv_kernel,eq_groupQMS} using the representation $g\mapsto u(g)$. We write $(T^{(n)}_t)_{t\geq0}$ the corresponding QMS on $\Hcal^{\otimes n}$ for the representation $u^{\otimes n}$ and $\L_n$ its generator.

\paragraph{Diffusive case:} In the diffusive case presented in \Cref{sect12}, the generator $\cL$ of $(\cP_t)_{t\geq0}$ has the following form:
\begin{align}\label{diffusivecollectivedeco}
	\cL(x)=\,\sum_k\,a_k^2\,x+x\,a_k^2-2a_k\,x\,a_k\,,
\end{align}
where the $a_k$'s are selfajdoint operators on $\Hcal$. Then the generator $\cL_n$ takes the form
\begin{align*}
 \cL_n(x) = \sum_k \,a_k(n)^2\,x+x\, a_k(n)^2 -2\,a_k(n)\,x\, a_k(n)\,,
 \end{align*}
 with
 \begin{align*}
a_k(n) = \sum_{j=1}^n I_\Hcal^{\otimes j-1}\otimes  a_k\otimes I_{\cH}^{n-j} \,,
 \end{align*}
where in the $j$th term of the above sum, $a_k$ acts on the $j$th copy of $\Hcal$.

More generally, if $\Lcal$ is the generator of a QMS on $\Bcal(\Hcal^{\otimes n})$ of the above form, then any $i\,a_k$ belongs to the tangent space at identity of some unknown compact Lie group, hence the family satisfies the transference principle and the different results presented in this article can be applied. As a consequence, we obtain bounds independent of the number $n$ of qudits.
\paragraph{Jump case:} In the jump case presented in \Cref{sect13}, the generator $\cL$ of $(T_t)_{t\geq0}$ has the following form:
\begin{align}\label{jumpcollectivedeco}
	\cL(x)=\sum_{k=1}^m \,c_k\left(x-u_k^*\,x\,u_k\right)\,,
\end{align}
where the $u_k$'s are unitary operators on $\Hcal$. Then the generator $\cL_n$ takes the form
\[\cL_n(x)=\sum_{k=1}^m c_k\left(x-v_k^*\,x\,v_k\right)\,,~~~\text{where }~~v_k=u_k^{\otimes n}\,.\]
If the unitary operators $u_k$ generate a finite group $G$ then thanks to \Cref{theo_QMS_jumps} we can find a Markov semigroup on $G$ and all the estimates we find on this semigroup can be transferred to $(T_t^{(n)})_{t\ge 0}$ for all $n$.
\begin{remark}
	Unfortunately, it is not the decoherence time or any other interesting quantity for $\cL$ itself which transfers to all the $\cL_n$, but the underlying group which gives the corresponding estimates. Thus, the choice of the group and the classical Markov semigroup on it are particularly important.
\end{remark}
\section{Noncommutative $L_p$ spaces and norm transference}\label{sect2}

In this section we introduce the main conceptual ideas of this article that we called \emph{group transference techniques}. These ideas and subsequent mathematical results are mostly contained in \cite{gao2018fisher}. All the applications we study in this article are concerned with the properties of certain (non-commutative) functional $L_p$ spaces. When studying primitive QMS, only the usual (normalized Schatten) $L_p$ spaces are required. However, in our case we are interested in \emph{non-primitive} QMS with non-trivial fixed-point algebra. As first illustrated in \cite{BR18}, the relevant $L_p$ spaces in this case are the \emph{conditioned} or \emph{amalgamated} $L_p$ spaces. Furthermore, the transference techniques require to look at the amplification of the classical semigroup $(S_t)_{t\geq0}$ to the algebra $L_\infty(G)\otimes\Bcal(\Hcal)$ (see \Cref{lem_factorization}). This in turn makes it necessary to consider completely bounded version of the $L_p$ spaces. All these notions are introduced in \Cref{Lpnorm}. \Cref{normtransference} is dedicated to the presentation of the transference techniques. We present them in a general framework, as we believe they can also be useful in other settings (see \cite{gao2018fisher} for an other example of application in quantum information theory). Finally we specialize to QMS in \Cref{sect23}, where these transference techniques are applied to transfer estimates on the classical Markov kernel to the QMS.

\subsection{$L_p$ norms and entropies}\label{Lpnorm}
We are now going to introduce several $L_p$ norms and entropies related to von Neumann algebras. Although this may not be clear at first sight, it turns out that many of them are just the sandwiched R\'enyi entropies~\cite{muller2013quantum,strongconvrenyi} in disguise, as we will clarify.
In the following $M$ is a finite von Neumann algebra and $\tau:M\to \cz$ a normalized normal, faithful, tracial state (i.e. $\tau(I_{\cH})=1$).
Let us recall the definition of the noncommutative $L_p$ spaces via
\begin{align}\label{equ:noncommutativenorm}
	\\|x\|_{L_p(\tau)} := [\tau(|x|^p)]^{1/p} \pl.  
\end{align}
Then $L_p(M,\tau)\equiv L_p(M)$ is the completion of $M$ with respect to this norm. Indeed, for $1\le p\le \infty$ the space is a Banach space such that $L_p(M,\tau)^*=L_{\hat{p}}(M,\tau)$ holds for $\frac1p+\frac{1}{\hat p}=1$ and $1\le p<\infty$. In this article we will focus on three types of von Neumann algebras:
\begin{itemize}
	\item Our main example is $M=\Mz_m$, the space of $m\times m$ matrices over the field of complex numbers, and $\tau(x)\equiv\tau_m(x)\equiv\frac{1}{m}\tr(x)$. To keep the notations at a more abstract level, we shall most of the time refer to a finite dimensional Hilbert space $\Hcal$ and to the algebra of (bounded) linear operators $\Bcal(\Hcal)$ and we denote by $L_p(\Bcal(\Hcal))$ the corresponding non-commutative $L_p$ space.
	\item If $(E,\Fcal,\mu)$ is a probability space, where $\Fcal$ is a $\sigma$-algebra on the set $E$ and $\mu$ a probability distribution, then the set of bounded complex-valued function  $M=L_\infty(\mu)$ is a von-Neumann algebra, $\tau:f\mapsto\mathbb{E}_\mu(f)$ is a normal, faithful and tracial state and the corresponding $L_p$ spaces are the usual $L_p(\mu)$.
	\item The last key example in the transference principle is the algebra of bounded $\Mz_m$-valued function on a probability space $(E,\Fcal,\mu)$, $M=L_\infty(E,\Mz_m)$, with trace given by
	\[\tau:f\mapsto \int_E\,\tau_m(f(x))\,d\mu(x)\,.\]
\end{itemize}
For a subalgebra $N\subset M$ we define the conditioned $L_p^q(N\,\ssubset \,M)$ norm \cite{Pis93,JP10} (see also \cite{gao2017strong,BR18}) via
\begin{align}\label{Lpnorms}
&	\|x\|_{L_p^q(N\,\,\ssubset\,\, M)}\\
	& := \begin{cases}
		\inf_{x=ayb} \|a\|_{L_{2r}(\tau)}\,\|y\|_{L_q(\tau)}\,\|b\|_{L_{2r}(\tau)} & p\le q \pl,\\
		\sup_{\|a\|_{L_{2r}(\tau)}\,\|b\|_{L_{2r}(\tau)} \le 1}\, \|a\,x\,b\,\|_{L_q(\tau)} & p\gl q \pl .
	\end{cases}\nonumber
\end{align}
Here $\frac1r=\left|\frac1q-\frac1p\right|$ and $a,b$ are elements in $L_{2r}(N)$. For $N=M$, we just find another description of $L_p(M)$, i.e.~$L_p^q(N\,\ssubset \,M)=L_p(M)$. Note that for a selfadjoint element $x$, we may assume $a=a^*$ in (\ref{Lpnorms}). By H\"{o}lder's inequality, $L_p^q(N\,\subset\, M)\subset L_p(M)$. In the particular case when $M=\Mz_k(N)$ is the algebra of $k$ by $k$ matrices with coefficients in $N$, the spaces $L_p^q(N\subset M)\equiv S_p^k(L_q(N))$ coincide with Pisier's vector-valued $L_p$ spaces \cite{Pis93}. We will also be concerned with norms of linear maps between these $L_p$ spaces. A map $T:L_p(M)\to L_q(M)$ is called a $N$-bimodule map if, for any $a,b\in N$ and any $x\in M$:
\[ T(a\,x\,b) \lel a\,T(x)\,b \pl. \]
For instance, when $N=\Nfix$ is the fixed point subalgebra of a selfadjoint quantum Markov semigroup $(T_t)_{t\ge 0}$ acting on $M$, the maps $T_t$ are $N$-bimodule maps with respect to $N$. For $N$-bimodule maps  and $p\le q$, the following was proved in Lemma 3.12 of \cite{gao2018fisher}, generalizing an earlier statement for vector valued $L_p$ norms (see Lemma 1.7 of \cite{Pis93}): for any $s\ge 1$:
\begin{align} \label{bimod}
	\|T:L_{\infty}^p(N\,\ssubset\, M)&\to L_{\infty}^q(N\,\ssubset\, M) \|\\
&	\lel \|T:L_{s}^p(N\,\ssubset \,M)\to L_{s}^q(N\,\ssubset \,M) \|  \pl \,.\nonumber
\end{align}
We refer to \cite{Pis93,JP10} for motivation and further properties. We will also use the completely bounded version of these norms:
\begin{align}
& \|T:L_s^p(N\,\ssubset\, M) \to L_s^q(N\,\ssubset\, M)\|_{\cb}\nonumber\\
&\lel \sup_{m} \|\id_m\ten T:L_s^p(\Mz_m \ten N\,\ssubset \,\Mz_m\ten M)\nonumber\\
&~~~~~~~~~~~~~~~~~~~~~~~~~~\to L_s^q(\Mz_m \ten N\,\ssubset \,\Mz_m\ten M) \| \pl .\nonumber
\end{align}
which also does not depend on $s$ for $N$-bimodule maps, as we discuss in more detail in~\Cref{sect3}.

Noncommutative $L_p$ norms are closely related to the sandwiched R\'{e}nyi divergences introduced in~\cite{muller2013quantum,strongconvrenyi}:
For $p\in(1,+\infty)$, these are defined for two quantum states $\sigma,\rho\in\mathcal{D}\left(\mathcal{H}\right)$ as:
\begin{align}
&	D_p\left(\rho\|\sigma\right)=\nonumber\\
	&
	\begin{cases}
		\frac{\log\left(\tr\left[\left(\sigma^{\frac{1-p}{2p}}\,\rho\,\sigma^{\frac{1-p}{2p}}\right)^p\right]\right)}{p-1} & 
		\text{ker}\left(\sigma\right)\subseteq\text{ker}\left(\rho\right)\text{ or }p\in \left( 0,1\right)
		\\ +\infty, & \mbox{otherwise,}
	\end{cases}
\end{align}
and, for $p=\infty$, we set
\begin{equation}
	D_\infty\left(\rho\|\sigma\right)=
	\begin{cases}
		\log\left(\|\sigma^{-\frac{1}{2}}\rho\sigma^{-\frac{1}{2}}\|_\infty\right) & 
		\text{ker}\left(\sigma\right)\subseteq\text{ker}\left(\rho\right)
		\\ +\infty, & \mbox{otherwise.}
	\end{cases}
\end{equation}
One can then show that setting $x=\sigma^{-\frac{1}{2}}\rho\sigma^{-\frac{1}{2}}$ and $\tau(x)=\tr\left(\sigma x\right)$ in~\ref{equ:noncommutativenorm}, we have:
\begin{align*}
	D_p\left(\rho\|\sigma\right)=\frac{p}{p-1}\log(\|x\|_{L_p(\tau)}).
\end{align*}
Moreover, the sandwiched R\' enyi conditional entropy $H_p(A|B)$ introduced in~\cite{muller2013quantum,strongconvrenyi} can be seen as a special case of the conditional $L_p$ norms defined in~\ref{Lpnorms}. They are defined for a bipartite state $\rho\in\mathcal{D}\left(\mathcal{H}_A\otimes\mathcal{H}_B \right)$ as:
\begin{align*}
	H_p(A|B)_\rho=-\min\limits_{\sigma_B\in\mathcal{D}\left(\mathcal{H}_B \right)}D_{p}\left(\rho_{AB}||\operatorname{id}_{\cB(\cH_A)}\otimes \sigma_B\right).
\end{align*}
It is then easy to see that
\begin{align*}
	-p'\log\left(\|\rho\|_{L_1^p(\mathcal{B}(\mathcal{H}_B)\subset\mathcal{B}(\mathcal{H}_A\otimes \mathcal{H}_B)}\right)=H_p(A|B)_{\rho}.
\end{align*}
for $p'$ the H\"older conjugate. Thus, we see that the conditional norms can also be interpreted as a generalization of conditional entropies in which we condition w.r.t. to states on a subalgebra,  which naturally includes the case of a subsystem. For two algebras $N\subset M$ we will denote this conditional R\' enyi entropy by $H_p(M|N)_{\rho}$.
Moreover, a bound on the norms defined on~\eqref{bimod} can be interpreted as a strong data processing inequality for conditional R\' enyi entropies. Indeed, given that
\begin{align*}
	&\|T:L_{\infty}^p(N\,\ssubset\, M)\to L_{\infty}^q(N\,\ssubset\, M) \|\\
	&~~~~~~~~~~~~~~~~~~~~~~=\|T:L_{1}^p(N\,\ssubset\, M)\to L_{1}^q(N\,\ssubset\, M) \|\leq c\,,
\end{align*}
we have
\begin{align*}
	\log\left(\|T(\rho)\|_{L_1^q(N\,\ssubset\, M)}\right)\leq \log(c)+ \log\|\rho\|_{L_1^p(N\,\ssubset\, M)}
\end{align*}
holds for all states $\rho$. Normalizing the expressions appropriately, we see that this implies that:
\begin{align*}
	H_q(M|N)_{T(\rho)}\geq \frac{q'}{p'}H_p(M|N)_{\rho}-\log(c),
\end{align*}
which is stronger than a data-processing inequality for $c=1$ and $q\geq p$.
\subsection{Norm transference}\label{normtransference}

In \cite{gao2017strong}, the authors proved the following factorization property: given the representation $\alpha:\,g\mapsto \alpha_g(.)=u(g)\,(.)\,u(g)^*$ of a finite or compact Lie group $G$ on the algebra $\mathcal{B}(\mathcal{H})$ of linear operators on a finite dimensional Hilbert space $\cH$, and for any $t\ge 0$, define the co-representation $\pi:\,\cB(\cH)\to L_\infty(G,\,\cB(\cH))$, $x\mapsto (g\mapsto \alpha_{g^{-1}}(x))$. Then we may transfer properties of completely positive maps on $L_{\infty}(G)$ to completely positive maps on ${\cB}(\cH)$. Indeed, for every positive function $k$ on $G$, we define
\begin{equation}\label{eq_Phi}
	\Phi_{k}(\rho) := \int k(g) \,u(g)^*\rho \,u(g) \,d\mu(g) \, .
\end{equation}
Here $\mu$ is the Haar measure. Therefore, the fixed-point algebra of the map $\Phi_k$ is given by the commutant of $u(G)$:
\[ N_{fix} = \{ \sigma \in \cB(\cH)\,| \,\sigma u(g)=u(g)\sigma\} = u(G)' \,.\]
To see this, note for any element $X\in u(G)'$ we have $u(g)^*X \,u(g)=X$, which shows $u(G)'\subset N_{fix}$. For the other inclusion, note that for any $v\in G$ and $X\in N_{fix}$:
\begin{align*}
	u(v)^*Xu(v)&=u(v)^*\Phi_{k}(X)u(v) \\
	&= \int k(g) \,u(gv)^*\rho \,u(gv) \,d\mu(g) \\
	&=\int k(g) \,u(g)^*\rho \,u(g) \,d\mu(g)\,,
\end{align*}
where in the last step we used the fact that the kernel is right-invariant. Thus, $X\in u(G)' $
Note that the following natural bimodule property holds
\[ \Phi_k(\sigma_1 \rho \sigma_2) = \sigma_1 \Phi_k(\rho) \sigma_2 \, .\]
for $\sigma_1,\sigma_2\in u(G)'$.
We then have
\begin{align*}
	\pi\circ \Phi_k=(\varphi_k\otimes\operatorname{id})\circ\pi\,,
\end{align*}
where $\varphi_k:\,L_\infty(G)\to L_\infty(G)$ is defined by
\begin{align*}
	\varphi_k(f)(g)=\int k(gh^{-1})\,f(h)\,d\mu(h)\,.
\end{align*}
We will denote by
\[ E_{fix}(\rho) \equiv E_{\Nfix}(\rho)= \int u(g)^*\,\rho \,u(g)\, d\mu(g) \]
the conditional expectation onto the fixed-point algebra. The following commuting square, already mentioned in \Cref{eq_fact} in the specific case of a Markov semigroup transference, was recently found in \cite{gao2018fisher}:
\[\begin{tikzcd}
	\cB(\cH) & L_\infty(G,\,\cB(\cH)) \arrow{l}{E_{\cB(\cH)}} \\
	\Nfix \arrow[swap]{u}{\pi} &\cB(\cH)\arrow{l}{E_{fix}} \arrow[swap]{u}{\pi}
\end{tikzcd}
\]
where $E_{\cB(\cH)}$ simply denotes the usual expectation over $G$, that is, for any $f\in L_\infty(G,\cB(\cH))$,
\begin{align*}
	E_{\cB(\cH)}(f)=\int_G f(g)\,d\mu(g).
\end{align*}	
This in particular implies that the natural inclusion \[L_p^q(\Nfix\subset \cB(\cH)) \subset L_p^q(\mathcal{B}(\cH)\subset \,L_\infty(G,\mathcal{B}(\cH))) \] is completely isometric (see \cite{JP10} for more details).\\
The next theorem constitutes the basis of the estimates that we provide in \Cref{capa}. We provide the proof in full generality in \Cref{normestimates} for sake of clarity and only present a simplified version for some cases.

\begin{theorem}\label{compar} Let $\sigma \in \cD(N_{fix})$ and $k:\,G\to\RR^+$ a measurable function such that $\int k \,d\mu=1$. Then for any $\rho\in\cD(\cH) $ and $\sigma\in\cD(N_{fix})$, and any $p\in( 1,\infty)$:
	\begin{align}
&		D_p(E_{fix}(\rho)\|\sigma)\label{ineqfundam}\\
		&~~~~~\le D_p(\Phi_k(\rho)\|\sigma)
		\leq D_p(E_{fix}(\rho)\|\sigma) + D_p(k\mu\|\mu)\,,\nonumber
	\end{align}
	where $D_p(k\mu\|\mu):=\frac{1}{p-1}\log\int\,   k^pd\mu $.	For $p=1$, this translates into
	\begin{align}\label{transfer}
	&	D(E_{fix}(\rho)\|\sigma)\\
		&~~~~~\le  D(\Phi_k(\rho)||\sigma) \leq D(E_{fix}(\rho)\|\sigma) + \int k \log k \,d\mu \, ,\nonumber
	\end{align}
	and for $p=\infty$:
	\begin{align}\label{transferinfto}
		&D_\infty(E_{fix}(\rho)\|\sigma)\\
		&~~~~~\le  D_\infty(\Phi_k(\rho)||\sigma) \leq D_\infty(E_{fix}(\rho)\|\sigma) +
		\log \|k\|_{\infty} \, .\nonumber
	\end{align}
\end{theorem}
\begin{proof}
	Proof of~\eqref{transfer} for the case of a finite group and $p=1$ is simple and gives some intuition for the idea of the proof. The first observation is that in the case of a finite group $G$, we can always dilate the channel $\Phi_k$ as
	\begin{align}\label{equ:dilationsemigroup}
		\Phi_k(\rho)=\text{Tr}_{\ell_2(G)}\left( U_G\left(\rho\otimes \tau_k \right)U_G^\dagger\right),
	\end{align}
	where
	\begin{align*}
		U_G=\sum\limits_{g\in G}U_g\otimes \ket{g}\bra{g}
	\end{align*}
	and $\tau_k\in\mathcal{D}\left(\ell_2(G)\right)$ defined as
	\begin{align*}
		\tau_k=|G|\sum\limits_{g\in G}k(g)\ket{g}\bra{g}.
	\end{align*}
	Interchanging the roles of $\rho$ and $\tau_k$, we find a channel $\Psi:L_1(G)\to \cB(\cH)^*$, $\Psi(f)=\sum_{g} f(g) u_g^*\rho u_g$. Therefore the data processing inequality shows that
	\begin{align*}
	 D(T_k(\rho)\|E_{fix}(\rho))&\kl D\left(\frac{k}{|G|}\|\frac{1}{|G|}\right) \\
	 &\lel \frac{1}{|G|} \sum_{g} k(g)\ln k(g) \pl .
	 \end{align*}
	We apply this to $k(g)=k_t(g^{-1})$, the density of this instance of the semigroup. We combine this with the following decomposition which holds for the relative entropy and a conditional expectation (see \Cref{capa}):
	\begin{align}\label{equ:pythagorean}
		D(T_t(\rho)\|\sigma)=D(E_{fix}(\rho)\|\sigma)+D(T_t(\rho)\|E_{fix}(\rho))\pl,
	\end{align}
	and deduce $D(T_t(\rho)\|\sigma)\kl D(E_{fix}(\rho)\|\sigma)+D(k_t\mu\|\mu)$, where $\mu(g)=\frac{1}{|G|}$ is the  Haar measure. 
	For $p>1$ we lack the Pythagorean identity in Eq.~\eqref{equ:pythagorean} and, thus, must use interpolation to prove the inequality. The proof for finite groups is a simplification of Theorem 2.5 from \cite{gao2017strong}. Let $\eta$ be defined such that $\eta^* \eta = \rho$. Here we use the fact that for any $\rho$,
	\begin{equation*}
		\|\Psi(f)\|_p = \bigg \| \frac{1}{|G|} \sum_g \sqrt{f(g)} \ket{g} \otimes \eta U_g \bigg \|_{2p}^2,
	\end{equation*}
	where $\Psi(1) = E_{fix}$. Let $\hat{p}$ be such that $1 = 1/p + 1/\hat{p}$. Then
	\begin{equation*}
		\| \sigma^{-1/2 \hat{p} } \Psi_k(\rho) \sigma^{-1/2 \hat{p} } \|_p = \| \Psi_k(\tilde{\rho}) \|_p
	\end{equation*}
	for $\tilde{\rho} = \sigma^{-1/2 \hat{p}} \rho \sigma^{-1/2 \hat{p}}$, because $\sigma^{-1/2 \hat{p}}$ is an element of the fixed point algebra and therefore commutes with conjugation by $U_g$ for all $g$. While $\tilde{\rho}$ is not assured to be normalized, this will not pose a problem for our calculation. For $\|\Psi_1(\tilde{\rho})\|_p < 1$, let $\tilde{\eta}$ be defined such that $\tilde{\rho} = \tilde{\eta}^* \tilde{\eta}$. There exists an analytic function $\xi : S = \{z | 0 \leq Re(z) \leq 1\} \rightarrow \cB(\cH)$ such that
	\begin{align*}
		&\bigg \| \frac{1}{\sqrt{|G|}} \sum_g \ket{g} \otimes \xi(it) u_g  \bigg \|_\infty < 1,\\
		&		\bigg \| \frac{1}{\sqrt{|G|}} \sum_g \ket{g} \otimes \xi(1+it) u_g \bigg \|_2 < 1 \text{ for all } t \in \Rbb,
	\end{align*}
	and $\xi(1/p) = \tilde{\eta}$. Assume that $\|f\|_p < 1$ and $f \geq 0$, and let
	\[V_{G, f}(z) = \frac{1}{\sqrt{|G|}} \sum_g f^{p z/2}(g) \ket{g} \otimes \xi(z) u_g . \]
	We then have that
	\begin{equation*}
		\| V_{G, f}(it) \|_\infty = \bigg \| \frac{1}{\sqrt{|G|}} \sum_g f^{i p t / 2}(g) \ket{g} \otimes \xi(it) u_g \bigg \| < 1 ,
	\end{equation*}
	and
	\begin{align*}
		\begin{split}
			& \| V_{G, f}(1 + it) \|_2 \\
			&= \bigg \| \frac{1}{\sqrt{|G|}} \sum_g f^{p (1 + it)/2}(g) \ket{g} \otimes \xi(1 + it) u_g \bigg \| \\
			&=  \bigg \| \frac{1}{\sqrt{|G|}} \sum_g f^{p/2}(g) \ket{g} \otimes \xi(1 + it) \bigg \|_2\\
			&= \bigg \| \frac{1}{\sqrt{|G|}} \sum_g f^{p/2}(g) \ket{g} \bigg \|_2 \bigg \|  \xi(1 + it) \bigg \|_2 \\
			&=  \bigg \| \frac{1}{\sqrt{|G|}} \sum_g f^{p/2}(g) \ket{g} \bigg \|_2 \bigg \| \frac{1}{\sqrt{|G|}} \sum_g \xi(1 + it) u_g \bigg \|_2\\
			& < 1 .
		\end{split}
	\end{align*}
	By Stein's interpolation theorem,
	$ \| V_{G, f}(1/p) \| _{2p} \leq 1$, completing the proof by homogeneity.
\end{proof}

It is straightforward to extend the $p=1$ proof to the case of Lie groups, but $p>1$ requires a more technical use of interpolation theory. We refer to~\Cref{normestimates} for the details. We note that in the context of a QMS with kernel $k\equiv k_t$, a variation of the inequality above is straightforward to prove given a bound on
\begin{align*}
	t(\epsilon)=\inf\{t|\,\|k_t-1\|_{\infty}\leq\epsilon\}.
\end{align*}
Indeed, it is easy to see that it follows from the definition of $t(\epsilon)$ that:
\begin{align*}
	k_t(g)U_gXU_g^\dagger\leq (1+\epsilon)U_gXU_g^\dagger
\end{align*}
holds almost everywhere. By integrating the inequality above and conjugating with $\sigma^{-1/2\hat{p}}$ we have
\begin{align*}
	\sigma^{-1/2\hat{p}}T_t(\rho)\sigma^{-1/2\hat{p}}\leq (1+\epsilon)\sigma^{-1/2\hat{p}}E_{fix}(\rho)\sigma^{-1/2\hat{p}}.
\end{align*}
It is then easy to see that this gives for any $p>1$
\begin{align*}
	D_p(T_t(\rho)||\sigma)\leq \log(1+\epsilon)+D_p(E_{fix}(\rho)||\sigma)
\end{align*}
and the statement for $p=1$ immediately follows by taking the appropriate limit. Thus, we see that a variation of the transference statement is easy to obtain if we start from a $t(\epsilon)$ bound and the bound above is only advantageous if we can control $\|k\|_{L_p(\mu_G)}$ or the relative entropy of $k$. This is for instance the case when one has access to the so-called \textit{modified logarithmic Sobolev constant} for the underlying group.
\subsection{Application to quantum Markov semigroups}\label{sect23}

In this subsection, we show how the machinery developed in \Cref{normtransference} provides estimates on the norms and entropies at the output of a quantum Markov convolution semigroup of \Cref{sect1} in terms of the kernel of their associated classical semigroup. We start by recalling the notations of \Cref{sect1}: $(S_t)_{t\geq0}$, $S_t=\e^{-tL}$ is a Markov semigroup on the compact group $G$ (either Lie or finite), with right-invariant kernel $(k_t)_{t\geq0}$. The QMS $(T_t)_{t\geq0}$, $T_t=\e^{-t\cL}$ is the transferred QMS on $\Bcal(\Hcal)$ defined by \Cref{eq_groupQMS} through the projective representation $g\mapsto u(g)$ of $G$ on the finite dimensional Hilbert space $\Hcal$.\\

The next theorem regroups all the transference techniques which will be frequently used in this paper (see \cite{gao2018fisher}). We recall that the spectral gap of the symmetric Lindblad generator $\L$, denoted by $\lambda_{\min}(\L)$ (resp. of $L$, denoted by $\lambda_{\min}(L)$) is the smallest non-zero eigenvalue of $\L$ (resp. of $L$) and is a quantity that palys a central role in the theory of decoherence times .

\begin{theorem}[Transference]  \label{Transfer} Let $T_t=e^{-t\L}$, $S_t=e^{-tL}$ as above. Then
	\begin{enumerate}
		\item[i)] The spectral gap for $\L$ is bigger than the spectral gap for $L$: $\la_{\min}(\L)\gl \la_{\min}(L)$;
		\item[ii)] For any $1\leq p,q\leq \infty$, we have $\|T_t:L_s^p(\Nfix\,\ssubset\, \cB(\cH))\to L_s^q(\Nfix\,\ssubset \,\cB(\cH))\|_{\cb}\kl \|S_t:L_p(\mu_G)\to L_q(\mu_G)\|_{\cb}$;
		\item[iii)] For any $1\leq p,q\leq \infty$, we have $\|T_t-E_{fix}:L_s^p(\Nfix\subset\, \cB(\cH))\to L_s^q(\Nfix\,\ssubset\, \cB(\cH))\|_{\cb}\kl \|S_t-\Ebb_{\mu_G}:L_p(\mu_G)\to L_q(\mu_G)\|_{\cb}$;
	\end{enumerate}
\end{theorem}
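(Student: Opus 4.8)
The common engine behind all three parts is the factorization identity of \Cref{lem_factorization}, $\pi\circ T_t=(S_t\otimes\id_{\cB(\cH)})\circ\pi$, together with two facts: first, that $\pi$ is an $L_2$-isometry, since conjugation by the unitaries $u(g)$ preserves the normalized trace and $\mu_G$ is a probability measure; second, the completely isometric inclusion $L_p^q(\Nfix\subset\cB(\cH))\subset L_p^q(\cB(\cH)\subset L_\infty(G,\cB(\cH)))$ induced by $\pi$ that was recorded above. A preliminary observation used in (i) and (iii) is that the commuting square of \Cref{sect11} gives $\pi\circ E_{fix}=(\EE_{\mu_G}\otimes\id_{\cB(\cH)})\circ\pi$; equivalently this follows by letting $t\to\infty$ in the factorization, using that $T_t\to E_{fix}$ by ergodicity and $S_t\to\EE_{\mu_G}$ by primitivity.

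For (i) I would argue by a Poincar\'e inequality. Differentiating the factorization at $t=0$ yields $\pi\circ\L=(L\otimes\id_{\cB(\cH)})\circ\pi$. Take $x$ orthogonal to $\Nfix$, i.e. $E_{fix}(x)=0$; then $(\EE_{\mu_G}\otimes\id_{\cB(\cH)})\pi(x)=\pi(E_{fix}x)=0$, so $\pi(x)$ is orthogonal to the constant functions, which are precisely $\ker(L\otimes\id_{\cB(\cH)})$. Since the nonzero spectrum of $L\otimes\id_{\cB(\cH)}$ coincides with that of $L$, we obtain $\langle x,\L x\rangle_{\operatorname{HS}}=\langle\pi x,(L\otimes\id_{\cB(\cH)})\pi x\rangle\ge\la_{\min}(L)\,\|\pi x\|_2^2=\la_{\min}(L)\,\|x\|_2^2$, where the last equality uses the $L_2$-isometry of $\pi$. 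Taking the infimum over such $x$ gives $\la_{\min}(\L)\ge\la_{\min}(L)$.

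For (ii), by \eqref{bimod} the conditioned norm and its $\cb$-version of the $\Nfix$-bimodule map $T_t$ are independent of $s$, so it suffices to treat one convenient value of $s$. Since $\pi$ is a complete isometry onto its range and $\pi\circ T_t=(S_t\otimes\id_{\cB(\cH)})\circ\pi$, the map $T_t$ on $L_p^q(\Nfix\subset\cB(\cH))$ is intertwined with the restriction of $S_t\otimes\id_{\cB(\cH)}$ to the range of $\pi$; hence its norm is bounded by $\|S_t\otimes\id_{\cB(\cH)}:L_p^q(\cB(\cH)\subset L_\infty(G,\cB(\cH)))\to L_p^q(\cB(\cH)\subset L_\infty(G,\cB(\cH)))\|$. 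The final step identifies this amalgamated norm of $S_t\otimes\id_{\cB(\cH)}$ with $\|S_t:L_p(\mu_G)\to L_q(\mu_G)\|_{\cb}$, which is exactly the content of the vector-valued $L_p$ theory of \cite{Pis93,JP10}; amplifying everything by $\Mz_m\otimes(\cdot)$ promotes the estimate to the $\cb$-norm on the left. For (iii) the argument is identical: subtracting $\pi\circ E_{fix}=(\EE_{\mu_G}\otimes\id_{\cB(\cH)})\circ\pi$ from the factorization gives $\pi\circ(T_t-E_{fix})=((S_t-\EE_{\mu_G})\otimes\id_{\cB(\cH)})\circ\pi$, and the reasoning of (ii) applies verbatim with $T_t,S_t$ replaced by $T_t-E_{fix}$ and $S_t-\EE_{\mu_G}$.

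The diagram chases are routine; the genuine technical point is the last step of (ii), the identification of the amalgamated operator norm of $S_t\otimes\id_{\cB(\cH)}$ with the completely bounded norm of $S_t$ on the scalar spaces $L_p(\mu_G)\to L_q(\mu_G)$. This rests on the structure of the conditioned/amalgamated $L_p$ spaces and must be carried out uniformly in the two regimes $p\le q$ and $p\ge q$ of \eqref{Lpnorms}; the reduction via \eqref{bimod} to a fixed $s$, together with the complete isometry of the inclusion $\pi$, is what keeps this clean.
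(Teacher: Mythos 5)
Your proposal is correct, and it assembles exactly the ingredients the paper sets up for this purpose: the factorization $\pi\circ T_t=(S_t\otimes\id_{\cB(\cH)})\circ\pi$ of \Cref{lem_factorization}, the commuting square giving $\pi\circ E_{fix}=(\EE_{\mu_G}\otimes\id_{\cB(\cH)})\circ\pi$ and the complete isometry of the inclusion $L_p^q(\Nfix\subset\cB(\cH))\subset L_p^q(\cB(\cH)\subset L_\infty(G,\cB(\cH)))$, the $s$-independence \eqref{bimod}, and the fact that amplifying $S_t$ by the matrix algebra $\cB(\cH)$ (or $\Mz_m\otimes\cB(\cH)$) is dominated by the supremum defining $\|S_t\|_{\cb}$. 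Note that the paper itself states this theorem without proof, citing \cite{gao2018fisher}, so there is no in-text argument to compare against; your reconstruction, including the variational treatment of the spectral gap via $\pi(x)\perp\ker(L\otimes\id_{\cB(\cH)})$ for $E_{fix}(x)=0$ and the trace-preserving $*$-homomorphism property of $\pi$, is the standard one and is sound in the self-adjoint setting the paper works in.
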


We conclude this section by briefly explaining how this theorem can be applied to get estimates for the QMS. First we recall a result on the cb norm in commutative $\C^*$-algebra (see e.g. Theorem 3.9 of \cite{[P02]}).

\begin{lemma}\label{lem_CB_commutative}
	Let $S$ be an operator on the commutative $L_p(\mu)$ space. Assume that either $p=1$ or $q=+\infty$. Then
	\[ \|S:L_p(\mu_G)\to L_q(\mu_G)\|_{\cb} \lel \|S:L_p(\mu_G)\to L_q(\mu_G)\| \pl .\]
\end{lemma}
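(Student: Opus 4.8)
The plan is to prove the two inequalities $\|S\|_{\cb} \le \|S\|$ (the reverse being automatic by restricting to the $m=1$ matrix level) by treating the hypotheses $q=+\infty$ and $p=1$ separately, reducing both to the single structural fact that $L_\infty(\mu_G)$ carries the minimal operator space structure of a commutative $C^*$-algebra.

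For the case $q = +\infty$, I would exploit that scalar functionals on an operator space are automatically completely bounded with equal norm. Concretely, realize $L_\infty(\mu_G)$ as a $C^*$-subalgebra of some $C(K)$ (with $K$ the weak-$*$ compact unit ball of its dual), so that at matrix level $\Mz_m(L_\infty(\mu_G))$ embeds completely isometrically into $C(K,\Mz_m)$. For $F \in \Mz_m(L_p(\mu_G))$ and $k \in K$, the value $(\id_m \otimes S)(F)(k)$ equals $(\id_m \otimes (\delta_k \circ S))(F)$, where $\delta_k \circ S$ is a scalar functional on $L_p(\mu_G)$ of norm at most $\|S\|$. Since a bounded functional $\phi$ satisfies $\|\phi\|_{\cb} = \|\phi\|$, this matrix has norm at most $\|S\|\,\|F\|_{\Mz_m(L_p(\mu_G))}$ uniformly in $k$; taking the supremum over $k$, $F$ and $m$ yields $\|S\|_{\cb} \le \|S\|$. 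This is precisely the statement that maps into a minimal operator space are completely bounded with no loss.

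For the case $p = 1$, rather than verifying directly that $L_1(\mu_G)$ is a maximal operator space, I would reduce to the previous case by duality. Since $\hat p = +\infty$, the operator-space adjoint of $S : L_1(\mu_G) \to L_q(\mu_G)$ is a map $S^\ast : L_{\hat q}(\mu_G) \to L_\infty(\mu_G)$, using the completely isometric dualities $(L_1(\mu_G))^\ast = L_\infty(\mu_G)$ and $(L_q(\mu_G))^\ast = L_{\hat q}(\mu_G)$ for these (here commutative) noncommutative $L_p$ spaces. Passing to the adjoint preserves both the operator norm and the completely bounded norm, so $\|S\|_{\cb} = \|S^\ast\|_{\cb}$ and $\|S\| = \|S^\ast\|$; since $S^\ast$ has range $L_\infty(\mu_G)$, the first case gives $\|S^\ast\|_{\cb} = \|S^\ast\|$, and the chain of equalities closes the argument.

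The main obstacle is not any single computation but assembling the correct operator-space infrastructure with matching normalizations: that the matrix amplifications $\Mz_m(L_p(\mu_G))$ are Pisier's vector-valued spaces $S_p^m[L_p(\mu_G)]$, that the endpoint structures on $L_\infty(\mu_G)$ and $L_1(\mu_G)$ are the $C^*$- and predual ones, that scalar functionals are automatically cb, and that the duality $(L_q(\mu_G))^\ast = L_{\hat q}(\mu_G)$ together with the adjoint-invariance of the cb-norm hold completely isometrically. These are all standard, which is why the result is merely attributed to the cited reference; the genuine content of the lemma is the observation that $p=1$ and $q=+\infty$ are exactly the hypotheses placing $S$, or its adjoint, into the minimal-target regime where complete boundedness comes for free.
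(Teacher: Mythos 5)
Your proposal is correct. Note, however, that the paper does not actually prove this lemma: it is stated as a recalled fact with a pointer to Theorem~3.9 of Pisier's operator space book, so there is no in-paper argument to compare against. What you have written is a faithful reconstruction of the standard proof behind that citation. The $q=\infty$ half is exactly the ``minimal operator space'' phenomenon (scalar functionals are automatically completely bounded with equal norm, hence so is any map into $\min C(K)$), which is the content of the cited theorem; the $p=1$ half by passing to the adjoint $S^\ast:L_{\hat q}\to L_\infty$ is the standard companion argument (equivalently, $L_1(\mu_G)$ carries the maximal structure as the operator predual of $L_\infty(\mu_G)$). The only point where genuine care is needed, and which you correctly flag rather than gloss over, is that the cb-norm in this paper is computed against Pisier's vector-valued matrix levels $S_p^m[L_p(\mu_G)]$ (equivalently the conditioned spaces $L_s^p(\Mz_m\subset \Mz_m\otimes L_\infty(\mu_G))$), so the complete isometry of the duality $(L_q(\mu_G))^\ast=L_{\hat q}(\mu_G)$ and the adjoint-invariance of the cb-norm in that normalization are theorems of Pisier rather than formal tautologies; granting those, your argument closes with no gaps.
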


Let us furthermore mention that for a classical Markov semigroup $(S_t)_{t\geq0}$ acting on a compact group $G$ with kernel $(k_t)_{t\geq0}$, a  convexity argument yields
\begin{equation}\label{eq_norm1_kernel}
	\|S_t:L_1(\mu_G)\to L_{\infty}(\mu_G)\|
	\lel \sup_{g\in G}\, |k_{t}(g)|
\end{equation}
and similarly
\begin{align}\label{eq_norm2_kernel}
	&\|S_t-\Ebb_{\mu_G}:L_1(\mu_G)\to L_2(\mu_G)\| \\
	&~~~~~~~~~~~~~~~~~~\lel \left(\int |k_t(g)-1|^2 d\mu_G\right)^{1/2}\nonumber\\
	&~~~~~~~~~~~~~~~~~~=\norm{k_t-\Ind}_2 \pl .\nonumber
\end{align}
This gives us ``for free'' the following estimates on the norm of the transferred QMS between certain non-commutative $L_p$ spaces. We will see in the next sections how to apply these estimates to concrete situations arising in quantum information theory.

\begin{corollary}\label{coro_example_transference}
	Let $T_t$ and $S_t$ be as above, and $\la_{\min}(\L)\gl \la_{\min}(L)$ be their respective spectral gaps.  Then for all $t\geq0$
	\begin{equation}\label{eq_comparaison_mixingtime}
		\begin{aligned}
			&\|T_{t}-E_{fix}:L_1(\cB(\cH))\to L_1(\cB(\cH))\|\\
			& \kl  \|T_{t}-E_{fix}:L_1(\cB(\cH))\to L_1(\cB(\cH))\|_{\cb} \\
			& \kl  \|S_{t}-\Ebb_{\mu_G}:L_1(\mu_G)\to L_1(\mu_G)\|\,
		\end{aligned}
	\end{equation}
	and for all $s,t\geq0$
	\begin{align}
		&\|T_{t+s}-E_{fix}:L_1(\cB(\cH))\to L_1^2(\Nfix\,\ssubset \,\cB(\cH))\|\nonumber\\
		&~~~~~~~~~~~~~~~~~~~ \kl e^{-\la_{\min}(\Lcal)s}\, \norm{k_t-\Ind}_2  \label{eq_coro_example_transference1} \\
		&~~~~~~~~~~~~~~~~~~~ \kl e^{-\la_{\min}(L)s}\, \norm{k_t-\Ind}_2 \pl .\label{eq_coro_example_transference2}
	\end{align}
	
\end{corollary}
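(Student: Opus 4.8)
The plan is to prove the two displayed chains separately, in each case peeling off the quantum side with the transference estimates of \Cref{Transfer} and then reducing the surviving $\cb$-norm on the commutative algebra to an ordinary norm via \Cref{lem_CB_commutative}.

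For \eqref{eq_comparaison_mixingtime}, the left inequality is immediate: the completely bounded norm dominates the ordinary operator norm (take the trivial amplification, i.e.\ $m=1$ in the definition of $\|\cdot\|_{\cb}$). For the right inequality I would apply item ii) and iii) of \Cref{Transfer} with $p=q=1$, using that $L_1(\cB(\cH))=L_1^1(\Nfix\ssubset\cB(\cH))$ isometrically (the infimum over factorizations $x=ayb$ collapses to $\|x\|_1$ by H\"older), to obtain
\[\|T_t-E_{fix}:L_1(\cB(\cH))\to L_1(\cB(\cH))\|_{\cb}\leq\|S_t-\Ebb_{\mu_G}:L_1(\mu_G)\to L_1(\mu_G)\|_{\cb}\,,\]
and then invoke \Cref{lem_CB_commutative} (with $p=1$) to replace the right-hand $\cb$-norm by the ordinary one.

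For the second chain, the bottom inequality \eqref{eq_coro_example_transference2} is just the monotonicity $e^{-\la_{\min}(\L)s}\leq e^{-\la_{\min}(L)s}$, which follows from the spectral gap comparison $\la_{\min}(\L)\geq\la_{\min}(L)$ of item i) of \Cref{Transfer}. The content is \eqref{eq_coro_example_transference1}. Here I would first use the semigroup law together with the identities $T_sE_{fix}=E_{fix}$, $E_{fix}T_t=E_{fix}$ and $E_{fix}^2=E_{fix}$ (valid because $E_{fix}$ is the spectral projection onto $\ker\L$ and $T_u=e^{-u\L}$) to factorize
\[T_{t+s}-E_{fix}=(T_s-E_{fix})(T_t-E_{fix})\,,\]
and then split the norm through the intermediate space $L_1^2(\Nfix\ssubset\cB(\cH))$:
\[\|T_{t+s}-E_{fix}:L_1\to L_1^2\|\leq\|T_s-E_{fix}:L_1^2\to L_1^2\|\cdot\|T_t-E_{fix}:L_1\to L_1^2\|\,.\]
The second factor is bounded by $\norm{k_t-\Ind}_2$ exactly as above: item iii) of \Cref{Transfer} with $p=1$, $q=2$ gives control by $\|S_t-\Ebb_{\mu_G}:L_1(\mu_G)\to L_2(\mu_G)\|_{\cb}$, and \Cref{lem_CB_commutative} together with \eqref{eq_norm2_kernel} identifies this with $\norm{k_t-\Ind}_2$.

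The first factor is the crux. I would argue that $T_s-E_{fix}$ is an $\Nfix$-bimodule map (both $T_s$ and the conditional expectation $E_{fix}$ satisfy $\Phi(a\cdot b)=a\,\Phi(\cdot)\,b$ for $a,b\in\Nfix$), so by the subscript-independence \eqref{bimod} its norm as a map $L_1^2\to L_1^2$ equals its norm as a map $L_2^2\to L_2^2$, that is, as a map $L_2(\cB(\cH))\to L_2(\cB(\cH))$ since $L_2^2(\Nfix\ssubset\cB(\cH))=L_2(\cB(\cH))$. Finally, since $T_t$ is selfadjoint (\Cref{lem_QMS_selfadjoint}) and $E_{fix}$ is the orthogonal projection onto $\ker\L$, the spectral theorem gives $\|T_s-E_{fix}:L_2\to L_2\|=e^{-\la_{\min}(\L)s}$. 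Combining the two factors yields \eqref{eq_coro_example_transference1}. The main obstacle is precisely this transfer of the spectral-gap decay, which naturally lives on $L_2$, to the conditioned norm $L_1^2\to L_1^2$; it is where \eqref{bimod} and selfadjointness are indispensable, everything else being bookkeeping with \Cref{Transfer}.
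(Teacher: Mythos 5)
Your proposal is correct and follows essentially the same route as the paper: the first chain via the definition of the $\cb$-norm, transference, and \Cref{lem_CB_commutative}, and the second via the factorization $(T_s-E_{fix})(T_t-E_{fix})=T_{t+s}-E_{fix}$ (which is all the paper's own proof records explicitly). You additionally spell out the step the paper leaves implicit — reducing $\|T_s-E_{fix}:L_1^2\to L_1^2\|$ to the $L_2\to L_2$ spectral-gap bound via the $\Nfix$-bimodule property \eqref{bimod} and selfadjointness — and that filling-in is exactly right.
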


\begin{proof} \Cref{eq_comparaison_mixingtime} is a direct consequence of the definition of the cb norm and of \Cref{lem_CB_commutative}. In order to prove \Cref{eq_coro_example_transference1,eq_coro_example_transference2}, we just note that we have
	$T_t\,E_{fix}=E_{fix}\,T_t$. This implies
	\[ 
	T_{t+s}-E_{fix}\lel (T_s-E_{fix})(T_t-E_{fix}) \pl .\]
	and $(T_{s}-E_{fix})T_t\lel T_{t+s}-E_{fix}$. Therefore, using \Cref{Transfer}, 
	\begin{align*}
		&\|T_{t+s}-E_{fix}:L_1(\cB(\cH))\to L_1^2(\Nfix\,\ssubset \,\cB(\cH))\|\\
		& \leq \|T_{t}-E_{fix}:L_1(\cB(\cH))\to L_1^2(\Nfix\,\ssubset \,\cB(\cH))\|\, \\
		&~~ \times\,\|T_{s}-E_{fix}:L_1^2(\Nfix\,\ssubset \,\cB(\cH))\to L_1^2(\Nfix\,\ssubset \,\cB(\cH))\| \\
		& \leq \|S_{t}-\Ebb_{\mu_G}:L_1(\mu_G)\to L_1(\mu_G)\|\, \\
		&~~\times\,\|T_{s}-E_{fix}:L_2(\cB(\cH))\to L_2(\cB(\cH))\| \,.
	\end{align*}
	The result follows by the definition of the spectral gap as well as \Cref{eq_norm2_kernel}.
\end{proof}

Note that the norms in the statement above are just the trace and diamond distance between the channels.
For H\"ormander systems, the following kernel estimates go back to the seminal work of Stein and Rothshield \cite{Rostein}, see also \cite{LUGEEE}.
\begin{theorem}\label{Ho} Let $V=\{V_1,...,V_m\}$ be a H\"ormander system such that $K$ iterated commutators span a Lie algebra of dimension $d$. Then $L_V$ has a spectral gap and there exists a constant $C_V>0$ such that, for all $0<t\leq1$:
	\begin{align}\label{equ:boundhormander1toinfty}
		\sup_{g\in G} \,|k_t(g)| \leq C_V\, t^{-Kd/2}
	\end{align}	
\end{theorem}
Remark that for Markov kernel on graph, estimates of the form $\sup_g|k_t(g)|\leq C_L\, t^{-\alpha/2}$ with $\alpha,c>0$ also hold in general. We shall discuss several examples of such estimates in \Cref{sect3}. From a quantum information perspective, such bounds can be interpreted as saying that the maximum output min entropy of these semigroups is bounded by the logarithm of the R.H.S. of ~\eqref{equ:boundhormander1toinfty}. That is, they quantify how fast the semigroup "spreads out" all over the group.

\section{Examples}\label{examplessec}

Here, we illustrate the method developed in the previous sections by listing examples of known QMS, that can be seen as transferred from classical semigroups. We focus on the \emph{decoherence-time} of the QMS, a notion that generalizes the mixing-time to non-primitive evolutions. We recall that for a general QMS (not necessarily selfadjoint), the decoherence time is defined for any $\eps>0$ as
\begin{align}\label{mixtimeTt}
	&t_{\operatorname{deco}}(\eps)\\&~~:=\inf\{t\ge 0;\,\|T^\dagger_t(\rho)-E^\dagger_{fix}(\rho)\|_1\le \eps~~\forall\rho\in\mathcal{D}(\mathcal{H})\}\,.\nonumber
\end{align}
In all the examples below however, the QMS is selfadjoint. We also recall the definition of the \textit{mixing time} of a classical primitive Markov process $(S_t)_{t\ge 0}$
\begin{align}\label{mixtimeSt}
	t_{\operatorname{mix}}(\eps) &= \inf\{ t\geq0\,|\, \|S_t(f)-\EE_{\mu_G}(f)\|_{L_1(\mu_G)}\,\nonumber\\
	&~~~~~~~~~~~~~~~~~\le\, \eps\,\,\forall f\ge 0,\,\EE_{\mu_G}[f]=1\} \, .
\end{align}
Remark the difference in the normalization of the norms in both definitions. In the quantum case, density matrices are normalized with respect to the unnormalized trace whereas in the classical case, we look at the evolution of states normalized with respect to the probability distribution $\mu_G$.

Then, \Cref{eq_comparaison_mixingtime} in \Cref{coro_example_transference} implies that for a transferred QMS $(T_t)_{t\geq0}$ with associated classical semigroup $(S_t)_{t\geq0}$, we have for any $\eps> 0$:
\begin{align} \label{eq_tdeco_tmix}
	t_{\operatorname{deco}}(\eps)\leq t_{\operatorname{mix}}(\eps)\,.
\end{align}
This shows that the decoherence time of a QMS is controlled by the mixing time of any classical Markov semigroup from which it can be transferred. In the examples below, the classical mixing time is estimated thanks to known results on functional inequalities that we mainly introduce in \Cref{sect3} and \Cref{examplesfinitegroups}, apart for the last example in \Cref{sect_SWAP} which is directly computed in \Cref{examplesfinitegroups}.

\subsection{The depolarizing QMS}\label{dephasing}
Perhaps the simplest QMS that one can think of is the depolarizing semigroup on $\cB(\CC^n)$:
\begin{align}\label{eq_depo}
	\cL^{\operatorname{dep}}(\rho)=\rho-\frac{I_{\,\CC^n}}{n}\,,~T^{\,\operatorname{dep}}_t(\rho)=\e^{-t}\rho+(1-\e^{-t})\,\frac{I_{\,\CC^n}}{n}\,.
\end{align}
This QMS can be seen to be transferred from the uniform walk on the group $\ZZ_n\times \ZZ_n$, via the projective representation given by the discrete Weyl matrices $\{U_{i,j}\}_{i,j\in [n]}$ (see e.g.~\cite{wolftour}). Indeed, using \Cref{eq_theo_QMS_jumps} and denoting by $\Lcal$ the transferred QMS given by this representation, we find that for all $\rho\in\cD(\Cbb^d)$,
\begin{align}
	\cL(\rho)&=\frac{1}{n^2}\sum_{i,j=1}^n\,\big(\rho-U_{i,j}\,\rho\,U_{i,j}^*\big)\\
	&=\,\rho-\tr(\rho)\,\frac{I_{\,\CC^n}}{n}\\
	&=\,\cL^{\operatorname{dep}}(\rho)\,,\label{eqnormalization}
\end{align}
where we took $c_{i,j}:=\frac{1}{n^2}$ for all $i,j$. This choice implies that the uniform random walk on the complete graph with $n^2$ vertices transfers to $(T^{\,\operatorname{dep}}_t)_{t\ge 0}$. Using the logarithmic Sobolev constant for the complete graph given in \cite{Diaconis1996}, we find the following upper bound on the mixing time of $(T_t^{\operatorname{dep}})_{t\ge 0}$:
\begin{align*}
	t_{\operatorname{deco}}(\eps)&\le 	t_{\operatorname{mix}}(\eps)\\
	&\le  n^2\frac{1-\ln \eps}{n^2-1}+\frac{n^2\ln (n^2-1)}{2(n^2-2)}\ln\ln n^2\\
	&\underset{n\to\infty}{\sim} \,\ln(n)\,\ln\ln(n)\,.
\end{align*}

This can be compared with the tighter bound that one can get from the modified logarithmic Sobolev constant $\alpha_1(\cL^{\operatorname{dep}})$ (see \Cref{capmlsi}), from which we can obtain~\cite{[KT13],mmuller2016entropy}:
\begin{align*}
	\|\rho_t-n^{-1}I_{\CC^n}\|_1\le \sqrt{2  \ln n }\e^{-\frac{t}{2}}\,,
\end{align*}
so that
\[t_{\operatorname{deco}}(\eps)\le 2 \ln\frac{\sqrt{2\ln n}}{\eps}\underset{n\to\infty}{\sim}  2\ln\ln n \,.\]

\subsection{The dephasing QMS}\label{dephasingmix}
We recall that the \textit{dephasing quantum Markov semigroup} (also called \textit{decoherent QMS} in \cite{BarEID17}) on $\cB(\CC^n)$ with $n\ge 3$, is given by
\begin{align*}
&\cL^{\text{deph}}(\rho)=\rho-E_{\text{diag}}[\rho]\,,\\
& T^{\,\text{deph}}_t(\rho)=e^{-t}\,\rho+(1-e^{-t})\,E_{\text{diag}}[\rho]\,,
\end{align*}
where $E_{\text{diag}}$ denotes the projection on the space of matrices that are diagonal in some prefixed eigenbasis. Here, we show how simple representations of the discrete and continuous torus both lead to the \textit{dephasing quantum Markov semigroup}.

\paragraph{Dephasing from the discrete torus:}
Choose the uniform random walk on $\ZZ_n$ of kernel $K(j,k)=1/n$ for any $j,k\in \ZZ_n$.
A simple unitary representation of $\ZZ_n$ is given by taking $\Hcal=\CC^n$ and
\[U_j:=U^j,~~~j\in \mathbb{Z}_n,\]
where $U$ denotes the Weyl unitary operator given by $U=\operatorname{diag}\,(1,e^{\frac{2i\pi}{n}},...,e^{\frac{2i(n-1)\pi}{n}})$ on $ \cB(\CC^n)$, where the diagonal is chosen to be the one corresponding to $E_{\operatorname{diag}}$. One can easily verify from \Cref{eq_theo_QMS_jumps} that the QMS $(T^{\,\text{deph}}_t)_{t\ge 0}$ coincides with the generator of the transferred QMS corresponding to the uniform kernel on $\ZZ_n$, since by a direct calculation $E_{\text{diag}}[X]=\frac{1}{n}\,\sum_{j\in \ZZ_{n}}U^{-j}\,X\,U^j$.\\

Now, it results from the logarithmic Sobolev constant for the uniform walk on $\ZZ_n$, computed in \cite{Diaconis1996a}, that
\begin{align*}
	t_{\operatorname{deco}}(\eps)&\le n\frac{1-\ln\eps}{n-1}+\frac{n\ln (n-1)}{2(n-2)}\ln\ln n\\
	&\underset{n\to\infty}{\sim}  \frac{\ln (n)\ln\ln (n)}{2}   \,.
\end{align*}

\paragraph{Dephasing from the $n$-dimensional torus:}
Take the representation of the $n$-dimensional torus that consists of
diagonal unitary matrices:
\begin{align*}
	\mathbb{T}^n \ni	(t_1,...,t_n)\mapsto
	\begin{bmatrix}
		\e^{2i t_1\pi} &&0 \\
		& \ddots & \\
		0&& \e^{2it_n\pi}
	\end{bmatrix}\,.
\end{align*}	
The QMS associated to the heat semigroup and the above representation corresponds to $(T_{t}^{\textrm{deco}})_{t\ge 0}$. This simply follows from \Cref{eq_theo_QMS_diffusive} by taking the generators $A_j:=|j\rangle\langle j|$ of $\mathbb{T}^n$, so that the generator of the transferred QMS is equal to
\begin{align*}
	\cL(x)&=\frac{1}{2}\,\sum_{j=1}^n\,|j\rangle\langle j|\,x+x\,|j\rangle\langle j|-2\,|j\rangle\langle j|x
	|j\rangle\langle j|\\
	&= x-\,E_{\text{diag}}[x]\,.
\end{align*}
Then the estimation (\ref{eq:estimatetorus}) leads to the following bound on the decoherence time of this QMS:
\begin{align*}
	t_{\operatorname{deco}}(\eps)\le \frac{1}{2}\ln \left( \frac{1}{2}n\ln n  \right)+6-\ln\eps  \underset{n\to\infty}{\sim}  \frac{\ln\left( n\right)}{2}\,.
\end{align*}	
Hence the estimate found on the decoherence time from the continuous torus turns out to be sharper than the one found from the discrete torus. Moreover, these two bounds can be compared with the one found via decoherence-free modified logarithmic Sobolev inequality in \cite{BarEID17}, which implies that
\begin{align*}
	\|T_t^{\operatorname{deph}}(\rho)-E_{fix}^\dagger(\rho)\|_1\le \sqrt{2  \ln n }\e^{-\frac{t}{2}}\,,
\end{align*}
so that
\begin{align*}
	t_{\operatorname{deco}}(\eps)\le 2\ln\frac{\sqrt{2\ln n}}{\eps}\underset{n\to \infty}{\sim} \ln\ln n\,.
\end{align*}	
Once again, we see that the transfer method does not immediately lead the best decoherence-time.
\subsection{Collective decoherence}
The bounds provided by the transference method for the examples studied in the last two sections, namely the depolarizing and the dephasing semigroups, are worse than the already known ones derived from the modified logarithmic Sobolev inequality. In this section, on the other hand, we show that our method provides an easy way of deriving new estimates for collective decoherence on $n$-register systems, that is on $(\Cbb^2)^{\otimes n}$. The power of the method lies in the fact that the constants derived are independent of the choice of the representation. In particular, we get estimates that are independent of the number of qubits by choosing tensor product representations.

We focus on two particular examples of collective decoherence, namely the weak and the strong collective decoherences. We first recall the definition of the Pauli matrices on $\Cbb^2$:
\begin{equation*}
	\sigma_x:=\left( {\begin{array}{cc}
			0 &1 \\
			1 &0 \\
	\end{array} } \right)\,,\,
	\sigma_y:=\left( {\begin{array}{cc}
			0 &-i \\
			i &0 \\
	\end{array} } \right)\,,\,
	\sigma_z:=\left( {\begin{array}{cc}
			1 &0 \\
			0 &-1 \\
	\end{array} } \right)\,.
\end{equation*}

\paragraph{Weak collective decoherence:}
We recall the generator of the weak collective decoherence on $n$ qubits:
\begin{align}\label{WCD}
	\cL_n^{\operatorname{wcd}}(x):=\frac{1}{2}((\sigma_z^{(n)})^2x+x\,(\sigma_z^{(n)})^2)-\sigma_z^{(n)}\,x\,\sigma_z^{(n)}\,, 
\end{align}
 where 
 \begin{align*}\sigma_z^{(n)}:=\sum_{i=1}^n I_{\CC^2}^{\otimes i-1}\otimes \sigma_z\otimes I_{\CC^2}^{n-i}\,.
\end{align*}	
One can easily show that the completely mixed state $2^{-n}I_{(\CC^2)^{\otimes n}}$ is invariant, since $\cL^{\operatorname{wcd}}_{n}(I_{(\CC^2)^{\otimes n}})=0$. Moreover, since $\sigma_z^{(n)}$ is self-adjoint, $	\cL_n^{\operatorname{wcd}}$ is selfadjoint with respect to that state. \\

The spectral gap for this QMS was computed in \cite{BR18} and found to be equal to $\lambda_{\min}(\LL_n^{\operatorname{wcd}})=2$ for any $n\geq2$. From this and the universal upper bound on the logarithmic Sobolev constants found in the same article, the authors conclude that the weak collective decoherence QMS satisfies
\begin{align}\label{boundrough}
	t_{\operatorname{deco}}(\eps)=\mathcal{O}(n)\,.
\end{align}	
We will see that the transference method leads to a better estimate. Let us first consider the heat diffusion on the one dimensional torus $\mathbb{T}^1$, which we represent on $(\CC^2)^{\otimes n}$ as follows:
\begin{align*}
	\mathbb{T}^1\ni \theta\mapsto (\e^{i\theta \sigma_z})^{\otimes n}.
\end{align*}
One can easily verify that the QMS transferred via the above representation is the weak collective decoherence semigroup (up to a rescaling of the Lindblad operators by a factor of $\sqrt{2}$) as a direct consequence of \Cref{diffusivecollectivedeco}. Then, by \Cref{eq_tdeco_tmix} and the estimate of \Cref{compactliegroups}, we find for all $t\ge 0$, and any $\rho\in\cD(\cH)$:
\begin{align*}
	\| T_{t}^{\operatorname{wcd},n}(\rho-E^\dagger_{fix}(\rho))\|_1\le \sqrt{2+\sqrt{\pi/t}}\,\e^{-\frac{t}{2}}\,,
\end{align*}	
which represents a fast convergence independent of the number $n$ of qubits of the system (remark that we set the dimension of a single copy to be $2$, but the result would depend on this dimension otherwise).
\paragraph{Strong collective decoherence:}
We recall the generator of the strong collective decoherence on $n$ qubits:
\begin{align*}
	\cL^{\operatorname{scd}}_n(x):=\sum_{i\in \{x,y,z\}}\frac{1}{2}((\sigma_i^{(n)})^2x+x(\sigma_i^n)^2)-\sigma_i^{(n)}\,x\,\sigma_i^{(n)}\,, 
\end{align*}
 where 
 \begin{align*}
 \sigma_i^{(n)}:=\sum_{k=1}^n I_{\CC^2}^{\otimes k-1}\otimes \sigma_i\otimes I_{\CC^2}^{n-k}\,.
\end{align*}
The difference with \Cref{WCD} arises from the consideration of all three Pauli operators $\sigma_x^{(n)}$, $\sigma_y^{(n)}$ and $\sigma_z^{(n)}$ as Lindblad operator.
We consider the three-dimensional simple Lie group $SU(2)$ of associated generators $\sigma_x,\,\sigma_y,$ and $\sigma_z$ spanning the Lie algebra $\frak{su}(2)$, as well as the $n$-fold representation $SU(2)\ni g\mapsto U_g^{\otimes n}$, where $U$ denotes the defining spin $1/2$ representation of $SU(2)$: for any $\psi\in\CC^2$ and $ g\in SU(2)$,
\[U_g\,\psi=g\,\psi\,. \]
Just like previously, an easy use of \Cref{diffusivecollectivedeco} shows that the semigroup transferred from the heat semigroup on $SU(2)$ via the above tensor product representation coincides with the strong collective decoherence QMS (up to a rescaling of the Lindblad operators by a factor $\sqrt{2}$).
An easy application of \Cref{eq_tdeco_tmix} and the estimate of \Cref{matrixliegroups} for $n=3$ provides the following dimension-independent bound for the decoherence time of the strong collective decoherence QMS:
\begin{align}\label{eq_deco_strong1}
	t_{\operatorname{deco}}(\eps)\le \frac{34}{3}-8\ln\eps+2\ln\left( 1+\frac{3}{2}\ln\frac{3}{4}   \right)\,.
\end{align}

\subsection{Random SWAP gate}\label{sect_SWAP}

We consider two QMS that informally represent a random SWAP gate $F_{ij}$, applied at random times to two registers $i,j$ on $\cH^{\ten n}$, where $\cH$ is a $d$ dimensional Hilbert space. Both QMS are transferred from a classical semigroup on the permutation group on $n$ elements $\Sigma_n$. In both cases, the unitary representation on $\cH^{\otimes n}$ is the canonical one:
\[u_\omega:\,\,e_1\otimes\cdots\otimes e_n\mapsto e_{\omega(1)}\otimes\cdots \otimes e_{\omega(n)}\,,\qquad\omega\in\Sigma\,.\]
The first QMS is induced by so-called \emph{random transpositions} (RT). Here the   SWAP gate $F_{ij}$ is applied to any pair of registers $\cH\ten  \cH$ placed in the $(i,j)$'s registers:
\[ \L^{RT}(\rho) \lel \frac{1}{n}\sum_{ij} (\rho- F_{ij}\rho F_{ij}) \]
This QMS can be easily seen as being transferred from the classical semigroup with generator
\[L^{RT}(f)(\om)=\frac{1}{n}\sum_{ij} (f(\om)-f(\si_{ij}\om))\,\]
acting on $\ell_{\infty}(\Si_n)$, where $\sigma_{ij}$ is the transposition $(ij)$. For our second QMS we only allow  \emph{nearest neighbor} (NN) interaction on a cyclid $1D$ grid:
\[ \L^{NN}(\rho) \lel \sum_{1\le j\le n} (\rho- F_{j(j+1)}\rho F_{j(j+1)}) \pl , \]
where the $n+1$ registered is identified with the $1$. It can be seen that $\L^{NN}$ is being transferred from the generator on the permutation group
\[L^{NN}(f)(\omega)=\sum_{j=1}^n\,\left(f(\omega)-f(\sigma_{j(j+1)}\omega)\right)\,.\]
According to \cite{church}, the latter can be simulated with local gates. For the random transposition model, local means that only two registers are involved, whereas for the nearest neighbor interaction, local means neighboring gates. The different normalizations are chosen to fit with existing random transposition models in the literature, in particular \cite{Yau}.

\begin{theorem}\label{RandT} Following the notations above, we have
	\begin{align}
		t_{1,\infty}^{L^{RT}}(\eps)& \kl 4(1+\ln^2 n)-\ln \eps \,  \\
		t_{1,\infty}^{L^{NN}}(\eps) & \kl
		2n^2((1+\ln^2n)-\ln \eps) \pl ,
	\end{align}
	where the mixing time $t_{1,\infty}(\eps)$ is defined as (see also \Cref{sect3})
	\begin{align*}
&	t_{1,\infty}^L(\eps)\\
	&~~~=\inf\{t\geq0\,;\,\|e^{-tL}-\mu_G\,:\,L_1(\mu_G)\to L_\infty(\mu_G)\|\leq\eps\}\,.
	\end{align*}
	Consequently we obtain the following estimate on the decoherence time of the quantum SWAP evolutions:
	\begin{align}
		t_{\operatorname{deco}}^{\L^{RT}}(\eps)& \kl 4(1+\ln^2 n)-\ln \eps
		\\
		t_{\operatorname{deco}}^{\L^{NN}}(\eps) & \kl 4n^2((1+\ln^2n)-\ln \eps) \pl .
	\end{align}
\end{theorem}

\begin{proof}
	Remark that by ordering of the classical $L_p$ norm and by transference, it suffices to estimate $t_{1,\infty}(\eps)$ for random transposition models on the permutation group for $L^{TR}$ and $L^{NN}$. Let us start with the RT model.
	Indeed, according  to \cite{Yau} we know that
	\[ t_{1,\infty}^{L^{RT}}(\eps)\leq  c(\ln n-\ln \eps) \pl \]
	for some unknown constant $c$.
	Such an estimate is not available for nearest neighbor interaction.  Our starting point is the LSI-inequality
	\[ \LSI(L^{RT})\kl 2\ln(n) \]
	for $n\gl 2$. According to \cite{Diaconis1996} this implies
	\[ t_{1,2}(1/e)\kl (1+\frac{1}{4}\ln \ln n n!) 2\ln n \pl .\]
	Hence 
	\[ t_{1,\infty}(1/e^2)\kl 4\ln n(1+\frac{\ln n}{2})\kl 4(1+\ln n^2) \pl .\]
	Thus for arbitrary $\eps$, because of the spectral gap $1$, we find
	\[ t_{1,\infty}^{L^{RT}}(\eps)
	\kl 4(1+\ln^2n)-\ln \eps \pl .\]
	However, the  factor $(\ln n)^2$ is too large, because Diaconis-Shahshahani proved that as $n$ goes to $\infty$ $t_{1,2}(1/e)\sim \ln n$. Since the spectral gap is of order $1$, then the estimates for $t_{1,2}(\eps)$ requires an additional term $-\ln\eps$ as above. For the nearest neighbour model we first consider a graph, in our case the Caley graph of the permutation group, and compare the energy form
	\[ \mathcal{E}_E(f)  \lel
	\frac{1}{|\Si_n|} \sum_{\si \in \Si_n}
	\sum_{(ab)\in E} |f(\si_{ab}\si)-f(\si)|^2 \pl .\]
	Let $E'\subset \{1,...,n\}^2$ another generating set such that the graph is complete. For every $a,b\in E$, we can find a geodesic path $\gamma_{ab}:\{1,...,m\}\to E'$ and observe that
	\[ (f(\tau_{ab}\si)-f(\si))^2
	\kl m \sum_{j=1}^m  (f(\tau_{j}\tau_{a_jb}\si)-f(\tau_{a_jb}\si))^2 \pl .\]
	Here $\tau_j=\tau_j^{ab}$ comes from the generating set $E'$. In our case the longest possible path is $\le n$. This implies that
	\begin{align*}
	&	\mathcal{E}_E(f)\\
		&\le n \sum_{(c,d)\in E'}
		\sum_{ab\in E} \sum_j \delta_{\tau_{cd},\rho_{ab}^j}
		\frac{1}{|\Si_n|} \sum_{\si} (f(\rho_{ab}^j\tau_{a_j,b}\si)-f(\tau_{a_j,b}\si))^2 \\ 
		&= n \sum_{\si} \sum_{(c,d)\in E'}
		\sum_{ab\in E} \sum_j \delta_{\tau_{cd},\rho_{ab}^j}
		\frac{1}{|\Si|}
		\sum_{\si} (f(\tau_{cd}\si)-f(\si))^2 \\
		&= n \frac{1}{|\Si|}
		(\sum_{\si} \sum_{(c,d)\in E'} \sum_{(a,b)\in E, (c,d)\in \Gamma})
		(f(\tau_{cd}\si)-f(\si))^2 \pl .
	\end{align*}
	Since, we may take geodesic, no $(cd)$ is counted double. Thus for $E'=\{(j,j+1)|1\le j\le n\}$ and $E=\{(a,b)| a\neq b\}$, we deduce that
	\[   \mathcal{E}_E(f) \kl n^3 \mathcal{E}_{E'}(f) \pl .\]
	Thanks to the normalization factor $\frac{1}{n}$ for $L^{RT}$ is implies that
	\[ LSI(L^{NN})^{-1}\kl n^2 LSI(L^{RT})^{-1}\kl 2n^2 \ln n \pl .\]
	Note that our estimate also implies that the spectral gap $\la(L^{NN})\gl \frac{1}{2n^2}$. Thanks to \cite{Diaconis1996}, we deduce
	\begin{align*}
	t_{1,2}(1/e)^{L^{NN}} &\kl (1+\frac{1}{4}\ln\ln n!) 2 n^2 \ln n\\
&	\kl 2n^2 (1+\frac{\ln^2 n}{2}) \pl .
	\end{align*}
	By symmetry this implies
	\[ t_{1,\infty}^{L^{NN}}(1/e^2)
	\kl 4n^2 (1+\ln(n))^2 \pl .\]
	Using the spectral gap this yields
	\[  t_{1,\infty}^{L^{NN}}(\eps)\kl
	4n^2((1+\ln n)^2 -\ln \eps) \pl.\]
	Note that we have an automatic $cb$-norm estimate in this case, and hence transference, allows us to estimate  the decoherence time of the tensor swaps. \end{proof}

\begin{remark} The correct estimate $\operatorname{MLSI}(L^{RT})\sim 1$ for the modified (not complete) logarithmic Sobolev inequality has only recently been found \cite{gao2003exponential}.
	The standard inductive procedure appears not directly applicable for   $L^{NN}$, though. However, our proof shows that $\LSI^{-1}(L^{NN})\kl cn^2(1+\ln^2n)$, and hence trivially we find a bound $\operatorname{MLSI}$. We conjecture that for every  graph, we always have
	\[ \operatorname{CLSI}^{-1}(L_{E})\kl c \ln \ln |V| \operatorname{LSI}^{-1}(L_E) \pl .\]
	This would yield a bound of order $cn^2(1+\ln n)^3$ for the inverse of the CLSI constant. At any rate a better estimate of order  $cn^2$  would be highly desirable.
\end{remark}

\subsection{Compact Lie groups}\label{compactliegroups}
Here, we recall some well-known estimates for the heat semigroup defined on various compact Lie groups.

\paragraph{$1$-dimensional torus}
The following estimate was derived in example 1 of Section 3 of \cite{saloff1994precise}:
\begin{align*}
	\| h\mapsto k_t(gh^{-1})-1\|_2\le \sqrt{  2+\sqrt{\pi/2t}  }\,\e^{-t}.
\end{align*}

\paragraph{$n$-dimensional torus ($n>1$):} The logarithmic Sobolev constant associated to the heat semigroup on the n dimensional torus $\mathbb{T}^n$ is known to achieve the bound $1/c(S^{\text{Heat}})\le \lambda(S^{\text{Heat}})=1$. In Theorem 5.3 of \cite{saloff1994precise}, the following upper bound on its kernel (and in fact on the kernel of any uniformly elliptic generator) was found:
\begin{align}\label{eq:estimatetorus}
&	\sup_{g\in\mathbb{T}^n}	\|h\mapsto k_t(gh^{-1})-1\|_2\\
	&~~~~~~~\le \operatorname{exp}\left({-t+\frac{1}{2}\log\left(  \frac{1}{2}\,n\,\log n  \right) +6}\right).\nonumber
\end{align}

\paragraph{Matrix Lie groups:}
In \cite{saloff1994precise}, precise estimates on the kernel of diffusion semigroups on various Riemannian manifolds were obtained starting from a curvature dimension inequality $\operatorname{CD}(\rho,\nu)$. Applying this to the curvature dimension inequalities satisfied by semisimple Lie groups \cite{ROTHAUS1986358}, Saloff-Coste derived the following straightforward corollary (stated here as a theorem for sake of completeness):
\begin{theorem}\label{matrixliegroups}
	Let $(G,\mathfrak{g})$ be a real connected semi-simple compact Lie group of dimension $n$ endowed with the Riemannian metric induced by its Killing form. Then, the heat diffusion satisfies
	\begin{align}\label{CD}
	&	\sup_{g\in G}	\|h\mapsto k_t(gh^{-1})-1\|_2\\
		&~~~~\le \operatorname{exp}\left(   1+\lambda(\Delta)\left[ -t+\frac{16}{n}+2\log\left( 1+\frac{1}{2}\,n\log\frac{n}{4}  \right)   \right] \right),\nonumber
	\end{align}	
	where $\lambda(\Delta)$ is the spectral gap of $(G,\mathfrak{g})$. In particular, for $\eps>0$ and $t_n=2(1+\eps)\log n$, the above bound converges to $0$.
	Moreover, the following bounds the logarithmic Sobolev and Poincar\'{e} constants hold:
	\begin{align}\label{specalph}
		& \lambda(\Delta)\ge \frac{n}{8(n-1)}\geq \frac18\,, \\
		&c(\Delta)\leq \frac{4(n-1)}{n}\leq4\,.
	\end{align}	
	
\end{theorem}

\renewcommand{\lb}{\left(}
\renewcommand{\rb}{\right)}

\section{Capacity bounds, resource theories and entropic inequalities  }\label{capa}

Here we will discuss how to apply the estimates provided in \Cref{normtransference} to obtain upper bounds on
the (two-way) private, quantum, entanglement-assisted classical capacity and classical capacity of a transferred semigroup $(T_t)_{t\ge 0}$ converging to its associated conditional expectation $E_{fix}\equiv E_{N_{fix}}$.
Roughly speaking, the capacity is the ultimate rate of transmission of a certain resource through a quantum channel such that in the limit
of infinitely many uses of the channel, the success probability of the transmission  of this resource converges to $1$ after encoding and decoding operations.
We refer to~\cite[Chapter 8]{Watrous_2018} for a precise definition of the various capacities considered here and note that we will express all capacities in $e-$bits, as they are more convenient in this context.

Computing the exact value of the capacity is most often out of reach of current techniques. Instead, we are interested in upper bounding them. More particularly, we will mostly be interested in showing strong converse bounds on these capacities. A bound on a capacity is called a strong converse bound
if we have that if we exceed the transmission rate given by the bound, the probability of successful transmission
of a certain resource goes to $0$ exponentially fast
as the number of channel uses goes to infinity.
Our method relies on relating norm estimates to bounds on entropic quantities derived
from the sandwhiched R\'enyi entropies. Intuitively speaking, as the QMS $(T_t)_{t\ge 0}$ converges to $E_{fix}$, we expect that its capacity also converges to that of the conditional expectation, and wish to quantify this convergence.
For all of the results in this chapter, we will assume that we know the decomposition of the fixed point algebra $\Nfix\subset \cB(\cH):$
\begin{align}\label{fixedpointalgebra}
	\Nfix=\bigoplus\limits_{k=1}^m \mathbb{M}_{n_k}\otimes I_{\CC^{d_k}}\,.
\end{align}
It is easy to find such decompositions for transferred semigroups in terms of the decomposition into irreducible representations of the representation we use to define the semigroup. We recall our notation for states on $\Nfix$: we write $\sigma\in\Dcal(\Nfix)$ whenever $E^\dagger_{fix}(\sigma)=\sigma$ (recall also that $E^\dagger_{fix}=E_{fix}$ for unital, not necessarily self-adjoint QMS, as we assume for transferred QMS). Moreover, in~\cite{Fukuda_2007,Gao_2018tro} closed formulas are derived for the capacities of conditional expectations in terms of the decomposition of the underlying fixed point algebra. As we will see soon, roughly speaking, all these capacities will be at most the limiting capacity plus an additive error $\epsilon$ at time $$t(\epsilon):=\inf\left\{t\ge 0\,|\,\,\|k_t-1\|_{\infty}\le \epsilon   \right\}$$ for transferred semigroups. The main technical tool needed for this section is~\Cref{compar}. We will also show how to obtain capacity bounds from the modified logarithmic Sobolev inequality \cite{[KT13]} in \Cref{capmlsi}. This will in general be better than the one we obtain with control over $t(\epsilon)$, but it should be noted that finding an estimate on the logarithmic Sobolev constant of a QMS is a very hard problem in general (see \cite{beigi2018quantum,gao2018fisher,capel2018quantum}).
Another advantage of the $t(\eps)$ control is the fact that we can control all the relative entropies in the parameter range $p\in[1,\infty]$, which will be of crucial importance for our later applications. This is due to the fact that many strong converse bounds are only known in terms of R\' enyi entropies for $p>1$.\\

Before moving to the results, let us briefly explain our proof techniques: using the upper bound in \eqref{ineqfundam}, we have
\begin{align*}
&	D_p(T_{t(\epsilon)}(\rho\|\sigma)\\
	~~~~~&\leq D_p(E_{{fix}}(\rho)\|\sigma)+\frac{p}{p-1}\log(\|k_{t(\epsilon)}\|_{L_p(\mu_G)})\,.
\end{align*}
One way to bound the classical R\'{e}nyi divergence on the right-hand side of the above inequality is by invoking the following chain of inequalities: $\|k_{t(\epsilon)}\|_{L_p(\mu_G)}\leq\|k_{t(\epsilon)}\|_\infty^{\frac{p-1}{p}}$, by Riesz-Thorin theorem and the fact that $\|k_t(\eps)\|_{L_1(\mu_G)}=1$, so that
\begin{align}\label{boundeps}
	\|k_{t(\epsilon)}\|=\sup\limits_{g\in G}|k_{t(\epsilon)}(g)|\leq1+\sup\limits_{g\in G}|k_{t(\epsilon)}(g)-1|\leq1+\epsilon\,.
\end{align}
Although these estimates  based on $t(\epsilon)$ do not require the full power of transference, in the sense that it is also possible to derive the outer inequalities without resorting to interpolation as remarked after~\Cref{compar}, we still believe that these inequalities are of interest. This is due to the fact that it is usually nontrivial to derive them without noting that the underlying quantum channels arise as transferred  semigroups. Moreover, one can directly use an entropy decay estimate instead of the bound (\ref{boundeps}) from the \textit{modified logarithmic Sobolev inequality} for the underlying classical Markov kernel $(k_t)_{t\ge 0}$.

All of our bounds will be based on comparing the capacity of the semigroup at time $t$ to that of its limit as $t\to\infty$ and using transference techniques to estimate how close the channel is to its limit. Note, however, that our estimates will be tighter than working directly with continuity bounds for capacities. To the best of our knowledge, known continuity bounds for various capacities such as the ones of~\cite{Leung_2009}, allow to conclude that the capacity of two channels with output dimension $d$ differ by a factor of order $\epsilon\log(d)$ if they are $\epsilon$-close in diamond norm. However, in the case of transference we will be able to obtain that the capacities are $\epsilon$ close under similar assumptions.
\subsection{(Two way) Private and Quantum Capacity and Entanglement Breaking Times}
The private capacity quantifies the rate at which classical information that is secret to the environment can be reliably transmitted by a given
quantum channel in the asymptotic limit of many channel uses. For a quantum channel $T$, the private capacity is denoted  by $\mathcal{P}(T)$. Similarly,
the quantum capacity of a quantum channel quantifies at which rate quantum information can be reliably transmitted with a quantum channel, and is denoted by $\mathcal{Q}(T)$. Note that we always have $\mathcal{Q}(T)\leq\mathcal{P}(T)$ and thus, any upper bound
on the private capacity extends to a bound on the quantum capacity.
Moreover, we may also consider variations of these capacities in which we also allow for unlimited classical communication between
the sender and the receiver of the output of the quantum channel. These are usually called the two-way private and quantum capacities
and we will denote them by $\mathcal{P_\leftrightarrow}(T)$ and $\mathcal{Q_\leftrightarrow}(T)$, respectively. Clearly, we have
$\mathcal{P}(T)\leq\mathcal{P_\leftrightarrow}(T)$. We refer to e.g.~\cite{christandl2017relative} for a precise definition of these
quantities.

Using the techniques above, we can derive strong converses on the two-way quantum and private capacities based on the results of~\cite{christandl2017relative} and mixing time estimates.
More specifically, in~\cite{christandl2017relative} the authors show that for a quantum channel
$T: \mathcal{B}\left(\mathcal{H}\right)\to \mathcal{B}\left(\mathcal{H}\right)$ the quantity
\begin{align}\label{eq_upperbound_qC}
&	E_{\max}(T)\\
	&=\sup\limits_{\rho\in\mathcal{D}\left(\mathcal{H}\otimes\mathcal{H}\right)}\inf\limits_{\sigma\in\mathcal{D}\left(\mathcal{H}\otimes\mathcal{H}\right),\sigma\in\text{ SEP}}
	D_{\infty}\left(T\otimes\text{id}\left(\rho\right)||\sigma\right)\nonumber
\end{align}
is a strong converse upper bound on the two-way private and quantum capacities of $T$. Here $\operatorname{SEP}$ stands for the convex subset of $\Dcal(\cH\otimes\cH)$ of separable states, that is,
\begin{align*}
&\operatorname{SEP}\\
&=\{\sum_i p_i \rho^A_i\otimes\rho^B_i\,;\,p_i\geq0\,,\,\sum_i\,p_i=1\,,\,\rho^A_i,\rho_i^B\in\Dcal(\cH)\}\,.
\end{align*}
We then have:
\begin{theorem}[Bounding the two-way quantum and private capacity]\label{thm:boundingquantum}
	Let  $(T_t)_{t\ge0}$ be a transferred QMS.
	Then:
	\begin{align*}
	&	\max_k\log(n_k)\leq\\
		&~~~~~~~~~~\mathcal{P_\leftrightarrow}(T_{t(\epsilon)}),\mathcal{Q_\leftrightarrow}(T_{t(\epsilon)})\leq\max_k\log(n_k)+\epsilon
	\end{align*}
	Moreover, the upper bound is a strong converse bound.
\end{theorem}

\begin{proof}
	We have the following chain of inequalities:
	\begin{align*}
		&E_{\max}(T_t)\\
		&=\sup\limits_{\rho\in\mathcal{D}\left(\mathcal{H}\otimes\mathcal{H}\right)}\inf\limits_{\sigma\in\mathcal{D}\left(\mathcal{H}\otimes\mathcal{H}\right),\sigma\in\text{ SEP}}
		D_{\infty}\left(T_t\otimes\text{id}\left(\rho\right)\|\sigma\right)\\
		&\leq
		\sup\limits_{\rho\in\mathcal{D}\left(\mathcal{H}\otimes\mathcal{H}\right)}\inf\limits_{\substack{\sigma\in \cD(N_{fix})\otimes\mathcal{D}\left(\mathcal{H}\right)\\
				\sigma\in\text{ SEP}}}
		D_{\infty}\left(T_t\otimes\text{id}\left(\rho\right)\|\sigma\right) \\
		&\leq\sup\limits_{\rho\in\mathcal{D}\left(\mathcal{H}\otimes\mathcal{H}\right)}\inf\limits_{\substack{\sigma\in \cD(N_{fix})\otimes\mathcal{D}\left(\mathcal{H}\right)\\
				\sigma\in\text{ SEP}}}D_{\infty}\left(E_{{fix}}\otimes\text{id}\left(\rho\right)\|\sigma\right)+\epsilon\\
		&=\max_k\log(n_k)+\epsilon\,,
	\end{align*}
	where in the first inequality we used the fact that restricting the infimum over separable states such that one half lies in the fixed point algebra can only increase the quantity. In the second inequality we applied~\Cref{compar} to the transferred semigroup corresponding to the representation $U
	_g\otimes I_\cH$. Finally, it remains to show the last equality. The first term on the r.h.s, $\max_k\log(n_k)$, corresponds to the capacity of the conditional expectation, as computed in~\cite{Gao_2018tro}. It remains to show that, for conditional expectations, the infimum in \Cref{eq_upperbound_qC} is attained at points with one half of the state in the fixed point algebra of this conditional expectation. To see that this is indeed the case, note that for any $\rho\in\mathcal{D}\left(\mathcal{H}\otimes\mathcal{H}\right)$ we have
	\begin{align*}
	&	\inf\limits_{\sigma\in \mathcal{D}\left(\mathcal{H}\otimes\mathcal{H}\right),\sigma\in\text{ SEP}}D_{\infty}\left(E_{{fix}}\otimes\text{id}\left(\rho\right)\|\sigma\right)\\
		&~~~\geq
		\inf\limits_{\sigma\in \mathcal{D}\left(\mathcal{H}\otimes\mathcal{H}\right),\sigma\in\text{ SEP}}D_{\infty}\left(E_{{fix}}\otimes\text{id}\left(\rho\right)\|E_{{fix}}\otimes\text{id}\left(\sigma\right)\right)\\
		&~~~=\inf\limits_{\sigma\in \cD(N_{fix})\otimes\mathcal{D}\left(\mathcal{H}\right),\sigma\in\text{ SEP}}D_{\infty}\left(E_{{fix}}\otimes\text{id}\left(\rho\right)\|\sigma\right)
	\end{align*}
	by the data processing inequality and the fact that conditional expectations are projections.
	The lower bound on the capacities follows from the lower bound in Theorem~\ref{compar} and the expression for the quantum and private capacities of conditional expectations.
\end{proof}

In a similar fashion, one can use that the relative entropy of entanglement~\cite{Pirandola2017} of a channel $T$
$$E_R(T)=\sup_{\rho\in\cD(\cH\otimes \cH)}\,\inf_{\sigma\in\operatorname{SEP}}\,D(T\otimes \id(\rho)\|\sigma)$$
is a strong converse bound on the private and quantum capacities of $T$ \cite{wilde2017converse} in order to derive the following
\begin{theorem}[Bounding the one-way quantum and private capacity]\label{thm:boundingquantumone}
	Let  $(T_t)_{t\ge0}$ be a transferred QMS.
	Then:
	\begin{align*}
		\max_k\log(n_k)\leq\mathcal{P}(T_{t(\epsilon)}),\mathcal{Q}(T_{t(\epsilon)})\leq\max_k\log(n_k)+\epsilon
	\end{align*}
	Moreover, the upper bound is a strong converse bound.
\end{theorem}

Although the last theorems provide bounds for all times, in case of primitive QMS we expect that the
semigroup becomes entanglement
breaking at some point and, thus, all the aforementioned capacities become $0$. We recall that an entanglement breaking channel is one whose action on one part of a bipartite entangled state always yields a separable state.
Using our methods we can also estimate these times (see also \cite{francca}). To this end, we define
\begin{definition}[Entanglement breaking time]
	Let $(T_t)_{t\geq0}$ be a primitive QMS. We define its entanglement breaking time, $t_{\operatorname{EB}}$, to be given by
	\begin{align*}
		t_{\operatorname{EB}}((T_t)_{t\geq0})=\inf\{t\geq0\,|\,T_t~\textrm{is entanglement breaking}\}\,.
	\end{align*}
	
\end{definition}

\begin{theorem}
	Let $(T_t)_{t\geq0}$ be a primitive transferred semigroup on $\mathcal{B}\lb \mathbb{C}^d\rb$. Then
	\begin{align*}
		t_{\operatorname{EB}}((T_t)_{t\ge 0})\leq t(d^{-1})\,.
	\end{align*}
\end{theorem}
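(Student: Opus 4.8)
The plan is to reduce entanglement breaking to separability of the Choi state and then to show that at time $t(d^{-1})$ this Choi state is so close to the maximally mixed state that it lies inside the Gurvits--Barnum separable ball. Throughout, recall that a channel $T$ on $\cB(\CC^d)$ is entanglement breaking if and only if its normalized Choi state $\tau:=(T\ten\id)(\outerp{\Omega}{\Omega})$ is separable, where $\ket{\Omega}=d^{-1/2}\sum_i\ket{ii}$.

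First I would compute the Choi state of the transferred channel. Since $T_t=\int_G k_t(g^{-1})\,u(g)^*(\cdot)\,u(g)\,d\mu_G(g)$ is mixed-unitary, setting $\ket{\psi_g}:=(u(g)^*\ten I)\ket{\Omega}$, which is again maximally entangled because $u(g)$ is unitary, yields
\[ \tau_t=\int_G k_t(g^{-1})\,\outerp{\psi_g}{\psi_g}\,d\mu_G(g)\,. \]
Here the hypothesis of primitivity enters, and it is essential: primitivity means $\Nfix=u(G)'=\CC\,I$, so that $E_{fix}(\rho)=d^{-1}\tr(\rho)\,I$ is the completely depolarizing channel, whence $\int_G\outerp{\psi_g}{\psi_g}\,d\mu_G(g)=(E_{fix}\ten\id)(\outerp{\Omega}{\Omega})=d^{-2}I$.

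Next I would feed in the defining estimate valid at $t=t(d^{-1})$, namely $\|k_t-1\|_\infty\le d^{-1}$. Writing $f(g)=k_t(g^{-1})-1\in[-d^{-1},d^{-1}]$ and using $\outerp{\psi_g}{\psi_g}\ge0$, the pointwise operator bounds $-d^{-1}\outerp{\psi_g}{\psi_g}\le f(g)\,\outerp{\psi_g}{\psi_g}\le d^{-1}\outerp{\psi_g}{\psi_g}$ integrate against $\mu_G$ to give $\,-d^{-3}I\le \tau_t-d^{-2}I\le d^{-3}I\,$, i.e. $\|\tau_t-d^{-2}I\|_\infty\le d^{-3}$. (The upper inequality is exactly the $p=\infty$ instance of \Cref{comparisonentrop} applied to the amplified transferred semigroup $T_t\ten\id$ with $\sigma=d^{-2}I$; the two-sided sandwich above is more direct and, crucially, also supplies the lower bound.) Passing to the Hilbert--Schmidt norm on the $d^2$-dimensional space gives $\|\tau_t-d^{-2}I\|_2\le d\cdot d^{-3}=d^{-2}$.

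Finally I would invoke the Gurvits--Barnum separable-ball theorem: any state $\tau$ on $\CC^d\ten\CC^d$ with $\|\tau-d^{-2}I\|_2\le\big(d^2(d^2-1)\big)^{-1/2}$ is separable (equivalently $\tr(\tau^2)\le(d^2-1)^{-1}$). Since $d^{-2}=(d^4)^{-1/2}\le\big(d^2(d^2-1)\big)^{-1/2}$ for every $d\ge2$ (the case $d=1$ being trivial), the previous bound places $\tau_t$ inside this ball, so $\tau_t$ is separable and $T_{t(d^{-1})}$ is entanglement breaking; hence $t_{\operatorname{EB}}\le t(d^{-1})$. The main obstacle is sharpness of the estimate: the naive bound $\|\tau_t-d^{-2}I\|_2\le d^{-1}$ coming from the triangle inequality and $\|\outerp{\psi_g}{\psi_g}\|_2=1$ overshoots the ball radius $\sim d^{-2}$, so one must instead use the two-sided operator sandwich, which gains an extra factor $d^{-1}$ and lands $\tau_t$ just inside the Gurvits--Barnum radius; this is consistent with the threshold $\epsilon=d^{-1}$ being essentially optimal, since $d^{-2}\to\big(d^2(d^2-1)\big)^{-1/2}$ as $d\to\infty$.
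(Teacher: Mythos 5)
Your proof is correct and follows the same skeleton as the paper's --- reduce entanglement breaking to separability of the Choi state and place that state inside a separable ball around the bipartite maximally mixed state --- but the quantitative step is carried out differently, and your version is the more robust one. The paper transfers the bound $\|T_{t(\eps)}\otimes\id-E_{fix}\otimes\id\|_{1\to 1}\le\eps$ and applies it to the maximally entangled state, which only controls the trace (hence Hilbert--Schmidt) distance of the Choi state to $d^{-2}I$ by $\eps=1/d$; it then invokes a separable ball of radius $1/d$ in Hilbert--Schmidt norm. The standard Gurvits--Barnum radius for a $d\times d$ bipartite system is $1/\sqrt{d^2(d^2-1)}\sim d^{-2}$, so the paper's estimate is a factor of $d$ too weak unless one grants that more generous ball. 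You instead exploit the fact that $t(\eps)$ is defined through the sup-norm bound $\|k_t-1\|_\infty\le\eps$: the two-sided operator sandwich on the integrand of the Choi state, combined with primitivity (so that the Haar average of the $\outerp{\psi_g}{\psi_g}$ is exactly $d^{-2}I$), yields $\|\tau_t-d^{-2}I\|_\infty\le d^{-3}$ and hence $\|\tau_t-d^{-2}I\|_2\le d^{-2}$, which lands inside the true Gurvits--Barnum ball for every $d\ge 2$. In short, you prove the same statement by the same strategy but with a sharper intermediate bound that closes the argument without relying on the radius-$1/d$ claim; the only cosmetic caveat is that the Choi state should be formed with the Schr\"odinger-picture map $T_t^\dagger$, which is again mixed-unitary with the same weights, so nothing in your computation changes.
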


\begin{proof}
	In~\cite{gurvits2002}, the authors show that all states in the ball with radius $\frac{1}{d}$
	in the Hilbert Schmidt norm around the bipartite
	maximally mixed state are separable. Moreover, it is well-known that a quantum channel is entanglement breaking if and only if its Choi matrix is separable~\cite{ebchannels}.
	It follows from the transference principle that
	\begin{align}\label{equ:transfnorm1to1}
		\|T_{t(\epsilon)}\otimes\id-E_{{fix}}\otimes \id\|_{1\to 1}\leq \epsilon
	\end{align}
	Note that, as we have an erdogic semigroup, $E_{{fix}}(X)=\frac{1}{d}\tr(X)\,{I}_{\CC^d}$ and, thus, $E_{{fix}}\otimes \id$ will yield the maximally mixed state when applied to the maximally entangled state.
	Choosing the maximally entangled state as our input to the channel, it follows from~\eqref{equ:transfnorm1to1}
	that the Choi matrix of $T_t$ is separable for $t(d^{-1})$ and, therefore, the map becomes entanglement breaking for this time.
\end{proof}

It is then straightforward to adapt the various convergence results we have to obtain estimates on the time the transferred QMS becomes entanglement
breaking. Note that the situation is a bit more subtle if the semigroup is not assumed to be primitive. Consider the example
of $(T_t^{\textrm{deph}})_{t\ge 0}$ as defined in \Cref{eq_depo}. We can see that in this case the semigroup is not entanglement breaking for any finite $t\geq0$ but
is entanglement breaking in the limit $t\to\infty$. This shows that the entanglement breaking time might be infinite if we drop the assumption of
primitivity. As a matter of fact, in~\cite{francca}, the authors show that this is always the case for non-primitive semigroups.

\subsection{Classical capacity}
The classical capacity of a quantum channel is the highest rate at
which classical information can be transmitted through a quantum channel with vanishing error probability.
We will denote the classical capacity of a quantum channel by $\mathcal{C}(T)$.
One can show that~\cite{strongconvrenyi}:
\begin{align}\label{equ:inforadiusclassical}
	\mathcal{C}(T)\leq \lim\limits_{n\to\infty}\inf\limits_{\sigma\in\mathcal{D}\lb \mathcal{H}_A^{\otimes n}\rb}\sup\limits_{\rho\in\mathcal{D}\lb \mathcal{H}_A^{\otimes n}\rb}\frac{1}{n}D_p\lb T^{\otimes n}\lb\rho\rb||\sigma\rb
\end{align}
for any $p>1$.
Moreover, this is a strong converse bound. Taking $p=1$ in the bound  above also gives an upper bound on the classical capacity~\cite{Schumacher_2001}, albeit not a strong converse one. Bounding the classical capacity is a notoriously difficult problem because of the nonadditivity of the output entropy. However, for transferred groups we have:

\begin{theorem}\label{theoclasscapatransf}
	Let $(T_t)_{t\ge 0}$ be a transferred QMS. Then:
	\begin{align*}
		\log\lb \sum\limits_{i=1}^mn_i\rb\leq \mathcal{C}(T_{t(\epsilon)})\leq \log\lb \sum\limits_{i=1}^mn_i\rb+\epsilon\,.
	\end{align*}
	Moreover, the upper bound is a strong converse bound.
\end{theorem}
\begin{proof}
	First, note that the semigroup $T_{t}^{\otimes n}$ corresponds to the channel we obtain by transfering the Markov kernel
	\begin{align*}
		k_{n,t}=\bigotimes\limits_{i=1}^nk_t
	\end{align*}
	on $G^n$. Moreover, the conditional expectation related to the semigroup is clearly $E_{{fix}}^{\otimes n}$.
	We may obtain an upper bound to~\Cref{equ:inforadiusclassical} by restricting the infimum to states that are on $N_{fix}^{\otimes n}$.
	Let $\sigma\in \cD(N_{fix}^{\otimes n})$.
	By \Cref{compar}, for any $\rho\in\mathcal{D}\lb \mathcal{H}_A^{\otimes n}\rb$,
	\begin{align*}
		D_p\lb T_{t(\epsilon)}^{\otimes n}\lb\rho\rb||\sigma\rb\leq
		D_p(E_{{fix}}^{\otimes n}\lb\rho\rb||\sigma)+n\epsilon\,.
	\end{align*}
	where the last inequality follows from \ref{compar}, the fact that $k_{n,t(\epsilon)}$ is a product and the elementary inequality $\log(1+x)\leq x$.
	\begin{align*}
		&\sup\limits_{\rho\in\mathcal{D}\lb \mathcal{H}_A^{\otimes n}\rb}D_p\lb T_{t(\epsilon)}^{\otimes n}\lb\rho\rb\|\sigma\rb\\
		&~~~~~~~~~~~~~~~~~~~~\leq
		\sup\limits_{\rho\in\mathcal{D}\lb \mathcal{H}_A^{\otimes n}\rb}D_p(E_{{fix}}^{\otimes n}\lb\rho\rb\|\sigma)+n\epsilon\,.
	\end{align*}
	Moreover, we have
	\begin{align*}
	&	\inf\limits_{\sigma\in \cD(N_{fix}^{\otimes n})}\sup\limits_{\rho\in\mathcal{D}\lb \mathcal{H}_A^{\otimes n}\rb}D_p(E_{{fix}}^{\otimes n}\lb\rho\rb\|\sigma)\\
		&~~~~~~~~~~~~~~~~= \inf\limits_{\sigma\in\mathcal{D}\lb \mathcal{H}_A^{\otimes n}\rb}\sup\limits_{\rho\in\mathcal{D}\lb \mathcal{H}_A^{\otimes n}\rb}D_p(E_{{fix}}^{\otimes n}\lb\rho\rb\|\sigma)\,,
	\end{align*}
	which follows from an application of the data processing inequality and the fact that $E_{{fix}}$ is a projection, as in the proof of \Cref{thm:boundingquantum}.
	The upper bound then follows from taking the infimum over all $\sigma\in \cD(N_{fix}^{\otimes n})$, dividing the expression by $n$, taking the limit $n\to\infty$ and the expression for the classical capacity of a conditional expectation given in~\cite{Gao_2018tro}.  The lower bound follows by an argument similar to that given before for the other capacities.
\end{proof}

\subsection{Entanglement-assisted classical capacity}
The entanglement-assisted classical capacity of a quantum channel is the highest rate at
which classical information can be transmitted through a quantum channel with vanishing error probability
given that the sender and receiver share and potentially consume an unlimited amount of entanglement. We refer to~\cite[Section 8.1.3]{Watrous_2018} for
a precise definition.
We will denote the entanglement-assisted capacity
of a quantum channel $T$ by $\mathcal{C}_{\operatorname{EA}}(T)$ and note that it is an upper bound on the classical capacity of a quantum channel.
In~\cite{gupta2015multiplicativity}, the authors show that the following quantity is an upper bound on the entanglement assisted classical capacity (EAC) of a quantum channel in the strong converse sense\footnote{note that we are just rewriting the mutual information in terms of the relative entropy in~\eqref{equ:radiusentassited}.}:
\begin{align}\label{equ:radiusentassited}
&	\chi_{\operatorname{EA}}(T)	=\inf\limits_{\sigma_A\in\mathcal{D}\lb\mathcal{H}_A\rb}\sup\limits_{\outerp{\psi}{\psi}\in\mathcal{D}(\mathcal{H_A}\otimes\mathcal{H_B})}\\
&~~~~~~~~~~~~~~~~~~~~~~~~~~~~~D\lb T\otimes \id(\outerp{\psi}{\psi})\|\sigma_A\otimes\rho_B\rb\,,\nonumber
\end{align}
where $\rho_B$ is the reduced density matrix of $T\otimes \id(\outerp{\psi}{\psi})$ on system $B$ and where $\outerp{\psi}{\psi}$ stands for the rank-one orthogonal projection on the norm-one vector $\psi\in\Hcal$. A closed formula for the entanglement assisted capacity was obtained in~\cite{Gao_2018tro}. There, they show that for a conditional expectation $E_{{fix}}$ we have
\begin{align*}
	\mathcal{C}_{\operatorname{EA}}(E_{{fix}})= \log\lb\sum\limits_{i=1}^mn^2_i\rb    \,.
\end{align*}
Using similar ideas as before we can estimate the entanglement assisted classical capacity using relative entropy transference.

\begin{theorem}[Bounding the entanglement assisted classical capacity]\label{EAcapbound}
	Let $(T_t)_{t\ge 0}$ be a transferred QMS.
	Then:
	\begin{align*}
		\log\lb\sum\limits_{i=1}^mn^2_i\rb    \leq \mathcal{C}_{\operatorname{EA}}(T_{t(\epsilon)})\leq \log\lb\sum\limits_{i=1}^mn^2_i\rb+\epsilon\,.
	\end{align*}
	Furthermore, if $(S_t)_{t\geq 0}$ satisfies an $\alpha$-MLSI and $G$ is finite, then:
	\begin{align*}
		\mathcal{C}_{\operatorname{EA}}(T_{t})\leq \log\lb\sum\limits_{i=1}^mn^2_i\rb+e^{-2\alpha t}\log(|G|)
	\end{align*}
	Moreover, the upper bounds are in the strong converse sense.
\end{theorem}

\begin{proof}
	The proof is similar to the ones of \Cref{thm:boundingquantum,theoclasscapatransf} and \Cref{theoclasscapatransflogo}, and hence is omitted.
\end{proof}

\subsection{Capacities from a modified logarithmic Sobolev inequality}\label{capmlsi}

Similarly to the derivation of decoherence times, one expects to get tighter bounds on the various capacities by directly applying a quantum functional inequality, when the latter is known. As was the case with decoherence times, the decay of the capacities we obtain does not depend on particular properties of the representation at hand and will in general not be tight.
For capacities, the right functional inequality to consider is the modified logarithmic Sobolev inequality (MLSI). To the best of our knowledge, this connection between a MLSI and capacity bounds cannot be found in the literature beyond primitive semigroups~\cite{M_ller_Hermes_2016}, so we establish it here for more general semigroups.
Here, we still assume that $(T_t)_{t\ge 0}$ is a quantum Markov semigroup on $\cB(\cH)$ that is symmetric with respect to the Hilbert Schmidt inner product. Then, instead of using the entropy comparison theorem (\Cref{compar}), one can simply decompose the relative entropy between $T_t(\rho)$, $\rho\in\cD(\cH)$, and any state $\sigma\in \cD(N_{fix})$ as follows:
\begin{lemma}\label{entropydecomposition}
	For any $\rho\in\cD(\cH)$, and any $\sigma\in \cD(N_{fix})$,
	$$D(T_t(\rho)\|\sigma)=D(E_{fix}(\rho)\|\sigma)+D(T_t(\rho)\|E_{fix}(\rho))\,.$$
\end{lemma}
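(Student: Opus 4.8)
This identity is a \emph{Pythagorean theorem} for the relative entropy with respect to the conditional expectation $E_{fix}$, combined with the observation that $E_{fix}$ absorbs $T_t$. The plan is as follows.

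First I would record the key commutation fact $E_{fix}(T_t(\rho))=E_{fix}(\rho)$. Since $N_{fix}$ is exactly the set of fixed points, $T_t$ acts as the identity on $N_{fix}$, so $T_t\circ E_{fix}=E_{fix}$; using that $(T_t)_{t\geq0}$ is selfadjoint for the Hilbert--Schmidt inner product (so that $T_t^\dagger=T_t$ and $E_{fix}^\dagger=E_{fix}$), one also gets $E_{fix}\circ T_t=E_{fix}$, whence $E_{fix}(T_t(\rho))=E_{fix}(\rho)$.

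Next I would expand each relative entropy as $D(\alpha\|\beta)=\tr(\alpha\log\alpha)-\tr(\alpha\log\beta)$. Writing out the three terms and cancelling the common entropy $\tr(T_t(\rho)\log T_t(\rho))$, the claimed identity reduces to
\[
\tr\!\big(T_t(\rho)\,[\log E_{fix}(\rho)-\log\sigma]\big)=\tr\!\big(E_{fix}(\rho)\,[\log E_{fix}(\rho)-\log\sigma]\big).
\]
The heart of the matter is then that $X:=\log E_{fix}(\rho)-\log\sigma$ lies in $N_{fix}$. Indeed $E_{fix}(\rho)\in N_{fix}$ because $E_{fix}$ projects onto $N_{fix}$, and $\sigma\in N_{fix}$ by the hypothesis $\sigma\in\cD(N_{fix})$; since $N_{fix}$ is a von Neumann algebra it is closed under the continuous functional calculus of its selfadjoint elements, so both $\log E_{fix}(\rho)$ and $\log\sigma$, and hence their difference $X$, belong to $N_{fix}$. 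For any $X\in N_{fix}$ the selfadjointness of the conditional expectation gives $\tr(E_{fix}(Y)\,X)=\tr(Y\,E_{fix}(X))=\tr(Y\,X)$; taking $Y=T_t(\rho)$ and invoking the first step $E_{fix}(T_t(\rho))=E_{fix}(\rho)$ yields the displayed equality, completing the proof.

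The only delicate point is the domain of the logarithms: the relative entropies are finite only when the relevant supports are nested, so one should either assume $\rho,\sigma\in\cD_+(\cH)$, making all logarithms genuinely defined, or restrict the functional calculus to the common support, where the same algebraic manipulation goes through. Aside from this standard caveat, the argument is purely the combination of the conditional-expectation trace identity with the fixed-point absorption $E_{fix}\circ T_t=E_{fix}$.
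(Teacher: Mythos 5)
Your proof is correct and follows essentially the same route as the paper's: both rest on the facts that $E_{fix}\circ T_t=E_{fix}$, that $E_{fix}$ is a trace-selfadjoint conditional expectation, and that $\log E_{fix}(\rho)-\log\sigma$ lies in $N_{fix}$ (the paper phrases this as $\ln\sigma=E_{fix}(\ln\sigma)$ for $\sigma\in\cD(N_{fix})$). Your remark about restricting to $\cD_+(\cH)$ or to common supports is a legitimate caveat that the paper's proof also implicitly assumes.
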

\begin{proof}
	\begin{align*}
	&	D(T_t(\rho)\|\sigma)\\
		&=\tr(T_t(\rho)\,(\ln(T_t(\rho))-\ln\sigma))\\
		&=\tr(T_t(\rho)\,(\ln(T_t(\rho))-\ln(E_{fix}(\rho))))\\
		&~~~~~~~~~~~~~~~~~~~~~~~~~~~~~+\tr(T_t(\rho)(\ln(E_{fix}(\rho))-\ln\sigma))\\
		&=D(T_t(\rho)\|E_{fix}(\rho))+\tr(E_{fix}(\rho)(\ln (E_{fix}(\rho))-\ln\sigma))\\
		&=D(T_t(\rho)\|E_{fix}(\rho))+D(E_{fix}(\rho)\|\sigma)\,,
	\end{align*}	
	where in the third line we used that $E_{fix}$ is a conditional expectation with respect to the completely mixed state, so that $E_{fix}=E_{fix}^\dagger$ and for any $\sigma\in \cD(N_{fix})\cap \cD(\cH)_+$, $\ln(\sigma)=E_{fix}(\ln(\sigma))$.
\end{proof}
We recall that, given a faithful quantum Markov semigroup $(T_t=\e^{-t\cL})_{t\ge 0}$, its decoherence-free modified logarithmic Sobolev constant $\alpha_{1}(\cL)$ has been defined in~\cite{BarEID17} as follows:
\begin{align*}
	\alpha_1(\cL):=\inf_{\rho\in\cD_+(\cH)}\frac{\tr(\cL(\rho)(\ln\rho-\ln E_{fix}(\rho)))}{D(\rho\|E_{fix}(\rho))}\,
\end{align*}
(we recall our convention that $\Lcal$ is a positive semi-definite operator). It is the largest constant $\alpha>0$ such that the following decay in relative entropy occurs\footnote{the theory of functional inequalities for the exponential decay of R\' enyi entropies is not well-established beyond the primitive case~\cite{MF16,2018arXiv181000906C} and for $p=+\infty$. We leave this to future work.}:
$$D(T_t(\rho)\|E_{fix}(\rho))\le \e^{-\alpha\,t}D(\rho\|E_{fix}(\rho))\,.$$
The classical logarithmic Sobolev inequality is then defined in an analogous way.
An extension of \Cref{compar} then gives the
\begin{theorem}\label{comparisonentropMLSI}
	Let $(T_t)_{t\ge 0}$ be a transferred QMS from a finite group $G$ on $\Bcal(\Hcal)$, $\sigma\in \Dcal(N_{fix})$ and suppose that $S_t$ has an $\operatorname{MLS}$ constant $\alpha_1(L)>0$.	Then, for any state $\rho\in \Dcal(\Hcal)$ and $t\geq0$:
	\begin{align}
		D(E_{{fix}}(\rho)\|\sigma) 
	&	\leq D(T_{t}(\rho)\|\sigma) \label{eq_comparisonentrop1MLSI}\\
		&\leq D(E_{{fix}}(\rho)\|\sigma)+e^{-\alpha_1(L) t}\log\lb |G|\rb\nonumber
	\end{align}
	
\end{theorem}
\begin{proof}
	It follows from~\Cref{compar} that
	\begin{align*}
		D(T_{t}(\rho)\|\sigma) \leq D(E_{{fix}}(\rho)\|\sigma)+\int_{G}k_t\log(k_t)d\mu_G.
	\end{align*}
	Now note that 
	\begin{align*}
		\int_{G}k_t\log(k_t)d\mu_G=D(\mu_t\|\mu_G),
	\end{align*}
	where $g$ is an arbitrary element of the group and where $d\mu_t=k_t*\delta_g\,d\mu_G$. Remark also that $\mu_t=S_t(\delta_g)$. Thus, as we assumed that $S_t$ satisfies a MLSI, we conclude that
	\begin{align}\label{equ:boundrelativenetropygroup}
		D(\mu_t\|\mu_G)\leq e^{-\alpha_1(L) t}\log(|G|),
	\end{align}
	where in the last step we used the fact that $D(\mu_t\|\mu_G)=\log(|G|)-S(\mu_t)\leq \log(|G|)$ for any $\mu_t$. This yields the claim.
\end{proof}
We remark that the proof also generalizes to compact Lie groups, provided a bound on $D(\mu_{t_0}\|\mu_G)$ for some $t_0>0$ to obtain an analogue of~\eqref{equ:boundrelativenetropygroup} and then apply the relative entropy decay estimate. We are not aware of techniques for bounding this entropy in the Lie group case except once again invoking inequalities like~\eqref{equ:boundhormander1toinfty} and then bounding the relative entropy by the max-relative entropy. Thus, it would be interesting to obtain more fine-tuned bounds on $D(\mu_{t_0}\|\mu_G)$.\\

With this tool at hand, the following result can be proved in a very similar fashion as \Cref{thm:boundingquantumone,thm:boundingquantum,theoclasscapatransf,EAcapbound}:
\begin{theorem}\label{theocapmlsi}
	Let $(T_t=\e^{-t\cL})_{t\ge 0}$ be a quantum Markov semigroup on $\cB(\CC^d)$ that is symmetric with respect to the Hilbert Schmidt inner product. Then, for any $t\ge 0$:
	\begin{align*}
		&	\mathcal{Q}(T_t),\mathcal{P}(T_t)\le \max_k\log n_k+2\e^{-\alpha_1(\cL\otimes \id)t}\log(d)\\
		&\mathcal{C}_{\operatorname{EA}}(T_t)\le \log\left(\sum_kn_k^2\right)+2\e^{-\alpha_1(\cL\otimes \id)t}\log(d)\,.
	\end{align*}	
	
\end{theorem}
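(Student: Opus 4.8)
The plan is to treat all three capacities through a single scheme: invoke the relevant strong-converse upper bound from the literature — each of which is an information radius of the relative entropy of $(T_t\otimes\id)$ — and then split that relative entropy with the decomposition of \Cref{entropydecomposition}, controlling the ``non-fixed'' part by the modified logarithmic Sobolev inequality. For the quantum and private capacities I would start from $\mathcal{Q}(T_t)\le\mathcal{P}(T_t)\le E_R(T_t)$, where $E_R(T_t)=\sup_\rho\inf_{\sigma\in\operatorname{SEP}}D\left(T_t\otimes\id(\rho)\,\|\,\sigma\right)$, and for the entanglement-assisted capacity from $\mathcal{C}_{\operatorname{EA}}(T_t)\le\chi_{EA}(T_t)$ as in \eqref{equ:radiusentassited}. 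The key structural point is that $(T_t\otimes\id)_{t\ge0}$ has generator $\cL\otimes\id$, whose fixed-point conditional expectation is $\mathcal{E}:=E_{fix}\otimes\id$; this is why the relevant constant is $\alpha_1(\cL\otimes\id)$ rather than $\alpha_1(\cL)$.

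As in the proofs of \Cref{thm:boundingquantum} and \Cref{EAcapbound}, I would first restrict the infimum over reference states $\sigma$ to those lying in $\cD(\Nfix\otimes\cB(\cH))$, equivalently those fixed by $\mathcal{E}$; restricting the minimization set can only raise the infimum, so this is a valid upper bound, and every such $\sigma$ is admissible in \Cref{entropydecomposition}. For any such $\sigma$ and any input $\rho$ that lemma yields
\[ D\left(T_t\otimes\id(\rho)\,\|\,\sigma\right)=D\left(\mathcal{E}(\rho)\,\|\,\sigma\right)+D\left(T_t\otimes\id(\rho)\,\|\,\mathcal{E}(\rho)\right). \]
The second summand is independent of $\sigma$, so it factors out of the infimum, and by the definition of the modified logarithmic Sobolev constant of $\cL\otimes\id$ together with the elementary identity $D(\rho\|\mathcal{E}(\rho))=S(\mathcal{E}(\rho))-S(\rho)\le\log(d^2)=2\log d$ I would bound it uniformly by $2\,\e^{-\alpha_1(\cL\otimes\id)\,t}\log d$.

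It then remains to identify the leftover term $\sup_\rho\inf_\sigma D(\mathcal{E}(\rho)\|\sigma)$ — and its $\chi_{EA}$-analogue — with the capacity of the conditional expectation $E_{fix}$ itself. This is precisely the computation already performed in \Cref{thm:boundingquantum}, \Cref{thm:boundingquantumone} and \Cref{EAcapbound}: via the data-processing inequality and the fact that $\mathcal{E}$ is a projection, the infimum over $\cD(\Nfix)$ coincides with the infimum over all states, and the resulting quantity equals $\max_k\log n_k$ in the quantum/private case and $\log\left(\sum_k n_k^2\right)$ in the entanglement-assisted case by the closed formulas of \cite{Gao_2018tro}. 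Assembling the two contributions gives the stated inequalities, which — being upper bounds on strong-converse quantities — remain strong-converse bounds.

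The main obstacle, and the only place needing genuine care, is the entanglement-assisted case, where the reference state $\sigma_A\otimes\rho_B$ carries a second marginal $\rho_B$ that a priori depends on the input $\outerp{\psi}{\psi}$. Here I would use that $T_t$ is trace-preserving on the $A$-system (it is unital and Hilbert--Schmidt self-adjoint), so that $\rho_B=\tr_A[\outerp{\psi}{\psi}]$ is the very same marginal for $T_t\otimes\id$ and for $\mathcal{E}$; this is exactly what makes $\sigma_A\otimes\rho_B$ a fixed point of $\mathcal{E}$ whenever $\sigma_A\in\cD(\Nfix)$, hence admissible in \Cref{entropydecomposition}, and what lets the modified-log-Sobolev term decouple cleanly. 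Everything else — the entropy bound $2\log d$, the restriction step, and the limiting capacity formulas — is routine given the earlier results.
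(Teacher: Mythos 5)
Your proposal is correct and follows essentially the same route as the paper: restrict the infimum to reference states fixed by $E_{fix}\otimes\id$, split the relative entropy with \Cref{entropydecomposition} applied to $\cL\otimes\id$, contract the second summand via $\alpha_1(\cL\otimes\id)$ and bound the residual $D(\rho\|E_{fix}\otimes\id(\rho))$ by $2\log d$, then identify the leftover term with the capacity of the conditional expectation as in \Cref{thm:boundingquantum} and \Cref{EAcapbound}. The one place you are more careful than the paper's sketch is in starting from $E_R$ (and $\chi_{EA}$), i.e.\ the $p=1$ quantities of \Cref{thm:boundingquantumone}, rather than the $D_\infty$-based $E_{\max}$ of \Cref{thm:boundingquantum} to which the paper nominally defers --- a necessary choice, since $\alpha_1$ only controls the ordinary relative entropy.
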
	

\begin{proof}
	The proof proceeds completely analogously to the one of \Cref{thm:boundingquantum}. The only difference lies in the use of \Cref{entropydecomposition} instead of \Cref{compar} to bound the relative entropy.
	In this case, we will obtain a remaining term of the form $D(\rho\|E_{fix}\otimes\id(\rho))$. But another application of Lemma~\ref{entropydecomposition} shows that this term is bounded by $2\log(d)$.	
\end{proof}	
Note that we cannot apply an estimate on $\alpha_1\lb \cL\rb$ directly to obtain capacity bounds for the classical capacity due to the need of regularization. In order to do so, we need to show that the following complete version of the modified logarithmic Sobolev inequality holds (see \cite{BarEID17,gao2018fisher}): define the \textit{complete modified logarithmic Sobolev} (cMLS) \textit{constant}
\begin{align*}
&	\alpha_c(\cL):=\inf_{k\in\mathbb{N}}\inf_{\rho\in\cD_+(\cH\otimes\mathbb{C}^k)}\\
	&\frac{\tr((\cL\otimes\id_{\cB(\CC^k)})(\rho)(\ln\rho-\ln((E_{fix}\otimes\id_{\cB(\CC^k)})(\rho))))}{D(\rho\|(E_{fix}\otimes\id_{\cB(\CC^k)})(\rho))}\,.
\end{align*}
The cMLS constant $\alpha_c$ is known to tensorize \cite{BarEID17,gao2018fisher}: for any two QMS $(T_t=\e^{-\cL t})_{t\ge 0}$ and $(Q_t=\e^{-\cK t})_{t\ge 0}$
\begin{align*}
	\alpha_c(\cL\otimes\id+\id\otimes \cK)\ge \min\{\alpha_c(\cL),\,\alpha_c(\cK)\}\,.
\end{align*}	
By a simple look at the proof of the tensorization of $\alpha_1(\cL)$ for the generalized depolarizing semigroup \cite{beigi2018quantum,capel2018quantum} one can derive the positivity of its cMLS constant. This can be readily extended to the case of a simple semigroup of generator of the form $\cL=\id-E_{fix}$:
\begin{lemma}
	For any subalgebra $N$ of $\cB(\CC^d)$ with conditional expectation $E_N$ associated to the completely mixed state, the simple semigroup $(T^{N}_t)_{t\ge 0}$ of generator $\cL^N=\id-E_{N}$ of fix point algebra $N$ satisfies $$\alpha_c(\cL^N)\ge 1\,.$$
\end{lemma}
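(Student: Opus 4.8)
The plan is to prove the inequality pointwise in the amplified Sobolev quotient: for every $k\in\NN$ and every full-rank state $\rho\in\cD_+(\CC^d\otimes\CC^k)$ I will show that the entropy-production numerator in the definition of $\alpha_c$ dominates the denominator $D(\rho\,\|\,(E_N\otimes\id)(\rho))$. Since the resulting bound is uniform in $k$, taking the double infimum immediately gives $\alpha_c(\cL^N)\ge1$. The first step is to record that amplifying the simple generator preserves its shape: writing $E\deq E_N\otimes\id_{\cB(\CC^k)}$, one has $\cL^N\otimes\id=\id-E$, and $E$ is itself the trace-preserving conditional expectation onto the fixed-point algebra $N\otimes\cB(\CC^k)$ associated with the completely mixed state. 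Because $E$ preserves the trace and satisfies $E(\rho)\ge\lambda_{\min}(\rho)\,I>0$ for full-rank $\rho$, the operator $E(\rho)$ is again a full-rank state, so every logarithm appearing below is well defined and the fixed-point projection in the numerator is exactly $E$.

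The key step is a purely algebraic regrouping of the numerator. Expanding
\[
\tr\!\big((\id-E)(\rho)\,(\ln\rho-\ln E(\rho))\big)
=\tr(\rho\ln\rho)-\tr(\rho\ln E(\rho))-\tr(E(\rho)\ln\rho)+\tr(E(\rho)\ln E(\rho))
\]
and collecting the first and last pairs of terms separately, I recognise each bracket as a relative entropy, obtaining the identity
\[
\tr\!\big((\id-E)(\rho)\,(\ln\rho-\ln E(\rho))\big)=D(\rho\,\|\,E(\rho))+D(E(\rho)\,\|\,\rho).
\]
Since relative entropy is non-negative, the surplus term $D(E(\rho)\,\|\,\rho)\ge0$, so the numerator is at least the denominator $D(\rho\,\|\,E(\rho))$ and the Sobolev quotient is at least $1$. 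As this holds for all $\rho\in\cD_+(\CC^d\otimes\CC^k)$ and all $k$, the definition of $\alpha_c$ yields $\alpha_c(\cL^N)\ge1$.

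The computation itself is elementary; the only points requiring genuine care are (i) verifying that $E(\rho)$ is full-rank, so that $D(E(\rho)\,\|\,\rho)$ is a finite relative entropy rather than $+\infty$, which follows from the operator bound $E(\rho)\ge\lambda_{\min}(\rho)\,I$, and (ii) noting that the argument is automatically dimension-free because $E_N\otimes\id$ is a conditional expectation of the same simple type as $E_N$, so no separate tensorization lemma is invoked at this stage. In this sense the main conceptual content is precisely the identification of the extra term $D(E(\rho)\,\|\,\rho)$ in the entropy production and the use of its non-negativity; everything else is a short regrouping.
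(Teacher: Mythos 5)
Your proof is correct and follows essentially the same route as the paper: both expand the entropy-production numerator into the symmetric sum $D(\rho\|E(\rho))+D(E(\rho)\|\rho)$ and drop the non-negative reverse relative entropy term. The additional remarks on full-rankness of $E(\rho)$ are fine but not needed beyond what the paper already implicitly assumes.
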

\begin{proof}
	For sake of clarity, denote by $\id_k$ the identity map on $\cB(\CC^k)$. Then, for any $\rho\in\cD_+(\cH\otimes\CC^k)$,
	\begin{align*}
	&	\tr((\id-E_N)\otimes\id_{k}(\rho)(\ln\rho-\ln (E_{N}\otimes\id_k)(\rho)))
		\\
		&~~=D(\rho\|(E_N\otimes\id_k)(\rho))+D((E_N\otimes\id_k)(\rho)\|\rho)\\
		&~~\ge D(\rho\|(E_N\otimes \id_k)(\rho))\,.
	\end{align*}	
\end{proof}

The link to the classical capacity of the QMS is made in the following theorem.

\begin{theorem}\label{class}
	Let $(T_t=\e^{-t\cL})_{t\ge 0}$ be a symmetric quantum Markov semigroup on $\cB(\CC^d)$ with positive $\operatorname{cMLS}$ constant $\alpha_c(\cL)$. Then, for any $t\ge 0$:
	$$\mathcal{C}(T_{t})\le \e^{-\alpha_c(\cL)t}\log d+\log\left(  \sum_{i=1}^m\,n_i  \right)\,.$$
\end{theorem}	

\begin{proof}
	We use \Cref{equ:inforadiusclassical} for $p=1$, which is equal to the classical capacity of a quantum channel (see \cite{strongconvrenyi}):
	\begin{align*}
	&	\mathcal{C}(T_t)\\
	&~~~\le\lim_{n\to\infty}\inf_{\sigma\in\cD(\cH_A^{\otimes n})} \sup_{\rho\in\cD(\cH_A^{\otimes n})}\,\,\frac{1}{n}\,D(T_t^{\otimes n}(\rho)\|\sigma)\\
		&~~~\le \lim_{n\to\infty} \inf_{\sigma\in\cD(N_{fix}(\cL^{(n)}))}\sup_{\rho\in\cD(\cH_A^{\otimes n})}\,\frac{1}{n}D(T_t^{\otimes n}(\rho)\|\sigma)\\
		&~~~\le \e^{-\alpha_c(\cL)\,t }  \lim_{n\to\infty}\frac{1}{n}\sup_{\rho\in\cD(\cH_A^{\otimes n})}D(\rho\|E_{fix}^{\otimes n}(\rho))   \\
		&~~~+\lim_{n\to\infty }\frac{1}{n}\inf_{\sigma\in \cD(N_{fix}(\cL^{(n)}))}\sup_{\rho\in\cD(\cH_A^{\otimes n})}D(E_{fix}^{\otimes n}(\rho)\|   \sigma)\,,
	\end{align*}	
	where we used \Cref{entropydecomposition} as well as the definition of the cMLS constant in the last line. The rest of the proof follows the same lines as for the one of \Cref{theoclasscapatransf} and the one of \Cref{theocapmlsi}.
	
\end{proof}

\begin{remark}
	Note that, although we expect that this method will yield tighter bounds for a given semigroup, it does not yield strong converse bounds for the capacities, neither does it provide bounds on the private two-way capacities.
	In order to get a bound on the two-ways private capacities, or a strong converse bound on the classical capacity of symmetric channels, we would need to extend the theory of complete logarithmic Sobolev inequalities to sandwiched R\'{e}nyi divergences. This falls outside the scope of this paper and will be done elsewhere. One exception is the two-way quantum capacity, as the results of~\cite{Berta_2018} show that the relative entropy of entanglement gives a strong converse for this capacity.
	
\end{remark}	

As mentioned at the beginning of the section, another way to upper bound capacities is to combine the upper bounds found in \Cref{compar} with a modified logarithmic Sobolev constant for the classical semigroup of kernel $(k_t)_{t\ge 0}$. This method first has the advantage of generally providing better bounds than the estimates based on $t(\eps)$. Moreover, it is easier to use in practice as compared to the quantum MLSI method, due to the relative lack of maturity of the latter field.
\begin{proposition}\label{theoclasscapatransflogo}
	Let $(T_t=\e^{-t\L})_{t\ge 0}$ be a transferred QMS from a finite group $G$ such that $(S_t)_{t\geq 0}$ with $\operatorname{MLS}$ constant $\alpha_1(L)>0$. Then:
	\begin{align*}
		\log\lb \sum\limits_{i=1}^mn_i\rb\leq \mathcal{C}(T_{t})\leq \log\lb \sum\limits_{i=1}^mn_i\rb+e^{-\alpha_1(L) t}\log(|G|)\,.
	\end{align*}
\end{proposition}
\begin{proof}
	The proof is essentially the same as the one of~\Cref{theoclasscapatransf}, but instead of bounding the relative entropy of $k_t$ using $t(\epsilon)$, we apply \Cref{comparisonentropMLSI} instead. Moreover, note that the MLSI of classical semigroups tensorizes~\cite{Bobkov_2003}.
\end{proof}
\begin{remark}
	The method of \Cref{theoclasscapatransflogo} can be extended to other capacities in a similar way as what we did before from the existence of a MLSI constant directly for the quantum Markov semigroup $(T_t)_{t\ge 0}$. Since the method is identical to those for the case of the classical capacity, we do not pursue this path here.
\end{remark}

\subsection{Resource theories and entropic inequalities}
In a related setting, it is also possible to apply our techniques to obtain estimates for different \emph{relative entropies of a resource}~\cite{Brand_o_2015}. In the framework of resource theories, one is usually given a sequence of sets of free states $\mathcal{F}_n\subset\mathcal{D}\lb \mathcal{H}^{\otimes n}\rb$, which are supposed to model those states that do not provide any resources.
We will usually denote $\mathcal{F}_1$ by $\mathcal{F}$.
One is also given a set of free operations, which are those quantum channels that cannot convert free states into resource states. The relative entropy of a resource $D_\mathcal{F}$ is then defined as
\begin{align*}
	D_\mathcal{F}\lb \rho\rb=\inf\limits_{\sigma \in \mathcal{F}}D\lb \rho\|\sigma\rb.
\end{align*}
One can then show that the regularized version of $D_\mathcal{F}\lb \rho\rb$, i.e. $\lim n^{-1}D_{\mathcal{F}_n}   \lb \rho^{\otimes n}\rb$, quantifies the optimal conversion rate of one resource to another whenever the set of free states and operations satisfies some natural properties.
We refer to~[Theorem 1]\cite{Brand_o_2015} for more details.

The connection to our techniques arises from the fact that for many prominent resource theories, like the resource theory of coherence~\cite{Streltsov_2017} or the resource theory of asymmetry~\cite{Ahmadi_2013}, the set of free states has the additional property of being naturally related to an algebra $N$. For instance, in the case of coherence, the free states are described as those states that are diagonal in a given basis and, thus, are naturally contained in the algebra of diagonal operators in a certain basis. More precisely, whenever $E_{fix}(\mathcal{D}\lb \mathcal{H}\rb)\subset \mathcal{F}$ we can use our techniques to bound the relative entropy of a resource and its regularized version for outputs of a given semigroup:
\begin{proposition}\label{theoresource}
	Let $\mathcal{F}$ be a set of free states for a resource theory and $(T_t)_{t\geq 0}$ be a transferred QMS from a finite or Lie group $G$ on $\mathcal{B}(\mathcal{H})$ such that $E_{fix}^{\otimes n}(\mathcal{D}\lb \mathcal{H^{\otimes n}}\rb)\subset \mathcal{F}_n$ for all $n$, where $\mathcal{F}_n$ denotes the set of free states corresponding to $\cH^{\otimes n}$. Then for any state $\rho\in\mathcal{D}\lb \mathcal{H^{\otimes n}}\rb$ we have
	\begin{align*}
		\lim\limits_{n\to\infty}\frac{1}{n}D_{\mathcal{F}_n}\lb \lb T_t(\rho)\rb^{\otimes n}\rb\leq \int_G k_t\log(k_t)d\mu_G.
	\end{align*}
\end{proposition}
\begin{proof}
	Note that it follows from~\Cref{compar} that for any $n$ and $\sigma\in E_{fix}^{\otimes n}(\mathcal{D}\lb \mathcal{H}^{\otimes  n}\rb)\subset \mathcal{F}_n$:
	\begin{align*}
	&	D((T_t(\rho))^{\otimes n}\|\sigma)\\
		&~~~~\leq D(E_{fix}^{\otimes n}(\rho)^{\otimes n}\|\sigma)+\int_G k_t^{\otimes n}\log(k_t^{\otimes n})d\mu_G.
	\end{align*}
	Thus, by picking $\sigma=E_{fix}^{\otimes n}(\rho^{\otimes n})\in \mathcal{F}_n$ and by the additivity of the relative entropy we conclude that
	\begin{align*}
		\inf\limits_{\sigma \in \mathcal{F}_n}D(T_t(\rho)^{\otimes n}\|\sigma)\leq n\int_G k_t\log(k_t)d\mu_G,
	\end{align*}
	from which the claim follows.
\end{proof}
Thus, we see that, as in the case of the capacities, it is possible to upper bound the regularized maximal relative entropy of any output of a transferred semigroup by computing or estimating the purely classical relative entropy between the kernel and the Haar measure on the group.

The upper bounds obtained in \Cref{compar} can also be similarly used to upper bound the amortized channel relative entropy \cite{berta2018amortized} defined for any two channels $\mathcal{E},\mathcal{F}:\cD(\cH_A)\to \cD(\cH_B)$ by
\begin{align*}
	&D^A(\mathcal{E}\|\mathcal{F})\\
	&~=\sup_{\phi_{RA},\psi_{RA}}D(( \operatorname{id}_R\otimes \mathcal{E})(\phi_{RA})\|( \operatorname{id}_R\otimes \mathcal{F})(\psi_{RA}))\\
	&~~~~~~~~~~~~~~~~~~~~~~~~~~~~~~~~~~~~~~~~~~-D(\phi_{RA}\|\psi_{RA})\,,
\end{align*}
where $R$ denotes a reference system of arbitrary large dimension. It was shown in \cite{fang2019chain} that
\begin{align}\label{eqamortized}
	D^A(\mathcal{E}\|\mathcal{F})=\lim_{n\to\infty}\,\frac{1}{n}\,D(\mathcal{E}^{\otimes n}\|\mathcal{F}^{\otimes n})\,,
\end{align}
where the channel relative entropy is defined as
\begin{align*}
	D(\mathcal{E}\|\mathcal{F}):=\sup_{\phi_{RA}}D(( \operatorname{id}_R\otimes \mathcal{E})(\phi_{RA})\|( \operatorname{id}_R\otimes \mathcal{F})(\phi_{RA}))\,.
\end{align*}
Upper bounding the amortized channel relative entropy in terms of a single-letter entropic expression is not an easy task in general and it has many applications in the context of  channel discrimination~\cite{berta2018amortized}. The particular case of transferred semigroups is tractable, as they fall under the category of environment-parametrized channels of~\cite[
Proposition 33]{berta2018amortized}. There, the authors show a single-letter expression bound the amortized channel relative entropy.
In the next corollary, we reprove their upper bound in the case when $\mathcal{E}:=T_t$ is the semigroup $T_t$ at time $t$, and $\mathcal{F}:=E_{fix}$ is the conditional expectation onto the fixed point algebra of $(T_t)_{t\ge 0}$, while also obtaining a stronger bound for all R\' enyi entropies:
\begin{corollary}[Chain rule for $p$-R\'{e}nyi divergences]\label{cor2}
	Let $(T_t)_{t\ge 0}$ be a transferred QMS from a finite or Lie group $G$ on $\cB(\cH)$, with corresponding kernel $(k_t)_{t\ge 0}$.	Then for all $t\ge 0$, the following holds
	\begin{align}\label{inequ1}
		D^A(T_t\|E_{fix})\le \int\,k_t\ln k_t\,d\mu_G\,.
	\end{align}
	Moreover, for any $p\ge 1$ and any reference system $R$ the following entropic chain rule holds: for all $\rho,\sigma\in\cD(\mathcal{H}\otimes \mathcal{H}_R)$,
	\begin{align}
		D_p((\operatorname{id}_R\otimes T_t)&(\rho)\|(\operatorname{id}_R \otimes E_{fix})(\sigma) )\nonumber\\
		&~~~~~\le D_p(\rho\|\sigma)+D_p(\nu_t\|\mu_G)\,\label{ineq2}
	\end{align}
	where $d\nu_t=k_t d\mu_G$.
\end{corollary}

\begin{proof}
	For each $n\in\mathbb{N}$, we use (\ref{transfer}) for the transferred semigroup $(T_t^{\otimes n}\otimes \operatorname{id}_{R_n})_{t\ge 0}$ and its corresponding conditional expectation $E_{fix}^{\otimes n}\otimes \operatorname{id}_{R_n}$, where $R_n$ is an arbitrary reference system, so that for any $\rho\in \cD(\cH^{\otimes n}\otimes \operatorname{id}_{R_n})$:
	\begin{align*}
	&	D\big( (T_t^{\otimes n}\otimes \operatorname{id}_{R_n})(\rho)\|(E_{fix}^{\otimes n}\otimes \operatorname{id}_{R_n})(\rho))\\
	&	~~~~~~~~~~~\le\int\,k_t^{\otimes n}\ln\,k_t^{\otimes n}d\mu_G=n\,\int\,k_t\ln\,k_t\,d\mu_G\,.
	\end{align*}
	Inequality (\ref{inequ1}) follows from \Cref{eqamortized}. Inequality (\ref{ineq2}) is a simple consequence of (\ref{ineqfundam}) and the data processing inequality for $E_{fix}$. The statement for other values of $p$ can be proved in a similar way.
\end{proof}
\begin{remark}
	In \cite{fang2019chain}, the following chain rule for relative entropies was derived: given two quantum channels $\mathcal{E},\mathcal{F}$, and for any $\rho,\sigma\in\cD(\cH_R\otimes \cH)$
	\begin{align}
		D((\operatorname{id}_R\otimes \mathcal{E})(\rho)\|(\operatorname{id}_R \otimes \mathcal{F})(\sigma) )\le D(\rho\|\sigma)+D^A(\mathcal{E}\|\mathcal{F})\,,
	\end{align}	
	leaving as an open problem whether such an inequality still holds for $p$-divergences, where $D^A_p(\mathcal{E}\|\mathcal{F})\equiv \lim_{n\to\infty}\,\frac{1}{n}\,D_p(\mathcal{E}^{\otimes n}\|\mathcal{F}^{\otimes n})$. In view of the above Theorem, we found that a similar chain rule holds for $p$-divergences in the restricted case when $\mathcal{E}\equiv T_t$ and $\mathcal{F}\equiv E_{fix}$, up to a weakening of the bound, since
	$$D^A_p(T_t\|E_{fix})\le D_p(\nu_t\|\mu_G).$$
	Similar bounds can be derived for the TRO channels defined in \cite{Gao_2018tro}.
\end{remark}

\subsection{Examples}
It is straightforward to translate the estimates in the last sections to obtain estimates for various capacities of transferred semigroups. We will now illustrate these bounds for three noise models of practical relevance: collective decoherence, depolarizing noise and dephasing noise. As far as we know, these are the first estimates available for capacities of collectively decohering quantum channels. On the other hand, the capacities of depolarizing and dephasing channels are widely studied and we use these examples to benchmark the quality of our bounds. We observe that our bounds show the right exponential decay of the capacities for large time, but are not able to capture it for small times.
The depolarizing and dephasing semigroups are of the simple form discussed before, that is, a difference of a conditional expectation and the identity. Thus, as expected, we may obtain better bounds by a direct application of a modified logarithmic Sobolev inequality.
We will also discuss quantum channels stemming from representations of finite groups that are not of simple form, such as random transposition channels. As mentioned before, it is straightforward to turn any mixing time for a reversible chain on a group into a capacity bound and, thus, this list of examples is by no means exhaustive.
\paragraph{Collective decoherence:} It follows from the results in section~\ref{examplessec} that both weak collective decoherent and strong collective decoherent semigroups will reach their limiting capacity up to an additive error $\epsilon>0$ in time $\mathcal{O}\lb \log\lb\epsilon^{-1}\rb\rb$, showing that encoding into the decoherence-free subspaces of these channels is essentially optimal.

\paragraph{Depolarizing noise:} As discussed in section~\ref{examplessec}, the depolarizing channel
\begin{align*}
	T_{t}^{\textrm{dep}}(\rho)=(1-e^{-t})\frac{I_{\CC^n}}{n}+e^{-t}\rho.
\end{align*}
corresponds to transferred channel we obtain by transfering the uniform random walk on $\ZZ_n\times\ZZ_n$ via the projective representation given by the discrete Weyl matrices. It then follows from the estimates in section~\ref{finitegroups} that the following upper bound holds for the two way quantum capacity of the depolarizing channel (similar estimates would also follow from transfering the heat semigroup on the unitary group with the natural representation):
\begin{align*}
	&\mathcal{Q}_{\leftrightarrow}\lb T^{\textrm{dep}}_{t}\rb\\
	&\leq \frac{\textrm{exp}\lb -\frac{n^2-1}{n^2}t+\frac{n^2-1}{2n^2-4}\log(n^2-1)\log\log(n^2)+1\rb}{2}\,.
\end{align*}
Moreover, we may estimate when the depolarizing channel becomes entanglement breaking and, thus, when the quantum capacity becomes $0$. The estimate we obtain is
\begin{align*}
&	t_{\operatorname{EB}}\lb T^{\textrm{dep}}_{t}\rb\\
	&~~~~~\leq \frac{n^2}{n^2-1}\lb1+\log(2n)\rb+\frac{n^2}{2(n^2-2)}\log\log(n^2)\,.
\end{align*}
To the best of our knowledge, the best available bound on the two way quantum capacity of these channels is the one in~\cite{Pirandola2017}, where they show that
\begin{align*}
	\mathcal{Q}_{\leftrightarrow}\lb T_{t}^{\textrm{dep}}\rb\leq\log(n)-H_2(p_{n,t})-p_{n,t}\log(n-1)
\end{align*}
with $p_{n,t}=\frac{n^2-1}{n^2}(1-e^{-t})$ and $H_2$ the binary entropy for $t\leq\log\lb 1-\frac{n}{n+1}\rb$ and 0 else.
\begin{center}
	\begin{minipage}[t]{0.5\textwidth}
		\includegraphics[width=\textwidth]{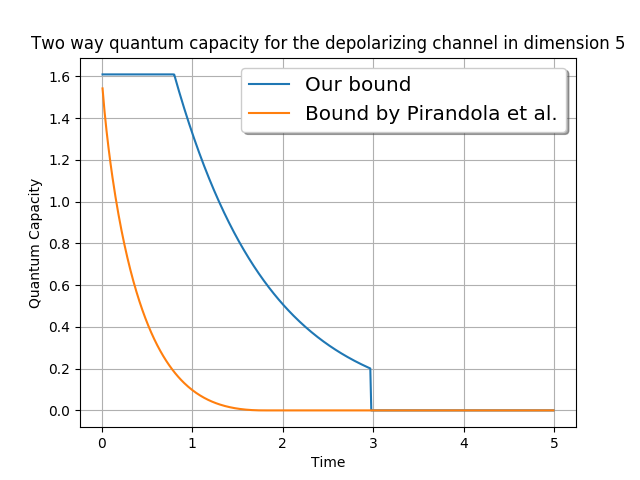}
	\end{minipage}
\end{center}

\paragraph{Dephasing noise:} Again, as discussed in section~\ref{examplessec}, the dephasing semigroup $(T_{t}^{\textrm{deph}})_{t\ge 0}$
corresponds to the transferred channel we obtain by transfering the uniform random walk on $\ZZ_n$. It then follows from the estimates in section~\ref{finitegroups} that the following upper bound holds for the two way quantum capacity:
\begin{align*}
&	\mathcal{Q}_{\leftrightarrow}\lb T_{t}^{\textrm{deph}}\rb\\
&\leq \log(n) \textrm{exp}\lb -t\frac{n-1}{n}-\frac{n-1}{2n-4}\log(n-1)\log\log(n)\rb\,.
\end{align*}
This capacity was computed in~\cite{Pirandola2017}, where they show that
\begin{align*}
	&\mathcal{Q}_{\leftrightarrow}\lb T_{t}^{\textrm{deph}}\rb\\
	&=\log(n)+\lb \frac{1-e^{-t}}{n}+e^{-t}\rb\log\lb \frac{1-e^{-t}}{n}+e^{-t}\rb\\
	&~~~~~~~~~~~~~~~~~~~~~~~~~~~~~~~~~~+\lb1-e^{-t}\rb\log\lb\frac{1-e^{-t}}{n}\rb.
\end{align*}
\begin{center}
	\begin{minipage}[t]{0.5\textwidth}
		\includegraphics[width=\textwidth]{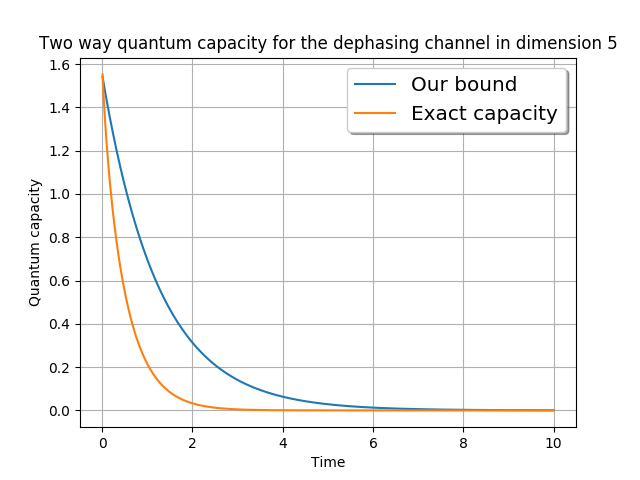}
	\end{minipage}
	
\end{center}
We see that the bounds we obtain using our methods are close to the state of the art, specialized bounds, at large times. Moreover,the slope of the decay is close to the correct one, except at small times.

Again using a MLSI inequality argument, we obtain that the quantum capacity of the dephasing channel is bounded by
\begin{align*}
	\mathcal{Q}\lb T_{t}^{\textrm{deph}}\rb\leq 2e^{-t}\log(n).
\end{align*}

\paragraph{Random SWAP gate:} Consider the natural representation of the permutation semigroup $S_n$
acting on $\lb\mathbb{C}^d\rb^{\otimes n}$ with and the semigroup with on $S_n$ with generator
$L^{RT}(f)(\om)=\frac{1}{n}\sum_{ij} (f(\om)-f(\si^{ij}\om))$.  The transferred semigroup acting on $\lb\mathbb{C}^d\rb^{\otimes n}$ is  then given by the random transposition semigroup
\[ \L^{RT}(\rho) \lel \frac{1}{n} \sum_{ij} (\rho-F_{ij}\rho F_{ij})\pl ,\]
where    $F_{ij}(e\ten f)=f\ten e$  acts between the different registers $i$ and $j$ of the tensor product. It then follows from the results listed in~\Cref{sect_SWAP}
and our results on capacities that for times $t\gl \log n + \frac{1-\ln \eps}{2}$ various capacities of this quantum channel are at most $\eps$  away from their limiting capacity. Similarly, we can estimate decoherence for the nearest neighbour interaction
\[      \L^{NN}(\rho) \lel  \sum_{j=1}^n (\rho-F_{j(j+1)}\rho F_{j(j+1)})\pl .\]
It again follows from the results in~\Cref{sect_SWAP} that for $t\gl \frac{5}{32\pi^2}(2+\ln n-\eps)$ the quantum channel $e^{-t\L^{NN}}$ is $\eps$ away from its various limiting capacities.

It is also possible to bound the capacities through the MLSI constant given in \Cref{transpomodelMLSI} together with \Cref{theoclasscapatransflogo}. This way we get that, for instance, the classical capacity considered here is at most $\e^{-t}\log{n!}\sim \e^{-t}n\log n$ away from the limiting capacity. However, in this case we know \cite{Zhao} that $\al_c(\L^{RT})\gl 1$ and hence \Cref{class} shows that for  $\eps=e^{-t}\log d^n=e^{-t}n\log d$, the channel is $\eps$ away from its limiting classical capacity. This is better than both bounds.

Similarly, for the case of nearest-neighbor swaps, the results listed in~\Cref{sect_SWAP} and for instance \Cref{theoclasscapatransf} show that for times $t\sim cn^2(\log(n)-\ln \eps)$, the capacities considered here  quantum channel are at most $\eps$ away from their limiting capacity.

\section{Functional inequalities and estimates for decoherence times: beyond transference}\label{sect3}

The different applications of transference so far have been based on the decoherence time of the transferred QMS. We recall that for a general QMS (not necessarily selfadjoint), the decoherence time is defined for any $\eps>0$ as
\begin{align}\label{mixtimeTtbis}
	&t_{\operatorname{deco}}(\eps)\\
	&:=\inf\{t\ge 0;\,\|T^\dagger_t(\rho)-E^\dagger_{fix}(\rho)\|_1\le \eps~~\forall\rho\in\mathcal{D}(\mathcal{H})\}\,.\nonumber
\end{align}
We also recall the definition of the \textit{mixing time}\footnote{Remark the difference in the normalization of the norms in both definitions. In the quantum case, density matrices are normalized with respect to the unnormalized trace whereas in the classical case, we look at the evolution of states normalized with respect to the probability distribution $\mu_G$.} of a classical primitive Markov process $(S_t)_{t\ge 0}$
\begin{align}\label{mixtimeStbis}
	&t_{\operatorname{mix}}(\eps) \\
	&= \inf\{ t\geq0\,|\, \|S_t(f)-\EE_{\mu_G}(f)\|_{L_1(\mu_G)}\,\le\, \eps\,\,\nonumber\\
	&~~~~~~~~~~~~~~~~~~~~~~~~~~~~~~~~~~~~~~~\forall f\ge 0,\,\EE_{\mu_G}[f]=1\} \,.\nonumber
\end{align}

In this section, we study some aspects of the theory of decoherence time that goes beyond transference. In this matter, it appears that the most relevant quantity is a regularized form of the decoherence time that we call the \emph{complete decoherence time} (cb-decoherence time) and defined as
\begin{align}\label{cbdecotimeTt}
&	t_{\operatorname{deco}}^{cb}(\eps)\\
	&:=\inf\{t\ge 0;\,\|T^\dagger_t-E^\dagger_{fix}\,:\,L_1(\tr)\to L_1(\tr)\,\|_{cb}\le \eps\}\,.\nonumber
\end{align}
Here $L_1(\Tr)$ denotes the $L_1$ space defines with respect to the unnormalized trace. Remark that the only difference with \Cref{mixtimeTtbis} is the choice of the completely bounded norm instead of the usual one. We thus have the trivial bound:
\[t_{\operatorname{deco}}(\eps)\leq t_{\operatorname{deco}}^{cb}(\eps)\,,\]
which shows that the cb-decoherence time controls the usual decoherence time. Remark that for classical semigroups, both definitions coincide thanks to \Cref{lem_CB_commutative} and therefore one can overlook this notion of decoherence time for transferred QMS.\\

In the application to quantum capacities , it will yet be an other possible definition of the decoherence time that will be useful. We thus define the \emph{ultracontractive (complete) decoherence time} (UC-cb decoherence time for short) as 
\begin{align}\label{def_UC_cb_decotime}
	&t_{1,\infty}^{cb}(\eps):=\inf\{t\ge 0;\,\\
	&~~~~\|T^\dagger_t-E^\dagger_{fix}\,:\,L_1(\tr)\to L_1^\infty(\Nfix\subset\cB(\cH))\,\|_{cb}\le \eps\}\,.\nonumber
\end{align}
The term ``ultracontractive'' comes from the associated functional inequality that we define and study in \Cref{sect33}. It is also closely related to an other functional inequality, namely hypercontractivity, that we study in \Cref{appendix_HC}. Finally it also provides an upper bound on the decoherence time by the ordering of the $L_p$ norms.

One can generalize these definitions to include general amalgamated $L_p$ spaces for $1\leq p\leq q\leq \infty$:
\begin{align}\label{def_pq_cb_decotime}
	&t_{p,q}^{cb}(\eps):=\inf\{t\ge 0;\,\\
	&\|T^\dagger_t-E^\dagger_{fix}\,:\,L_s^p(\Nfix\subset\cB(\cH))\to L_s^q(\Nfix\subset\cB(\cH))\,\|_{cb}\le \eps\}\,.\nonumber
\end{align}
The relationship between all these definitions will be studied in \Cref{sect3interpolation}, following the original approach of Saloff-Coste in the classical case~\cite{saloff1994precise}. We then focus on the application to quantum capacities in \Cref{sect3capacities}

\subsection{Interpolation  method}\label{sect3interpolation}

The choice of the trace norm as the measure of distance to equilibrium in the definition of the decoherence time is justified by its operational interpretation as a measure of distinguishability between two density matrices. One could also ask how this compares with other choices, such as other $L_p$ norms. This makes even more sense in the quantum situation, where different non-commutative norms appear: conditioned and completely bounded. In the classical case, this was discussed by Saloff-Coste in \cite{saloff1994precise} using interpolation theory. We first briefly sketch the main message in the classical setting (see \cite{saloff1994precise}): given a classical primitive Markov semigroup $(S_t)_{t\ge 0}$ on a group $G$, with generator $L$ and Markov kernel $(k_t)_{t\geq0}$, Saloff-Coste proposed to study for all $1\leq p\leq +\infty$ the quantities

\begin{align*}
v^L_{1,p}(t)&:=\underset{x\in G}{\sup}\,\norm{k_t(x,\cdot)-\Ind}_p\\
&=\|S_t-\EE_{\mu_G}:L_1(\mu_G)\to L_p(\mu_G)\| \pl.
\end{align*}
Remark that the mixing-time corresponds to the study of $v_{1,1}^L(t)$. This definition is justified by the ordering of the $L_p$ norms, which implies that for any $1\leq p\leq q$ and any $t\geq0$,
\[v^L_{1,1}(t)\leq v^L_{1,p}(t) \leq v^L_{1,q}(t)\,.\]
More generally, we define
\[ v^L_{p,q}(t)  := \|S_t-\EE_{\mu_G}:L_p(\mu_G)\to L_q(\mu_G)\| \pl.\]
In the quantum case, let $(T_t=e^{-t\L})_{t\geq0}$ be a QMS on $\Bcal(\Hcal)$. We do not assume that $(T_t)_{t\geq0}$ is a transferred QMS or that it is selfadjoint. Mimicking the classical case, we define:
\begin{align}\label{eq_def_decotime}
&	v^\cL_{p,q}(t) :=\\
& \|T_t-E_{fix}:L_{s}^p(\Nfix\,\subset\, \cB(\cH))\to L_{s}^q(\Nfix\,\subset \,\cB(\cH))\|  \,.\nonumber
\end{align}
We denoted by $v^{\cL,\cb}_{p,q}$ the same quantity but defined with respect to the cb norm. We recall that in both cases, the definition of the norm is independent of the choice of the subscript $s$ which is thus arbitrarily taken to be equal to $\infty$ or $1$.\\
The following proposition remains true when replacing $v^\cL_{p,q}$ by $v^{\cL,\cb}_{p,q}$.

\begin{proposition}\label{iii} With the above notations, we have for any $1\leq p\leq q\leq +\infty$
	\[v^\cL_{p,q}(t)\le 2\,v^\cL_{1,\infty}(t)^{\frac{1}{p}-\frac{1}{q}}\,.\]
\end{proposition}

\begin{proof} We first observe that
	\[ \|T_t-E_{fix}:L_1(\cB(\cH))\to L_1(\cB(\cH))\|\le 2\]
	and hence by interpolation for $\frac{1}{q}=\frac{1-\theta}{\infty}+\frac{\theta}{1}$ we have
	\begin{align*}
&	 \|T_t-E_{fix}:L_1(\cB(\cH))\to L_1^q(\Nfix\subset \cB(\cH))\|\\
	&~~~~~~~~~~~~~~~~~~~~~~~~~~~~~~~~~\kl  2^{1/q} v_{1,\infty}^\L(t)^{1-1/q}  \pl .
	\end{align*}
	For the next step we interpolate this inequality with
	\[  \|T_t-E_{fix}:L_q(\cB(\cH))\to L_q^q(\Nfix\subset \cB(\cH))\| \kl 2 \]
	and get ($\frac{1}{p}=\frac{1-\eta}{1}+\frac{\eta}{q}$) that
	\begin{align*}
	&\|T_t-E_{fix}:L_p(\cB(\cH))\to L_p^q(\Nfix\subset \cB(\cH))\|\\
&	\kl 2^{\eta} 2^{(1-\eta)/q}
	v^\cL_{1,\infty}(t)^{(1-\eta)(1-1/q)}
	\kl 2\, v^\cL_{1,\infty}(t)^{1/p-1/q} \pl .
	\end{align*}
	The proof for the cb-norm is identical.
\end{proof}

One can get a finer control of the decoherence time from the simple remark that, for a selfadjoint QMS $(T_t)_{t\geq0}$ (not necessarily transferred), as $T_t\circ E_{fix}=E_{fix}$ for all $t\geq0$:
\begin{align}
	&v^\Lcal_{1,1}(t+s+r)\nonumber\\
	&~~~\leq v^\cL_{1,2}(t+s+r) \nonumber \\
	&~~~\le \norm{T_t:L_\infty^1(\Nfix\,\subset\,\Bcal(\Hcal))\to L_\infty^q(\Nfix\,\subset\,\Bcal(\Hcal))} \nonumber\\
	&~~~~~~~\|T_s:L_\infty^q(\Nfix\,\subset\,\Bcal(\Hcal))\to L_\infty^2(\Bcal(\Hcal))\|\,v^\Lcal_{2,2}(r)\,.  \label{decomposition}
\end{align}
Now in the last term, each individual term can be estimated using particular functional inequalities (resp. ultracontractivity (UC), hypercontractivity (HC) and Poincar\'e inequality (PI)) of the classical semigroup using transference. In practice, since these estimates are independent of the representation of the transferred QMS, we expect that they provide bounds that are less tight than the ones one would get if one had access to the exact UC/HC/PI constants of the QMS.\\

We conclude this section with a property which allows to connect hypercontractive estimates with ultracontractive ones (see \Cref{appendix_HC}).
\begin{proposition}\label{propiii}
	Let $(T_t)_{t\geq0}$ be a selfadjoint QMS. Then
	\begin{equation}\label{eq_iii}
		v^\cL_{1,2}(t)^2 \lel v^\cL_{1,\infty}(2t)
	\end{equation}
\end{proposition}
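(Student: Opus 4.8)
The plan is to reduce the statement to the elementary identity $\|B\|^2=\|B^*B\|$ for a bounded operator $B$ from a Banach space into a Hilbert space (whose content is just Cauchy--Schwarz), applied to $B=T_t-E_{fix}$. Set $A:=T_t-E_{fix}$. Since $(T_t)_{t\geq0}$ is selfadjoint, $A$ is selfadjoint for the Hilbert--Schmidt inner product, and since $T_t\,E_{fix}=E_{fix}\,T_t=E_{fix}$ one has $A^2=T_{2t}-E_{fix}$. Moreover both $A$ and $A^2$ are $\Nfix$-bimodule maps, because $T_t$ is and $E_{fix}$ is the trace-preserving conditional expectation onto $\Nfix$; this is precisely what allows me to invoke the bimodule invariance \eqref{bimod} freely below.

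First I would use \eqref{bimod} to move the outer index of the conditioned norms from $\infty$ down to $2$, which is the crucial trick: with inner indices $1\le 2$,
\[ v^\cL_{1,2}(t)=\|A:L_\infty^1(\Nfix\subset\cB(\cH))\to L_\infty^2(\Nfix\subset\cB(\cH))\|=\|A:L_2^1(\Nfix\subset\cB(\cH))\to L_2(\cB(\cH))\|, \]
where I have used that $L_2^2(\Nfix\subset\cB(\cH))=L_2(\cB(\cH))$ is the genuine Hilbert space. Now $B:=A:L_2^1(\Nfix\subset\cB(\cH))\to L_2(\cB(\cH))$ maps a Banach space into a Hilbert space, so $\|B\|^2=\|B^*B\|$. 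I would then identify $B^*B$ explicitly: using the duality $(L_2^1(\Nfix\subset\cB(\cH)))^*=L_2^\infty(\Nfix\subset\cB(\cH))$ of the conditioned $L_p$ spaces \cite{JP10,Pis93} together with the selfadjointness of $A$, the Banach-space adjoint $B^*$ is again $A$, now read as a map $L_2(\cB(\cH))\to L_2^\infty(\Nfix\subset\cB(\cH))$, whence $B^*B=A^2=T_{2t}-E_{fix}$ as a map $L_2^1\to L_2^\infty$. This gives
\[ v^\cL_{1,2}(t)^2=\|A^2:L_2^1(\Nfix\subset\cB(\cH))\to L_2^\infty(\Nfix\subset\cB(\cH))\|. \]

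Finally I would apply \eqref{bimod} a second time, now with inner indices $1\le\infty$, to raise the outer index back from $2$ to $\infty$, obtaining $\|A^2:L_2^1\to L_2^\infty\|=\|A^2:L_\infty^1\to L_\infty^\infty\|=v^\cL_{1,\infty}(2t)$, since $L_\infty^\infty(\Nfix\subset\cB(\cH))=L_\infty(\cB(\cH))$ and $A^2=T_{2t}-E_{fix}$. Chaining the three displays yields exactly \eqref{eq_iii}. The cb-version is identical, replacing every norm by its cb-norm.

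The routine verifications---that $A^2=T_{2t}-E_{fix}$, that $A$ is an $\Nfix$-bimodule map, and that $L_2^2=L_2$ and $L_\infty^\infty=L_\infty$ as conditioned spaces---I would dispatch quickly. The one step that requires genuine care is the adjoint computation for $B^*B$: I must check that the duality $(L_2^1)^*=L_2^\infty$ for the \emph{two-sided} conditioned spaces pairs via the trace in the way that turns the Hilbert-space selfadjointness of $A$ into the correct Banach-space adjoint, rather than some row/column transpose. I expect this to be the main obstacle, and I would pin it down by appealing to the duality and bimodule results of \cite{JP10,Pis93,gao2018fisher}.
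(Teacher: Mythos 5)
Your proof is correct and follows essentially the same route as the paper's: the identity $\|B\|^2=\|B^*B\|$ for $B=T_t-E_{fix}:L_2^1\to L_2$ unwinds exactly into the paper's two separate estimates, namely $\|A^2:L_2^1\to L_2^\infty\|\le\|A:L_2^1\to L_2\|\,\|A:L_2\to L_2^\infty\|$ via the adjoint norm identity, and $\|Ax\|_2^2=\langle x,A^2x\rangle\le\|x\|_{L_2^1}\|A^2x\|_{L_2^\infty}$ via H\"older for the conditioned norms. The point you flag as needing care --- that the duality $(L_2^1)^*=L_2^\infty$ pairs through the trace so that Hilbert--Schmidt selfadjointness of $A$ yields the correct Banach adjoint --- is precisely the ingredient the paper invokes (selfadjointness plus the conditioned H\"older inequality), so nothing is missing.
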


\begin{proof}
	In the selfadjoint setting, we exploit that
	\begin{align*}
	 &\|T_t-E_{fix}:L_u^1(N_{fix}\subset \cB(\cH))\to L_u^2(N_{fix}\subset \cB(\cH))\|\\
	 &=\|T_t-E_{fix}:L_u^2(N_{fix}\subset \cB(\cH))\to L_u^{\infty}(N_{fix}\subset \cB(\cH))\|   
	 \end{align*}
	together with the fact that $T_{2t}-E_{fix}=(T_t-E_{fix})^2$, in order to get
	\begin{align*}
	  &\|T_{2t}-E_{fix}:L_2^1(N_{fix}\subset \cB(\cH))\to L_2^{\infty}(N_{fix}\subset \cB(\cH))\|\\
	  & ~~~~~~~~~~~~\leq \|T_t-E_{fix}:L_2^1(\Bcal(\Hcal))\to L_2(\cB(\cH)))\|^2 \pl .
	  \end{align*}
	To prove the other inequality, we use that by definition
	\begin{align*}
	&	\|T_{t}-E_{fix}:L_2^1(N_{fix}\subset \cB(\cH))\to L_2(N_{fix}\subset \cB(\cH))\|^2\\
		& =     \underset{\norm{x}_{L_2^1(N_{fix}\subset\Bcal(\Hcal))}\leq1}{\sup}\,\norm{(T_t-E_{fix})(x)}_{L_2(\tau)}^2  \\
		& \leq  \underset{\norm{x}_{L_2^1(N_{fix}\subset\Bcal(\Hcal))}\leq1}{\sup}\,\langle (T_t-E_{fix})(x)\,,\,(T_t-E_{fix})(x)\rangle  \\
		& \leq \underset{\norm{x}_{L_2^1(N_{fix}\subset\Bcal(\Hcal))}\leq1}{\sup}\, \norm{(T_{2t}-E_{fix})(x)}_{L_2^\infty(N_{fix}\subset\Bcal(\Hcal))} \\
		& \leq \|T_{2t}-E_{fix}:L_2^1(N_{fix}\subset \cB(\cH))\to L_2^{\infty}(N_{fix}\subset \cB(\cH))\|  \,,
	\end{align*}
	where in the third line we use again that $T_t$ is selfadjoint together with $T_{2t}-E_{fix}=(T_t-E_{fix})^2$ and the H\"older inequality for the conditioned $L_p$ norms.
\end{proof}

\subsection{Complete ultracontractivity}\label{sect33}

In this section we introduce (complete) ultracontractivity (or, more generally, the Varapoulos dimension) for general selfadjoint QMS, not nessecarily transferred ones. These provide better estimates for small times.

\begin{definition}\label{def_Varapoulos}
	We say that a (not necessarily primitive) QMS has $(p,q)$-\emph{Varapoulos dimension} $\al$ if there exists $c_{p,q}>0$ and $t_0>0$ such that for all $t\le t_0$
	\begin{align}
	&	\|T_t:L_{p}(\cB(\cH))\to L_{p}^{q}(\Nfix\,\subset \,\cB(\cH))\|_{\cb}
	\nonumber\\
	&	\kl c_{p,q}(\al)\, t^{-\frac{\al}{2}(1/p-1/q)}\,.\tag{$R_{p,q}(\al,t_0)$}
	\end{align}
	For $p=1$, $q=+\infty$ and $t_0=1$, we say that $(T_t)_{t\geq0}$ is \emph{ultracontractive}, and we denote it by $\operatorname{UC}(C_{\al},\alpha)$.
\end{definition}
The reason for the factor $1/2$ comes from the behaviour of the heat kernel on $\rz^n$ or $\mathbb{T}^n$. A beautiful extrapolation argument by Raynaud shows that heat kernel estimates for small times essentially do not depend on $p,q$. Again, we give a prove for general selfadjoint QMS, independently of the transference construction.

\begin{lemma}\label{ray}
	For all $1\leq p\leq q$, $R_{p,q}(\al,t_0)$ and $R_{1,\infty}(\al,t_0)$ are equivalent (i.e. they hold equivalently up to the constant $c_{p,q}$).
\end{lemma}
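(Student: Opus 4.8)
The plan is to show that $R_{1,\infty}(\alpha,t_0)$ implies $R_{p,q}(\alpha,t_0)$ for all $1\le p\le q$, since the reverse direction $R_{p,q}\Rightarrow R_{1,\infty}$ is immediate from the ordering of the $L_p^q$ norms and the monotonicity of the conditioned norms (choosing the most extreme exponents gives the largest norm, so control at $(1,\infty)$ is the strongest among all $(p,q)$ with $p\le q$). The real content is extrapolating from a single endpoint estimate at $(1,\infty)$ to all intermediate exponent pairs with the \emph{same} scaling exponent $\alpha$, independent of $p,q$.

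First I would record the two trivial contraction bounds that hold for any QMS, namely that $T_t$ is a contraction from $L_1(\cB(\cH))$ to $L_1^1(\Nfix\subset\cB(\cH))=L_1(\cB(\cH))$ and from $L_\infty(\cB(\cH))$ to $L_\infty^\infty(\Nfix\subset\cB(\cH))=L_\infty(\cB(\cH))$, both with cb-norm at most $1$ since $T_t$ is completely positive and unital (and trace preserving on the predual). Combining these endpoint contractions by complex interpolation of the conditioned vector-valued $L_p^q$ scale gives, for any $1\le r\le\infty$, that $\|T_t:L_r(\cB(\cH))\to L_r^r(\Nfix\subset\cB(\cH))\|_{\cb}\le 1$. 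The hypothesis $R_{1,\infty}(\alpha,t_0)$ provides the one nontrivial off-diagonal estimate with scaling $t^{-\alpha/2(1-0)}=t^{-\alpha/2}$.

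The key step is then a two-parameter interpolation (Raynaud's extrapolation argument, cited in the text) interpolating the hypothesis $\|T_t:L_1\to L_1^\infty\|_{\cb}\le c\, t^{-\alpha/2}$ against the diagonal contractions at both ends. Concretely, for a target pair $(p,q)$ with $1\le p\le q\le\infty$ I would first interpolate between $\|T_t:L_1\to L_1^1\|_{\cb}\le 1$ and $\|T_t:L_1\to L_1^\infty\|_{\cb}\le c\,t^{-\alpha/2}$ to control $\|T_t:L_1\to L_1^q\|_{\cb}$ with exponent $t^{-\frac{\alpha}{2}(1-1/q)}=t^{-\frac{\alpha}{2}(1/1-1/q)}$, using the bimodule independence of the subscript $s$ recorded in \eqref{bimod}. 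Then I would interpolate this against the diagonal contraction $\|T_t:L_q\to L_q^q\|_{\cb}\le 1$, with interpolation parameter chosen so that the range exponent stays at $q$ while the domain exponent moves from $1$ to $p$; tracking the Riesz--Thorin exponents as in the proof of \Cref{iii} yields precisely the power $t^{-\frac{\alpha}{2}(1/p-1/q)}$, which is $R_{p,q}(\alpha,t_0)$.

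The main obstacle is verifying that the conditioned cb-norms $L_s^p(\Nfix\subset\cB(\cH))$ form a genuine complex interpolation scale for which these two successive interpolations are legitimate, and that the multiplicative constants and the threshold $t_0$ combine to give a clean $c_{p,q}$; this is exactly where the bimodule property \eqref{bimod} and the $s$-independence of the cb-norm are essential, since they let one carry out the interpolation at a fixed convenient value of $s$ without the estimate depending on that choice. Once the interpolation scale is in place the exponent bookkeeping is the routine Riesz--Thorin computation, already modeled in the proof of \Cref{iii}, so I would keep that part brief and emphasize instead the identification of the correct endpoint operators and the invocation of \eqref{bimod} to justify working with the conditioned norms.
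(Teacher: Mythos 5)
Your treatment of the implication $R_{1,\infty}(\al,t_0)\Rightarrow R_{p,q}(\al,t_0)$ is essentially the paper's: interpolate the off-diagonal hypothesis against the diagonal cb-contractions $\|T_t:L_r(\cB(\cH))\to L_r^r(\Nfix\subset\cB(\cH))\|_{\cb}\le 1$, using the $s$-independence \eqref{bimod} of the conditioned norms, to obtain $r_{p,q}\le r_{1,\infty}^{1/p-1/q}$ and hence the correct power $t^{-\frac{\al}{2}(1/p-1/q)}$. That half is correct and matches the first step of the paper's proof.

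The gap is in the converse. You assert that $R_{p,q}\Rightarrow R_{1,\infty}$ is ``immediate from the ordering of the norms,'' but your own parenthetical --- that control at $(1,\infty)$ is the \emph{strongest} of all the $(p,q)$ statements --- shows this is backwards: the ordering $\|T_t:L_p\to L_p^q\|\le\|T_t:L_1\to L_1^\infty\|$ lets one deduce the weaker interior statements from the endpoint one (and even then only with the inferior power $t^{-\al/2}$ rather than $t^{-\frac{\al}{2}(1/p-1/q)}$), not the endpoint from an interior pair. Passing from a single pair $(p,q)$ with $p>1$ back to the endpoint is precisely the nontrivial content of Raynaud's extrapolation, and it is what the second half of the paper's proof supplies: for $\|x\|_1=1$ one writes $T_t=T_{t/2}\circ T_{t/2}$, applies $R_{p,q}$ to the outer factor, interpolates the intermediate norm as $\|T_{t/2}(x)\|_{L_1^p}\le\|T_{t/2}(x)\|_{L_1^q}^{1-\theta}\,\|T_{t/2}(x)\|_{L_1}^{\theta}$ with $\frac{1}{p}=\frac{1-\theta}{q}+\theta$, and then closes the resulting self-referential inequality $A\le C\,r_{p,q}\,A^{1-\theta}$ for the a priori finite quantity $A=\sup_{t\le t_0}t^{\beta(1-1/q)}\|T_t(x)\|_{L_1^q(\Nfix\subset\cB(\cH))}$. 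Without this bootstrap (or some substitute exploiting the semigroup law), the equivalence claimed in the lemma is not established; your proposal proves only one of the two implications.
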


\begin{proof} Let us define $\beta=\frac{\al}{2}$ and
	\begin{align*}
	 r_{p,q} &\lel \sup_{0<t\le t_0} t^{\beta(1/q-1/p)}\\
	 &
	\|T_t:L_{\infty}^p(\Nfix \subset \cB(\cH))\to L_{\infty}^q(\Nfix \subset \cB(\cH))\|_{\cb} \pl .
	\end{align*}
	Since $T_t$ is subunital we known that $\|T_t:L_q(\cB(\cH))\to L_q(\cB(\cH))\|_{\cb}\le 1$. Thus the same interpolation argument as in the proof of \Cref{iii} implies
	\[ r_{p,q}\le r_{1,\infty}^{1/p-1/q} \pl .\]
	Now let $1<p<q$ and $t\le t_0$ and $x\in \cB(\cH)$ such that $\|x\|_1=1$. We find that
	\begin{align*}
		&t^{\beta(1/q-1)} \|T_{t}(x)\|_{L_1^{q}(\Nfix \subset \cB(\cH))}\\
		&\le t^{\beta(1/q-1)} r_{p,q} (t/2)^{\beta(1/p-1/q)} \|T_{t/2}(x)\|_{L_1^p(\Nfix \subset \cB(\cH))} \\
		&\le r_{p,q} \pl 2^{\beta(1-1/q)} (t/2)^{\beta(1/p-1)}\\
		&~~~~~~~~~~~~~~~~~~
		\|T_{t/2}(x)\|_{L_1^q(\Nfix \subset \cB(\cH))}^{1-\theta} \|T_t(x)\|_{L_1(\Bcal(\Hcal))}^{\theta} \\
		& \le r_{p,q} \pl 2^{\beta(1-1/q)} (\sup_{t\le t_0} t^{\beta(1-1/q)}\|T_tx\|_{L_1^q(\Nfix \subset \cB(\cH))})^{1-\theta}
		\pl .\end{align*}
	Here $\frac{1}{p}=\frac{1-\theta}{q}+\frac{\theta}{1}$. For $x\in \Bcal(\Hcal))$ the supremum is finite, and hence
	\[ \sup_{t<t_0} t^{\beta(1-1/q)}\|T_t(x)\|_{L_1^q(\Nfix \subset \cB(\cH))}
	\kl 2^{\frac{\beta(1-1/q)}{1-\theta}} r_{p,q}^{1/1-\theta} \pl .\]
	By approximation the assertion follows for all $x\in L_1(\cB(\cH))$.
\end{proof}
In certain situations, it may happen that $R_{p,q}(\al,t_0)$ holds only for a short time $t_0<1$. However in this article we will only consider example where we can take $t_0=1$. For sake of clarity, we thus present our result only in this case. The general case $t_0>0$ would follow similarly by considering for instance $\hat{T}_t=e^{-t}T_t$.\\

Finally, using ultracontractivity (or more generally the notion of Varapoulos dimension), we obtain control on different decoherence times of a QMS.
\begin{theorem}\label{HH} Let $\L$ be the generator of a transferred QMS with spectral gap $\la_{\min}(\L)$ which satisfies $\operatorname{UC}(C,\alpha)$ and recall the definition:
	\[t_{p,q}^{\cb}(\eps):=\inf\{t\geq0\,;\,v_{p,q}^{cb}(t)\leq\eps\}\,.\]
	Then
	\[ t_{p,q}^{\cb}(\eps)
	\kl 1+ \frac{1}{\la_{\min}(\L)}\left(\ln C_{\al}+\frac{\ln(1/\eps)}{1/p-1/q}\right) \pl .\]
\end{theorem}
\begin{proof} By transference,  we have
	\begin{align*}
	&	\|T_{1+t}-E:L_1(\cB(\cH))\to L_1^2(\Nfix\subset \cB(\cH))\|_{\cb}\\
		&\le \|T_1:L_1(\cB(\cH))\to L_1^2(\Nfix\subset \cB(\cH))\|_{\cb}\quad e^{-\la_{\min}(\Lcal)t} \\
		&\le C_{\al}^{1/2} \,e^{-\la_{\min}(\Lcal)t} \pl .
	\end{align*}
	Thanks to \Cref{iii} this implies
	\[ v_{1,\infty}^{cb}(1+t)\kl C_{\al}\, e^{-\la_{\min}(\Lcal)t} \]
	and
	\[ v_{p,q}^{cb}(1+t)\kl C_{\al}^{1/p-1/q}\,e^{-\la_{\min}(\Lcal)(1/p-1/q)t} \pl .\]
	This implies the assertion after taking logarithms.
\end{proof}

Because of \Cref{lem_CB_commutative}, the application to transferred QMS is straightforward. Again, notice that we can use the \emph{quantum} spectral gap instead of the \emph{classical} one, when a direct evaluation is possible.
\begin{corollary}
	Let $(T_t)_{t\geq0}$ be a transferred QMS, with classical Markov semigroup $(S_t)_{t\geq0}$. Assume that $(S_t)_{t\geq0}$ is primitive with spectral gap $\lambda_{\min}(L)$ and that it is \emph{ultracontractive} with constants $C,\alpha>0$:
	\[ \|S_t:L_1(\mu_G)\to L_{\infty}(\mu_G)\|
	\lel \sup_{g\in G}\, |k_{t}(g)|\leq C\,t^{-\alpha/2}\,.\]
	Then
	\[ t_{p,q}^{\cb}(\eps)
	\kl 1+ \frac{1}{\la_{\min}(\L)}\left(\ln C+\frac{\ln(1/\eps)}{1/p-1/q}\right) \pl .\]
\end{corollary}
We already discussed in \Cref{sect23} how such ultracontractivity holds in general for H\"ormander systems and finite groups.

A classical result from Varapoulos says that ultracontractivity is equivalent to the following \emph{Sobolev inequality} (see \cite{varobook}):
\[\norm{f}_{L_\theta(\mu_G)}^2 \leq C\,\left(-\langle f\,,\,Lf\rangle_{\mu_G}+\norm{f}_{L_2(\mu_G)}^2\right)\,,\]
where $\theta=2\alpha/(\alpha-2)$. This inequality itself implies Nash inequality:
\[\norm{f}_{L_2(\mu_G)}^{2(1+2/\alpha)} \leq C\,\left(-\langle f\,,\,Lf\rangle_{\mu_G}+\norm{f}_{L_2(\mu_G)}^2\right)\norm{f}_{L_1(\mu_G)}^{4/\alpha}\,.\]
In the quantum case, this last inequality was studied in \cite{[KT16]} in the case of primitive selfadjoint QMS. In \cite{Zhao} it was shown that Varopoulos' theorem remains true in the cb-category.

\subsection{Application to quantum capacities}\label{sect3capacities}

The value $t_{1,\infty}^{cb}(\eps)$ is a complete decoherence  time, usually bigger than $t_{mix}$ and $t_{deco}$. It provides a universal bound for convex functions on channels.

\begin{proposition}\label{convex} Let $T_t=e^{-t\L}$ be a selfadjoint semigroup with fixpoint algebra $N_{fix}$ and conditional expectation $E_{fix}$, let $\eps>0$ and $t\gl t_{1,\infty}^{cb}(\eps)$. Let $\al$ be convex function on channels. Then
	\[ \al(E_{fix}) \kl \al(T_t) \kl (1-\eps)\al(E_{fix})+ \eps \sup_{S} \al(S) \]
	where the supremum is take over all channels $S$ with the same input and ouput dimension.
\end{proposition}
We need the following observation due to Li Gao: 
\begin{lemma}(Li Gao)\label{lem_Gao} Let $N\subset \Mz_n$ and
	$T:\Mz_n\to \Mz_n$ be a completely positive $N$-bimodule map and $0<\eps<1$ such that
	\[ \|T-E:L_1(\Mz_n) \to L_1^{\infty}(N\,\ssubset \, \Mz_n)\|_{cb} \kl \eps \pl .\]
	Then there exists a completely positive $N$-bimodule map $\Phi$  such that
	\[ T\lel (1-\eps)E+\eps\Phi \pl .\]
\end{lemma}

\begin{proof}[Proof of \Cref{lem_Gao}] Indeed, we refer to \cite{gao2018fisher} for the fact that an $N$ bimodule map admits a modified Choi matrix $\chi_T$, and that  $\|\chi_{T}-1\|\le \eps$ holds iff $\|T-E_N:L_1(\Mz_n)\to L_1(N\subset \Mz_n)\|\le \eps$. Then we deduce that 
	\[ (1-\eps)1\kl  \chi_T \kl (1+\eps) 1 \pl .\] 
	This in turn is equivalent to 
	\[ (1-\eps)E_N\pl\le_{cp}T \le_{cp} (1+\eps) E_N  \pl, \]
	where $\leq_{cp}$ means that the inequality remains true for all ampliations $E_N\otimes \id_{\Mz_m}$ and $T \otimes \id_{\Mz_m}$ for all positive integers $m\geq0$.
	In particular,
	\[ T \lel T-(1-\eps)E+(1-\eps)E \lel \eps \Phi+(1-\eps)E \pl .\]
	Here $\Phi=\frac{T-(1-\eps)E}{\eps}$ is normalized so that $\Phi^{\dag}(1)=1$, i.e. $\Phi$ is a channel. Obviously, it is also an $N$ bimodule map.
	\qd
	
	\begin{proof}[Proof of \Cref{convex}]
		Since we assume $\al$ to be convex, we find
		\begin{align*}
			\al(T_t) 
		&\lel \al((1-\eps)E_{fix}+\eps \Phi)\\
			&\kl (1-\eps)\al(E_{fix})+\eps \al(\Phi)\\
		&	\kl  (1-\eps)\al(E_{fix})+\eps \sup_{S} \al(S) \pl .
		\end{align*}
		The lower bound follows from convexity, because $E_{fix}$ is obtained as an average. \qd

		\begin{remark}
			Many capacities are either convex functions of the channel, for example the entanglement assisted capacity, or admit a controllable convex roof, for example the side-channel assisted capacity $Q\le Q_{ss}$ from \cite{winterco}. Assuming that the maximal value $\al^*=\sup_{S} \al(S)$ is of order $\log n$, we see that for $t\gl t_{1,\infty}^{cb}(\eps)$, the leading term of $\al(T_t)$ is $\al(E_{fix})$. In that sense,  $t_{1,\infty}^{cb}(\eps)$  is a ``universal'' coherence time. In the transference situation we can get a hold of this constant via commutative methods.
		\end{remark}

\appendices
\section{Entropy Comparison Theorem}\label{normestimates}
Let $G$ a compact group with Haar measure $\mu_G$ and let $g\mapsto u(g)$ be a projective representation of $G$ on some finite dimensional Hilbert space $\Hcal$. For a bounded measurable function $k:G\to\RR^+$, we define the operator $\Phi_k$ on $\Bcal(\Hcal)$ as:
\begin{equation*}
	\Phi_{k}(\rho) := \int k(g) \,u(g)^*\rho \,u(g) \,d\mu_{{G}}(g) \, .
\end{equation*}
We recall that the fixed-point algebra of the map $\Phi_k$ is given by the commutant of $u(G)$:
\[ N_{fix} = \{ \sigma \in \cB(\cH)\,| \,\sigma u(g)=u(g)\sigma\} = u(G)' \]
and that the following bimodule property holds: for any $\sigma_1,\sigma_2\in N_{fix}$,
\[ \Phi_k(\sigma_1\, \rho\, \sigma_2) = \sigma_1 \,\Phi_k(\rho) \,\sigma_2 \, .\]
\begin{theorem}\label{comparison} Let $x \in N_{fix}^+$ and $k:\,G\to\RR^+$ a bounded measurable function such that $\int k \,d\mu_G=1$. Then, for any positive $y\in\cB(\cH)$, and any $p\ge 1$ of H\"{o}lder conjugate $\hat{p}$:
	\[ \|x^{-\frac{1}{2\hat{p}}}\Phi_k(y)x^{-\frac{1}{2\hat{p}}}\|_{p}\,
	\leq\, \|k\|_{p}\, \|x^{-\frac{1}{2\hat{p}}}E_{fix}(y)x^{-\frac{1}{2\hat{p}}}\|_{p} \, .\]
	Moreover, for any states $\rho,\sigma\in\cD_+(\cH)$ such that $\sigma\in \cD(N_{fix})$:
	\begin{align}\label{thatstheoneforchannel}
		D(\Phi_k(\rho)||\sigma) \leq D(E_{fix}(\rho)||\sigma) + \int_G k\, \log k \,d\mu_G  \, .
	\end{align}
\end{theorem}

In order to prove \Cref{comparison}, we need to introduce a few functional analytical notions: given two (possibly infinite dimensional) Hilbert spaces $\cH$ and $\cK$, we denote by $\cT_p(\cH,\cK)$ the Banach space of linear operators $x:\,\cH\to\cK$ with norm
\begin{align}\label{defshatten}
	\|x\|_{\cT_p(\cH,\cK)} := \|xx^*\|_{\cT_{p/2}(\cK)}^{1/2}
	= \|x^*x\|_{\cT_{p/2}(\cH)}^{1/2} \,,
\end{align}
where $\|a\|_{\mathcal{T}_q(\cH)}:=\tr(|a|^q)^{\frac{1}{q}}$. With a slight abuse of notations, we will also denote this norm by $\|x\|_p$. Next, given two Hilbert spaces $\cH$ and $\cK$ and a positive invertible element $\sigma \in \cB(\cH)$ we recall that the Kosaki norms $\|x\|_{\widehat{{L}}_p(\sigma)}: = \|\sigma^{\frac{1}{p}}\,x \|_{\mathcal{T}_p(\cH,\cK)}$ form an interpolation family (see. Theorem 4 of \cite{Gao_2018tro} for more details). We will also use the column and row spaces. For a Hilbert space $\cH$: we denote the column space $\cH^c:=\cB(\CC,\cH)$ and row space $\cH^r:=\cB(\cH,\CC)$. From this one may construct the interpolation spaces $\cH^{c_p} = [\cH^c, \cH^r]_{1 / p}$ and $\cH^{r_p} = [\cH^r,\cH^c]_{1/p}$ (see \cite{Pis03} for an introduction to these spaces and their properties). Importantly, we also recall that if a pair of spaces $X_0$ and $X_1$ form an interpolation scale $[X_0,X_1]_\theta$, and if $Y_0 \subseteq X_0$ and $Y_1 \subseteq X_1$ are given by the \textit{same} projector, then $[Y_0,Y_1]_\theta$ is also an interpolation scale. The Haagerup tensor product will also play an important role in the proof of \Cref{comparison}. It is defined as follows: given two operator spaces  $X\subset \cB(\cH)$ and $Y\subset \cB(\cK)$, the Haagerup tensor norm is defined on $X\otimes Y$ as
\begin{align*}
	&\|z\|_{X\otimes_h Y}:=\\
	&\inf_{z=\sum_{k}x_k\otimes y_k}\,\|\big(  \sum_{k}x_k\,x_k^*  \big)^{1/2}  \|_{\cB(\cH)}\, \|\big(  \sum_{k}y_k^*\,y_k  \big)^{1/2}  \|_{\cB(\cK)}\,.
\end{align*}
Finally, we recall that a \textit{ternary ring of operators} (TRO) is a closed operator subspace $X$ of $\cB(\cH,\cK)$ with the property
\begin{equation}
	x y^* z \in X \text{ for all } x,y,z \in X \pl.
\end{equation}
If $X$ is a TRO and if $\sigma$ is in the left algebra $ \mathcal{L}(X):=\operatorname{span}\{xy^*|\,x,y\in X\}\subset \cB(\cK)$, we may construct an interpolation scale $X_{p, \sigma} = [X_\infty, X_{1,\sigma}]_{1/p}$ as the space $X$ equipped with the norm $\| x \|_{\sigma, p} = \|\sigma^{\frac{1}{p}} x \|_p$. For more information, see Theorem 5.2 in the appendix of \cite{Gao_2018tro}.
Let $\cH'$ be a copy of $\cH$ that corresponds to the output space of the channel $\Phi_k$. Now, given $\psi \in \cH$, let $\eta_\psi : L_2(\mu_G) \rightarrow \cH'$ and its adjoint be given by
\begin{align*}
	&\eta_\psi(\chi) = \int_G  \chi(g)\,u(g)^* \psi\, d \mu_G(g), \pl\,\,\, \pl\\
	 &\eta_\psi^*(\varphi)(g) = \braket{\psi,\, u(g) \varphi}_\cH \pl.
\end{align*}
Denote by $V : \cH \rightarrow \cH'\otimes L_2(\mu_G)$ the Stinespring dilation of the conditional expectation that is given by $V\psi=(g\mapsto u(g)\psi)$, so that
\begin{align*}
	E_{fix}(\rho)&= (\operatorname{id}_{\cB(\cH')}\otimes \mathbb{E}_{\mu_G})(V\rho V^*)\\
	&= \int_G\,u(g)\,\rho\,u(g)^*\,d\mu_G(g)\,.
\end{align*}
Next, identify the space $V(\cH)$ with a subspace of $L_2(\mu_G)^r \otimes_h \cH'^c \cong \cB( L_2(\mu_G),\cH' )$, so that $V(\psi) = \eta_\psi$.
\begin{lemma} \label{lem:vatro}
	$V(\cH)$ is a TRO.
\end{lemma}
\begin{proof}
	Let $\psi,\varphi,\tilde{\psi},\tilde{\varphi} \in \cH$. We compute
	\begin{equation}
		\begin{split}
			\braket{\tilde{\psi}, \eta_\psi \eta_\varphi^*(\tilde{\varphi})}_{\cH}
			& = \int_G \braket{\tilde{\psi} ,\,u(g)^* \psi}\,\braket{\varphi, u(g) \tilde{\varphi}}  d \mu_G(g) \\
			&= \tr(\ket{\tilde{\varphi}}\bra{\tilde{\psi}} E_{fix}(\ket{\psi}\bra{\varphi})) \\
			& = \braket{\tilde{\psi}, E_{fix}(\ket{\psi}\bra{\varphi}) \tilde{\varphi}}_\cH \\
			&\equiv \braket{\tilde{\psi}, \sigma \tilde{\varphi}}_\cH,
		\end{split}
	\end{equation}
	where $\sigma := E_{fix}(\ket{\psi}\bra{\varphi})$. Since this holds for all $\tilde{\psi}\in\cH$, $\eta_\psi \eta_\varphi^*(\tilde{\varphi}) = \sigma \tilde{\varphi}$.
	We also have that, since $\sigma\in u(G)'$, for any $\psi'\in\cH$
	\begin{align*}
		\eta_{\psi'}^*(\sigma \tilde{\varphi})(g) = \braket{\psi', u(g) \sigma \tilde{\varphi}}_\cH &= \braket{\sigma^* \psi', u(g) \tilde{\varphi}}_\cH \\
		&= \eta_{\sigma^* \psi'}^*(\tilde{\varphi})(g) \pl.
	\end{align*}
	Hence $\eta_{{\psi'}}^* \eta_\psi \eta_\varphi^* = \eta_{\sigma^* {\psi'}}^*$. Since $E_{fix}(\ket{\psi}\bra{\varphi})^* \psi' \in \cH$, $\eta_{\sigma^* \psi'}^* \in V(\cH)^*$, so that $V(\cH)$ is a TRO.
\end{proof}
Next, for any $\Omega \in\mathcal{T}_2(\cH)$, define the following dual operators:
\begin{align*}
&	\hat{\eta}_\Omega:\left\{\begin{array}{cc}
		\!\!\!\!\!	\!\!\!\!\!\!\!\!\!\!\!\!\!\!\!\!\!\!\!\!\!\!	\cH\otimes{L}_2(\mu_G)\to\cH'&\\
		\chi\mapsto \int_G\,u(g)^*\,\Omega \,\chi(g)\,d\mu_G(g)&
	\end{array}
	\right.\,,\\
	&	\hat{\eta}_\Omega^*:\left\{
	\begin{array}{cc}
		\!\!	\!\!\!\!\!\!	\cH'\to\cH\otimes{L}_2(\mu_G)&\\
		\psi\mapsto (g\mapsto \Omega^* \,\,u(g)\psi)\,.&
	\end{array}
	\right.
\end{align*}	
By definition, the operators $\hat{\eta}_\Omega$ span the space $V(\cH)\otimes_h\cH^r$.	Since $V(\cH)$ is a TRO, as shown in Lemma \ref{lem:vatro}, $V(\cH)\otimes_h\cH^r$ is also a TRO.

\begin{lemma} \label{lem:interpscale}
	Let $1 \leq p \leq \infty$ of H\"{o}lder conjugate $\hat{p}$, and $L_{k}:L_2(\mu_G)\to L_2(\mu_G)$ denote left multiplication by $k$ for $k :G\to\RR^+ $ bounded measurable. Then, for any $\Omega\in\mathcal{T}_2(\cH)$ and positive operator $\sigma\in  \cL(V(\cH))\equiv N_{fix}$,
	\begin{align} \label{eq:kosakiscale}
		\|\sigma^{-\frac{1}{2\hat{p}}} \,	\hat{\eta}_\Omega(\Id_{\cH} \otimes L_{k}^{\frac{1}{2}})\,
		\, \|_{2p}
		\le \|k^{\frac{1}{2}}\|_{{L}_{2p}(\mu_G)}\,	\|\sigma^{-\frac{1}{2\hat{p}}} \, \hat{\eta}_\Omega\|_{2p}\,.
	\end{align}
\end{lemma}

\begin{proof}
	By Theorem 5.2 of \cite{Gao_2018tro}, for any positive operator $\sigma\in\mathcal{L}(V(\cH)\otimes_h\cH^r)\equiv \mathcal{L}(V(\cH))\equiv N_{fix}$, the spaces $\tilde{X}_{p,\sigma}:=V(\cH)\otimes_h\cH^r$ equipped with the Kosaki norm $\| x\|_{\widehat{L}_p(\sigma)}$ form an interpolation scale. In particular, $\tilde{X}_{2p,\sigma}=[\tilde{X}_\infty,\tilde{X}_{2,\sigma}]_{\frac{1}{p}}$. Next, observe that $\sigma^{-\frac{1}{2}}\hat{\eta}_\Omega\in V(\cH)\otimes_h\cH^r$.	Therefore, assuming that $\| \sigma^{-\frac{1}{2}}{\hat{\eta}}_\Omega \|_{\widehat{L}_p(\sigma)} < 1$, then there exists an analytic function $\xi : \{ z \in \mathbb{C} : 0 \leq \operatorname{Re}(z) \leq 1 \} \rightarrow {\tilde{X}_{\infty}+\tilde{X}_{2,\sigma}} $, of finite dimensional range such that $\xi(1/p) =\sigma^{-\frac{1}{2}}\hat{\eta}_\Omega$ \cite{10.2307/1995285},
	\begin{equation*}
		\| {\xi(it)} \|_{\infty}=\| {\xi(it)} \|_{\widehat{L}_\infty(\sigma)} {<} 1, \pl \text{ and } \pl \| {\xi(1 + i t)} \|_{\widehat{L}_2(\sigma)} {<} 1 \pl.
	\end{equation*}
	Let us define the analytic function $x(z) := \sigma^{z/2}\xi(z)(\Id_{\cH} \otimes L_k^{\frac{p z}{2}}) $. We have that
	\begin{equation*}
		{\|x(it)\|_\infty=\| \sigma^{\frac{it}{2}} {\xi(it)}(\Id_\cH \otimes L_{k}^{\frac{i t p}{2}}) \|_{\infty}} {= \| {\xi(it)} \|_{\infty} \,<\,} 1 \pl.
	\end{equation*}
	Next, since $\xi(1 + it) \in V(\cH)\otimes_h\cH^r$, there is $\Omega'\in \mathcal{T}_2(\cH)$ such that $\xi(1 + it)\equiv \hat{\eta}_{\Omega'}$. Therefore,
	\begin{align*}\label{eq:2norm}
		\|x(1+it)\|_2^2&=\|\sigma^{\frac{1}{2}}\xi(1+it)(\Id_\cH\otimes L_k^{\frac{p}{2}})\|_2^2\\
		&=\|\sigma^{\frac{1}{2}}\hat{\eta}_{\Omega'}(\Id_\cH\otimes L_k^{\frac{p}{2}})\|^2_{2} \\
		&= \int_G |k(g)|^{p} { \|\sigma^{\frac{1}{2}}\,u(g)^* {\Omega}'\|^2_2} {\,d\mu_G(g)}\\
		&= \int_G |k(g)|^{p} \|u(g)^*\,\sigma^{\frac{1}{2}}\,{\Omega}'\|^2_2\,{d\mu_G(g)} \,\\
		&{=}\, \| k^{\frac{1}{2}} \|_{L_{2p}(\mu_G)}^{2p} \,\|\sigma^{\frac{1}{2}}\Omega'\|_2^2 \\
		&{=}\, \| k^{\frac{1}{2}} \|_{L_{2p}(\mu_G)}^{2p} \,\|\sigma^{\frac{1}{2}}\,\xi(1+it)\|_2^2\\
		&< \| k^{\frac{1}{2}} \|_{L_{2p}(\mu_G)}^{2p}\,,
	\end{align*}
	where in the fourth line, we used that $\sigma\in N_{fix}$. Then, using Stein's interpolation theorem:
	\begin{equation*}
		\begin{split}
			\| \sigma^{-\frac{1}{2\hat{p}}} \hat{\eta}_\Omega(\Id_\cH\otimes L_k^{\frac{1}{2}}) \|_{2p}&= \| x({1/p}) \|_{2p}	 \leq \|k^{\frac{1}{2}}\|_{L_{2p}(\mu_G)}\,,
		\end{split}
	\end{equation*}
	and the result follows after rescaling.

\end{proof}
\begin{proof}[Proof of \Cref{comparison}] We use the notation of the theorem. The following holds for $\rho=|\Omega|^2$:
	\begin{align*}
	&	\|\sigma^{-\frac{1}{2\hat{p}}} \,\hat{\eta}_\Omega\,(\Id_{\cH}\otimes L_{k^{\frac{1}{2}}})\|_{\cT_{2p}( \cH\otimes {L}_2(\mu_G),\cH')}^2\\
		&=  \| \sigma^{-\frac{1}{2\hat{p}}}\,\hat{\eta}_\Omega\,(\Id_{\cH}\otimes L_{k})\,\hat{\eta}_\Omega^*\, \sigma^{-\frac{1}{2\hat{p}}}\|_{\cT_p(\cH')}\\
		&= \left\|\int_G k(g)\, \sigma^{-\frac{1}{2\hat{p}}}\,u(g)^*\,\rho\,u(g) \,d\mu_G(g)\sigma^{-\frac{1}{2\hat{p}}} \right\|_{\cT_p(\cH')} \\
		&= \| \sigma^{-\frac{1}{2\hat{p}}} \, \Phi_{k}(\rho)\, \sigma^{-\frac{1}{2\hat{p}}} \|_{\cT_p(\cH')} \, ,
	\end{align*}
	where $L_{k}:L_2(\mu_G)\to L_2(\mu_G)$ denotes the operator of multiplication by $k$. Therefore, for the claim to hold, it suffices to show that for $1 \leq p \leq \infty$,
	\begin{align*}
	&	\|\sigma^{-\frac{1}{2\hat{p}}} \,\hat{\eta}_\Omega\,(\Id_{\cH}\otimes L_{k^{\frac{1}{2}}})\|_{\cT_{2p}( \cH\otimes {L}_2(\mu_G),\cH')}\\
		&\le \|k^{\frac{1}{2}}\|_{{L}_{2p}(\mu_G)}\,	\|\sigma^{-\frac{1}{2\hat{p}}} \,\hat{\eta}_\Omega \|_{\cT_{2p}( \cH\otimes {L}_2(\mu_G),\cH')}\,.
	\end{align*}	
	This follows from Lemma \ref{lem:interpscale}. Differentiation at $p=1$ gives the entropic inequality (\ref{thatstheoneforchannel}).
\end{proof}

\section{Transference of hypercontractivity}\label{appendix_HC}

In the classical theory and compared to ultracontractivity (UC), it is wellknown that an even finer control of the decoherence time can be realized using the \textit{hypercontractivity} property (HC) of the semigroup at intermediate times, using \Cref{decomposition}. Hypercontractivity is concerned with controlling the term 
\begin{align*}
&\|T_t:L_\infty^q(\Nfix\,\subset\,\Bcal(\Hcal))\to L_\infty^2(\Bcal(\Hcal))\|\\
&~~~=\|T_t:L_2(\Nfix\,\subset\,\Bcal(\Hcal))\to L_2^p(\Bcal(\Hcal))\|\,,
\end{align*}
where $p$ is the H\"older conjugate of $1\leq q\leq 2$ and where we assume that $(T_t)_{t\geq0}$ is selfadjoint. We insist that the situation is more tricky than in the UC case, because as mentioned in \Cref{lem_CB_commutative}, the only operator norms that transfer to the cb case are when the image is $L_\infty$.\\

Hypercontractivity has been originaly studied in the quantum case in \cite{OZ99} in the context of spin system, than in \cite{[KT13]} for decoherence time. Up to now, very few examples are known where it is possible to estimate the hypercontractive constant. Transference provides a new way to do so. It must however be noted that when concerned with applications as the ones presented in this article, a direct application of transference lead to better result than first proving HC via transference and then applying transference.\\

Thus, our motivation in this section is simply to show that using transference, one can obtain bounds on the hypercontractive (and thus also log-Sobolev) constants for a large class of QMS: the transferred QMS. Only few such bounds existed previously in the literature and only in the primitive case (see \cite{temme2014hypercontractivity,BR18}) so we thought it meaningful to write this technique explicitily in this article. Our main result is the following one.

\subsection{Main result}

\begin{definition}
	We say that the quantum Markov semigroup $(T_t)_{t\ge 0}$ on $\cB(\cH)$ is \emph{(weakly) hypercontractive} if there exist two non-negative constants $c>0,~d\geq1$ such that for all $t\geq \frac c2 \ln (p-1)$:
	\begin{equation}\tag{$\HC\,(c,d)$}
		\norm{T_t:\,L_2(\Bcal(\Hcal))\to L_2^p\left(\Nfix\subset\Bcal(\Hcal)\right)}\leq d^{\frac 12 - \frac 1p}
	\end{equation} 
\end{definition}

The study of hypercontractivity and its application to the estimation of the decoherence time was the subject of the recent article \cite{BR18}. We can slightly improve their results (Proposition 7.1) using \Cref{decomposition} and transference.
\begin{proposition}\label{prop_HC_decotime}
	Assume that the transferred QMS $(T_t)_{t\geq0}$ on $\cB(\cH)$ is $\HC(c,d)$. Assume furthermore that the group $G$ is finite with $|G|\gl e$. Then for all $\rho\in\cD(\cH)$ and all $\kappa>0$, 
	\begin{align*}
	&\norm{T_t^\dagger(\rho)-E_{fix}^\dagger(\rho)}_1\leq \sqrt{d}\,e^{1-\kappa}\,,\,\\
&	t=\frac c2\ln\ln|G|+\frac{\kappa}{\lambda_{\min}(\Lcal)}\,.
\end{align*}
\end{proposition}

\begin{proof}
	Remark that by transference,
	\begin{align*}
		&\norm{id:\, L_\infty^1(\Nfix\,\subset\,\Bcal(\Hcal))\to L_\infty^q(\Nfix\,\subset\,\Bcal(\Hcal))}\\
		& \leq \norm{id:\, L_\infty^1(\Nfix\,\subset\,\Bcal(\Hcal))\to L_\infty^q(\Nfix\,\subset\,\Bcal(\Hcal))}_{cb} \\
		& \leq \norm{id:\, L_1(\mu_G)\to L_q(\mu_G)} \\
		& \leq |G|^{1/\hat q}\,,
	\end{align*}
	where $\hat q$ is the H\"older conjugate of $q$. We now apply \Cref{decomposition} with $t=0$, $s=\frac c2\ln\ln|G|$ and $\hat q=1+\ln|G|$ to get:
	\begin{align*}
		v^\Lcal_{1,1}(s+r) 
		&\leq |G|^{1/\hat q}\,d^{\frac12-\frac1{\hat q}}\,e^{-r\,\lambda_{\min}(\Lcal)} \\
		& \leq \sqrt d\, e^{1-r\,\lambda_{\min}(\Lcal)}\,.
	\end{align*}
	We have used that, as $|G|\geq e$, then $(\ln|G|)/(1+\ln|G|)\leq1$, which gives $|G|^{1/\hat q}\leq e$. This concludes the proof by taking $r=\kappa/\lambda_{\min}(\Lcal)$ and $t\geq s+r$.
\end{proof}

In practice, even if it is hard to find the tightest constants for which the above functional inequality is satisfied, it is still possible to obtain good bounds based on ultracontractivity. As we can use transference on ultracontractivity, we can obtain in this way bound on the hypercontractive constants. However the bound we would obtain on the decoherence time from such an estimate would not be better from using only ultracontractivity. 

\begin{theorem}\label{theo_HC}
	Let $(S_t)_{t\geq0}$ be a reversible Markov semigroup on a compact Lie or finite group $G$, with right-invariant kernel and assume there exists $t_0\geq0$ and $M>0$ such that
	\begin{align}\label{eq1}
		&\norm{S_t-\Ebb_{\mu_G}:L_1(\mu_G)\to L_2(\mu_G)}\\
		&~~~~~~~~~~~~=\underset{g\in G}{\sup}\,\norm{h\mapsto k_t(g)-1}_{2}\leq M\,.\nonumber
	\end{align}
	Let $g\mapsto u(g)$ be a unitary representation of $G$ on some finite dimensional Hilbert space and let $(T_t)_{t\geq0}$ be the corresponding transferred QMS defined as in \Cref{eq_groupQMS}. Then $(T_t)_{t\ge 0}$ satisfies $\HC(c,\sqrt{2})$ with
	\begin{equation}\label{eq_theo_HC}
		c\leq \frac{1}{\lambda_{\min}(\Lcal)}\left(\lambda_{\min}(\Lcal)\,t_0+\ln M+1\right)\,.
	\end{equation}
\end{theorem}

The proof of \Cref{theo_HC} proceeds by consecutive uses of the transference method of \Cref{Transfer} as well as the following interpolation result:

\begin{lemma}\label{lem_estimation}
	Let $(S_t)_{t\geq0}$ be a reversible Markov semigroup on a compact Lie or finite group $G$, with right-invariant kernel. Assume that
	\begin{equation}\label{eq_lem_estimation1}
		\norm{S_{t_{0}}-E_\mu\,:\,L_2(\mu_G)\to L_{\infty}(G)}\leq M
	\end{equation}
	for some $t_{0}\ge 0$ and $M>0$. Then, the semigroup $(S_t)_{t\ge 0}$ is hypercontractive with respect to the completely bounded norm: for all $ 2\le p$, there exist $c>0$ such that for all $t\ge \frac{c}{2}\ln\,(p-1)$:
	\begin{align}\tag{$\operatorname{cHC}_2(c,\sqrt2)$}\label{cHClab1}
		\|S_t:\,L_2(\mu_G) \to L_p(\mu_G)\|_{\cb}\,\le\, {\sqrt2}^{\frac{1}{2}-\frac{1}{p}}\,,
	\end{align}
	with
	\begin{equation}\label{eq_theo_estimation2}
		c\leq \frac{1}{\lambda_{\min}(L)}\left(\lambda_{\min}(L)\,t_{0}+\ln M+1\right)\,,
	\end{equation}
	where $\lambda_{\min}(L)$ is the spectral gap of the generator $L$ of the semigroup $(S_t)_{t\ge 0}$.
\end{lemma}

\begin{remark}
	A similar statement holds if we replace $\norm{S_{t_{0}}-E_\mu\,:\,L_2(\mu_G)\to L_{\infty}(G)}$ by $\norm{S_{t_{p_0}}-E_\mu\,:\,L_2(\mu_G)\to L_{p_0}(\mu_G)}_{\cb}$ for some $p_0>2$. However, one can apply \Cref{lem_CB_commutative} only for $p_0=+\infty$ so we choose to stick to this case.
\end{remark}

\begin{proof}
	By a similar argument as in the proof of Theorem 4.7 in \cite{BR18} (see also \cite{temme2014hypercontractivity} in the case of the usual Schatten norms, and Theorem 3.9 and 3.10 of \cite{Diaconis1996} in the classical setting), we get from \Cref{eq_lem_estimation1} that a \textit{complete logarithmic Sobolev inequality} $\operatorname{cLSI}_2\,(t_0,M)$ holds (see \Cref{appendix} for the definition). By a simple adaptation of Theorem 4.5 in \cite{BR18}, we get $\operatorname{cLSI}_2\,(c,\sqrt{2})$ with $c$ given by \Cref{eq_theo_estimation2}. Gross' integration Lemma for the $\cb$ norms allows us to conclude (cf.  \cite{[BK16]} in the case of a finite group, and \Cref{grossintegration} in the case of a compact Lie group).
\end{proof}

\begin{proof}[\Cref{theo_HC}]
	The proof starts by a simple application of the transference method: first, by (iii) of \Cref{Transfer}:
	\begin{align}\label{transferencetechniquehc}
	&	\|T_t:\,L_2(\cB(\cH))\to L_2^p(\Nfix\subset \mathcal{B}(\cH))\|\\&	\le	\|T_t:\,L_2(\cB(\cH))\to L_2^p(\Nfix\subset \mathcal{B}(\cH))\|_{\cb}\nonumber	\\
		&\le\|S_t:\,L_2(\mu_G)\to L_p(\mu_G)\|_{\cb}\,.\nonumber
	\end{align}
	Then, notice that
	\begin{align*}
	&	\|S_t-E_\mu:\,L_2(\mu_G)\to L_\infty(G)\|_{\cb}\\
		&~~~~~~~~~~= \|S_t-E_\mu:\,L_2(\mu_G)\to L_\infty(G)\|\\
		& ~~~~~~~~~~=\norm{h\mapsto  k_t(h)-1}_{2}\,, \\
	\end{align*}
	where we used \Cref{lem_CB_commutative} in the first line. The result follows from a direct application of \Cref{lem_estimation}.

\end{proof}	

In Theorem 3.7 of \cite{Diaconis1996a}, the authors showed, conversely to the above theorem, how to obtain bounds of the form of (\ref{eq1}) from estimates on the log-Sobolev constant. Similar bounds were obtained from the Bakry Emery condition via Poincar\'{e}, logarithmic Sobolev and Nash inequalities. We refer to \Cref{examplesfinitegroups} for a more detailed discussion. This allows us to get hypercontractivity estimates for a QMS $(T_t)_{t\ge 0}$ from hypercontractivity of any classical Markov semigroup $(S_t)_{t\ge 0}$ from which $(T_t)_{t\ge 0}$ can be transferred: for example, the following corollary is a direct consequence of \eqref{kernelestimateformarkovchain} and \Cref{theo_HC}:

\begin{corollary}[From classical to quantum hypercontractivity]\label{logsobcq}
	Let $(S_t)_{t\ge 0}$ be a reversible Markov semigroup on a finite group $G$, with right-invariant kernel satisfying $\HC(c,0)$. Then the QMS $(T_t)_{t\ge 0}$ defined in \Cref{eq_groupQMS} satisfies $\HC( c',\sqrt{2})$, with
	\begin{align*}
		c'\le \frac{2}{\lambda_{\min}(S)}+\frac{c}{2}\ln\ln |G|.
	\end{align*}	
\end{corollary}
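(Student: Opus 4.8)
The plan is to obtain the statement as an immediate application of \Cref{theo_HC}, after first converting the classical hypercontractivity $\HC(c,0)$ into a kernel estimate of exactly the form demanded by the hypothesis \eqref{eq1}. The only substantive step is the Diaconis--Saloff-Coste device of letting the $L_p$ exponent grow logarithmically in $|G|$, which is what \eqref{kernelestimateformarkovchain} records; the remainder is bookkeeping with the transference inequalities already in hand.

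First I would fix the exponent $q=1+\ln|G|$ (assuming $|G|\ge 3$ so that $q\ge 2$) and set
\[ t_0=\tfrac{c}{2}\ln(q-1)=\tfrac{c}{2}\ln\ln|G|\,. \]
Classical hypercontractivity $\HC(c,0)$ then gives $\norm{S_{t_0}:L_2(\mu_G)\to L_q(\mu_G)}\le 1$. Since $(S_t)_{t\ge0}$ is reversible, $S_{t_0}$ is self-adjoint on $L_2(\mu_G)$, so by duality $\norm{S_{t_0}:L_{\hat q}(\mu_G)\to L_2(\mu_G)}\le 1$ with $\hat q=q/(q-1)$. Applying this to the normalized point-mass density $\phi=|G|\,\Ind_{\{\e\}}$, which satisfies $\EE_{\mu_G}[\phi]=1$, $\norm{\phi}_{L_{\hat q}(\mu_G)}=|G|^{1/q}$ and $S_{t_0}(\phi)=k_{t_0}(\cdot)$, yields
\[ \norm{k_{t_0}}_2\le|G|^{1/q}=\exp\!\Big(\tfrac{\ln|G|}{1+\ln|G|}\Big)<e\,. \]
Using $\EE_{\mu_G}[k_{t_0}]=1$ this gives $\norm{k_{t_0}-\Ind}_2=(\norm{k_{t_0}}_2^2-1)^{1/2}<(e^2-1)^{1/2}=:M$, and by right-invariance $\sup_{g}\norm{h\mapsto k_{t_0}(g)-1}_2=\norm{k_{t_0}-\Ind}_2\le M$, with $\ln M=\tfrac12\ln(e^2-1)<1$.

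This is precisely the hypothesis \eqref{eq1} of \Cref{theo_HC} at time $t_0$ with constant $M$. Invoking that theorem, the transferred QMS $(T_t)_{t\ge0}$ satisfies $\HC(c',\sqrt2)$ with
\[ c'\le\frac{1}{\lambda_{\min}(\Lcal)}\big(\lambda_{\min}(\Lcal)\,t_0+\ln M+1\big)=t_0+\frac{\ln M+1}{\lambda_{\min}(\Lcal)}\,. \]
Finally I would invoke part i) of \Cref{Transfer}, namely $\lambda_{\min}(\Lcal)\ge\lambda_{\min}(L)=\lambda_{\min}(S)$, to replace the quantum gap by the classical one in the denominator, and combine it with $\ln M+1<2$ to conclude
\[ c'\le\frac{c}{2}\ln\ln|G|+\frac{2}{\lambda_{\min}(S)}\,, \]
as claimed.

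The calculation is essentially routine once \Cref{theo_HC} is available; the single genuinely load-bearing idea is the logarithmic choice $q\asymp\ln|G|$, which balances the hypercontractivity time $\tfrac{c}{2}\ln(q-1)$ against the dimensional factor $|G|^{1/q}$ and keeps $M$ of order one. The only other point needing a moment's care is deciding whether the classical gap $\lambda_{\min}(S)$ or the possibly larger quantum gap $\lambda_{\min}(\Lcal)$ appears in the final bound, which is settled at once by the transference inequality $\lambda_{\min}(\Lcal)\ge\lambda_{\min}(L)$.
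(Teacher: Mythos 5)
Your proof is correct and follows essentially the same route as the paper: the paper obtains the corollary by feeding the Diaconis--Saloff-Coste kernel estimate \eqref{kernelestimateformarkovchain} into \Cref{theo_HC}, and your argument does exactly this, except that you reprove the classical kernel bound inline (hypercontractivity applied to the point mass with $q=1+\ln|G|$, which is precisely the mechanism behind the cited estimate) rather than quoting it. The only cosmetic difference is that you forgo the extra relaxation time $\gamma/\lambda$ used in \eqref{kernelestimateformarkovchain} and instead absorb the resulting constant $\ln M+1<2$ via $\lambda_{\min}(\Lcal)\ge\lambda_{\min}(L)$, which yields the same final bound.
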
	

\subsection{Completely bounded Gross lemma for classical diffusions}\label{appendix}

In this section, we briefly describe the proof of Gross' integration lemma relating the complete logarithmic Sobolev inequality to the hypercontractivity with respect to the completely bounded normes defined in \Cref{Lpnorm} for classical diffusions. The proof is similar to the ones of \cite{[BK16]} for quantum (and hence classical) Markov semigroups in finite dimensions (see also \cite{cheng2015new} for the case of the modified logarithmic Sobolev inequality). 

Let $(S_t)_{t\ge 0}$ be a semigroup on the algebra $L_\infty(E,\cF,\mu)$ of real bounded measurable functions on the probability space $(E,\cF,\mu)$ that is reversible with respect to $\mu$. The semigroup is described by its kernel $(k_t)_{t\ge 0}$ via \Cref{eq_conv_kernel} that we recall here:
\begin{equation}\label{eq_groupMK1}
	S_t (f)(x)=\int_E\,k_t(x,y)\,f(y)\,d\mu(y)\,.
\end{equation}
When extended to its action on the space $L_2(\mu)$ of square integrable real valued functions on $E$, the semigroup $(S_t)_{t\ge 0}$ is strongly continuous, and we denote by $(L,\,\operatorname{dom}(L))$ its associated generator. Since the domain of $L$ is not usually known in practice, we will work on a dense subspace of it. In fact, it will be convenient to assume for technical reasons that the following hypothesis, already used in \cite{B-E1,BGL14}, holds:
\begin{hypothesis}\label{hypothesisA1}
	There exists an algebra $\mathcal{A}$ of bounded measurable functions, containing all the constants, dense in all the spaces $L_p(\mu)$, $p\ge 1$ as well as in $\operatorname{dom}(L)$, that is stable under composition with multivariate smooth functions. We also assume that for any sequence $\{f_n\}$ of $\mathcal{A}$ that converges to a function $f$ in $L_2(\mu)$, and every smooth bounded function $\Phi:\RR\to\RR$ with bounded derivatives, there exists a subsequence $\{\Phi(f_{n_k})\}$ of $\{\Phi(f_n)\}$ converging towards $\Phi(f)$ in $L_1(\mu)$ and such that $L\Phi(f_{n_k})$ converges to $L\Phi(f)$ in $L_1(\mu)$.
\end{hypothesis}

For any $m\in\NN$, the algebra $\mathbb{M}_m(L_\infty(\mu))$ coincides with the algebra $L_\infty(E,\mathbb{M}_m)$ of bounded measurable functions with values in $\mathbb{M}_m$, with norm $\|f\|_{L_\infty(\mathbb{M}_m)}$ defined as $\sup_{x\in E}\|f(x)\|_{\mathbb{M}_m}$. For sake of simplicity, we denote this norm by $\|.\|_\infty$. Next, for any $f\in L_\infty(\mathbb{M}_m)$, define the following trace on $L_\infty(\mathbb{M}_m)$:
\begin{align*}
	\tau(f):= \int_E\,\frac{1}{m}\,\tr_{\mathbb{M}_m}(f(x))\,\mu(dx)\,.
\end{align*}
We denote the completions of the $L_p$ norms associated to that state $L_p(\mathbb{M}_m)$, $p\ge 1$, and denote the norms associated to it by $\|.\|_{L_p(\tau)}$. To simplify the notations, we will get rid of the indices identifying the underlying spaces and introduce the normalized traces $\ntr:=\frac{1}{m}\Tr$. The main difference to~\cite{[BK16]}  in our setting arises from the possible unboundedness of the generator $L$ of the classical semigroup $(S_t)_{t\ge 0}$, which will not be an issue as long as we carry out our differentiations in the algebras $\cA(\mathbb{M}_m)^{++}:=\{g=(g_{ij}),\,g_{ij}\in \cA\,\forall ij\in \{1,...,m\},\,g>0\}$ of positive matrix valued functions with coefficients in $\cA$, with spectrum uniformly bounded away from $0$. Then, we define the \textit{$L_q$-entropy} and the Dirichlet form as follows: given elements $f,g\in \cA(\mathbb{M}_m)^{++}$, 
\begin{align*}
	&\operatorname{Ent}_{q}(f)=\tau(f^q\log f^q)-\tau(f^q)\,\log\tau(f^q)\,,\\
	&\cE_{q,\,\id_{\mathbb{M}_m\otimes L}}(f):=\tau(f^{q-1}\,(\id_{\mathbb{M}_m}\otimes L)(f))\,.
\end{align*}	
Next, we define the notions of completely bounded hypercontractivity and of complete logarithmic Sobolev inequality: Given $q\ge 1$, the semigroup $(S_t)_{t\ge 0}$ is said to\footnote{These inequalities in particular imply the primitivity of the semigroup. One could easily extend these inequalities to non-primitive classical semigroups, which however play no role in this article.}
\begin{itemize}
	\item[-] be \textit{$q$-completely hypercontractive} if there exist $c>0,\,d\ge 1$ such that for all $ p\ge q$ and all $t\ge \frac{c}{2}\log\,\frac{p-1}{q-1}$:
	\begin{align}\tag{$\operatorname{cHC}_q(c,d)$}\label{cHClab}
		\|S_t:\,L_q(\CC\,I_{M}\subset M)\to L_q^p(\CC\,I_M\subset M)\|_{\cb}\,\le\, d^{\frac{1}{q}-\frac{1}{p}}
	\end{align}
	\item[-] satisfy a \textit{$q$-complete logarithmic Sobolev inequality} if there exist $c\ge 0$, $d\ge 1$ such that for all $m\in\NN$, and all $f\in\cA(\mathbb{M}_m)^{++}$:
	\begin{align}\tag{$\operatorname{cLSI}_q(c,d)$}\label{cLSIlab}
		\operatorname{Ent}_q(f)\le \,c\,\cE_{q,\,\id_{\mathbb{M}_m}\otimes \cL}(f)\,+\,\log (d)\,\|f\|^q_{L_q(\mathbb{M}_m)}\,.
	\end{align}	
\end{itemize}	
The equivalence between \ref{cLSIlab} and \ref{cHClab} was proved in \cite{[BK16]} in the case of a Markov chain defined on a finite sample space (and even for quantum Markov semigroups in finite dimensions). Here, we extend this equivalence to the present abstract setting, which in particular incorporates the case of classical diffusions.
\begin{theorem}\label{grossintegration}
	Let $(S_t)_{t\ge 0}$ be a classical Markov semigroup defined on the algebra $L_\infty(E,\cF,\mu)$ of bounded measurable functions on some measure space $(E,\cF,\mu)$, and assume that $\mu$ is an invariant measure of $(S_t)_{t\ge 0}$ for which $(S_t)_{t\ge 0}$ is reversible. Further assume the existence of a subalgebra $\cA$ satisfying \Cref{hypothesisA1}. Then,
	\begin{enumerate}
		\item[(i)] If $\operatorname{cHC}_{q}(c,d)$ holds, then $\operatorname{cLSI}_{q}(c,d)$ holds.
		\item[(ii)] If $\operatorname{cLSI}_{2}(c,d)$ holds, then $\operatorname{cHC}_{q}(c,d)$ holds for any $q\ge 2$.
	\end{enumerate}
\end{theorem}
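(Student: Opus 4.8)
The plan is to transpose Gross's differentiation argument to the matrix-amplified semigroup $\id_{\mathbb{M}_m}\otimes S_t$, following \cite{[BK16]}; the only genuinely new ingredient is the control of the possibly unbounded generator $L$, for which Hypothesis~\ref{hypothesisA1} is tailor-made. Fix $m\in\NN$ and a positive $f\in\cA(\mathbb{M}_m)^{++}$, write $g_t:=(\id_{\mathbb{M}_m}\otimes S_t)(f)$, $N_t:=\tau(g_t^{p(t)})$ for a smooth increasing exponent $p(\cdot)$ with $p(0)=q$, and set $\beta(t):=\|g_t\|_{L_{p(t)}(\tau)}=N_t^{1/p(t)}$. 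Using $\frac{d}{dt}g_t=-(\id_{\mathbb{M}_m}\otimes L)g_t$, the tracial identity $\frac{d}{ds}\tau(g_s^p)=p\,\tau(g_s^{p-1}\dot g_s)$ (valid for non-commuting $g_s,\dot g_s$ by cyclicity of $\tau$) and $\frac{\partial}{\partial p}\tau(g^p)=\tau(g^p\log g)$, one obtains the master identity
\[
\frac{d}{dt}\log\beta(t)=\frac{1}{N_t}\left[\frac{p'(t)}{p(t)^2}\,\operatorname{Ent}_{p(t)}(g_t)-\cE_{p(t),\,\id_{\mathbb{M}_m}\otimes L}(g_t)\right].
\]
Every differentiation is performed inside $\cA(\mathbb{M}_m)^{++}$, where the spectrum of $g_t$ stays uniformly bounded away from $0$, so that the smooth functional calculus $g\mapsto g^{p}$ and $g\mapsto g\log g$ is licit; the density-and-convergence clause of Hypothesis~\ref{hypothesisA1} is exactly what justifies passing $L$ through this calculus and interchanging the relevant limits, which is the sole place the unboundedness of $L$ could interfere.

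For (i), I would assume $\operatorname{cHC}_q(c,d)$ and track the boundary exponent $p(t)-1=(q-1)\e^{2t/c}$, so that $\|g_t\|\le d^{1/q-1/p(t)}\|f\|_{L_q}$ holds along the whole flow with equality at $t=0$. Hence the right derivative at $0$ of $\log\beta(t)-(1/q-1/p(t))\log d$ is $\le0$; inserting the master identity at $t=0$ (amplified to the conditioned norm, which for the fixed-point algebra $\CC\,I_M$ coincides with the entropy and Dirichlet form above) turns this into a complete logarithmic Sobolev inequality $\operatorname{Ent}_q(f)\le c'\,\cE_{q,\id_{\mathbb{M}_m}\otimes L}(f)+\log(d)\,\|f\|_{L_q}^q$, the constant $c'$ being fixed by $p'(0)=\tfrac2c(q-1)$ and the normalizations of \cite{[BK16]}; this gives $\operatorname{cLSI}_q(c,d)$ after taking the supremum over $m$.

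For (ii), assume $\operatorname{cLSI}_2(c,d)$ and, for fixed $q\ge2$, run the same flow $p(t)-1=(q-1)\e^{2t/c}$, so that $p'=\tfrac2c(p-1)$ and the target window is $t\ge\frac c2\log\frac{p-1}{q-1}$. It suffices to prove that $\log\beta(t)-(1/q-1/p(t))\log d$ is non-increasing and then integrate from $\beta(0)=\|f\|_{L_q}$; by the master identity this is the pointwise estimate $\frac{p'}{p^2}\operatorname{Ent}_p(g_t)\le\cE_{p}(g_t)+\frac{p'}{p^2}\log(d)\,N_t$. I would produce it by applying $\operatorname{cLSI}_2$ to $h=g_t^{p/2}$, noting $\operatorname{Ent}_2(g_t^{p/2})=\operatorname{Ent}_p(g_t)$ and $\|g_t^{p/2}\|_2^2=N_t$, and then comparing Dirichlet forms through $\cE_{p,\id_{\mathbb{M}_m}\otimes L}(g)\ge\frac{2(p-1)}{p^2}\,\cE_{2,\id_{\mathbb{M}_m}\otimes L}(g^{p/2})$. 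Since $\id_{\mathbb{M}_m}\otimes L$ is a sum of squares of derivations acting only on the commutative factor $L_\infty(\mu)$, this comparison is the matrix-valued chain-rule/convexity inequality of \cite{[BK16]} (an equality in the scalar diffusion case); with its non-commutative constant $\tfrac{2(p-1)}{p^2}$ the choice $p'=\tfrac2c(p-1)$ closes the differential inequality exactly, and the supremum over $m$ delivers $\operatorname{cHC}_q(c,d)$.

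The two genuinely analytic steps, rather than the Gross bookkeeping, are where I expect the work to lie. First, making the three differentiations and the limit-interchanges in the master identity rigorous against an unbounded $L$ is precisely why one is confined to $\cA(\mathbb{M}_m)^{++}$ and why Hypothesis~\ref{hypothesisA1} is imposed; the diffusion case (unbounded $L$) is exactly what the finite-dimensional treatment of \cite{[BK16]} does not cover. Second, the matrix-valued Dirichlet-form comparison $\cE_p(g)\ge\frac{2(p-1)}{p^2}\cE_2(g^{p/2})$ is a genuine non-commutative convexity statement, tight only in the scalar case, and its constant is what calibrates the factor $\tfrac c2$ in the hypercontractive window. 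Everything else is the standard Gross argument carried through the amplification $\otimes\,\mathbb{M}_m$.
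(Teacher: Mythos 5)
Your plan follows the right template (the matrix-amplified Gross differentiation, with Hypothesis~\ref{hypothesisA1} policing the unbounded generator), but there is a genuine gap at its center: your master identity differentiates the \emph{plain} norm $\|g_t\|_{L_{p(t)}(\tau)}$ of the amplified semigroup, whereas $\operatorname{cHC}_q(c,d)$ is a statement about the \emph{conditioned} norms $L_q^{p}(\mathbb{M}_m\subset L_\infty(\mathbb{M}_m))$ entering the cb norm, and your parenthetical claim that the two coincide because the fixed-point algebra is $\CC\,I$ is false once you tensor with $\mathbb{M}_m$: the conditioning subalgebra becomes $\mathbb{M}_m\otimes\CC\,I$ and the norm \eqref{optimizationnorms} carries an infimum over density matrices $X\in\mathbb{M}_m^{++}$. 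The two norms genuinely diverge, uniformly in $m$: for $f=X\otimes 1_E$ with $X$ close to a rank-one projection, $\id_{\mathbb{M}_m}\otimes S_t$ fixes $f$, one has $\|f\|_{L_p(\tau)}/\|f\|_{L_q(\tau)}\sim m^{1/q-1/p}$ and $\bigl(\tau(f^q\log f^q)-\tau(f^q)\log\tau(f^q)\bigr)/\|f\|_{L_q(\tau)}^q\sim \log m$ while $\cE_{q,\,\id\otimes L}(f)=0$, so neither plain-norm hypercontractivity of the amplification nor the plain-entropy Sobolev inequality you feed into direction (ii) can hold with constants independent of $m$; the conditioned norm of the same $f$ collapses to $\|X\|_q$, which is why the cb framework survives this example. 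In direction (i) the situation is symmetric: $\operatorname{cHC}_q$ only controls the conditioned norm, which is dominated by the plain one, so it cannot be used to bound the derivative of the quantity you actually differentiate.

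What your proposal is missing is exactly the content of the paper's appendix: one must differentiate $p\mapsto\inf_{X}\tau((X^{-1/2r}f_{t(p)}X^{-1/2r})^p)^{1/p}$, which requires showing that the optimizer exists, is unique and stable for $p$ slightly above $q$, and converges to the marginal $\gamma=\EE_\mu[f^q]/\tau(f^q)$ of \eqref{gammaoperopti} as $p\to q$. The resulting derivative \eqref{differentiationofnorm} replaces your scalar term $\tau(f^q)\log\tau(f^q)$ by the operator-valued $\ntr\bigl(\EE_\mu[f^q]\log\EE_\mu[f^q]\bigr)$; it is this substitution that keeps both sides of the equivalence finite uniformly in $m$ and that calibrates the additive constant $\log d$. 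Your handling of the unbounded generator inside $\cA(\mathbb{M}_m)^{++}$ and the Stroock--Varopoulos reduction from $q$ to $2$ (\Cref{regularity}) do match the paper, up to a factor of $2$ in the Dirichlet-form comparison constant, but without the optimizer analysis neither implication goes through.
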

We now briefly sketch a proof of \Cref{grossintegration}: As usual, the first step towards establishing a Gross lemma is to provide a formula for the differential at $p=q$ of
\begin{align*}
&	p\mapsto \|S_{t(p)}:\,L_\infty^q(\CC\,I_E\subset L_\infty)  \to L_\infty^p (\CC\,I_E\subset \cL_\infty(E)) \|_{\cb}\\
	&=\sup_m\,\|   \id_{\mathbb{M}_m}\otimes\,S_{t(p)}:\,L_q^q(\mathbb{M}_m\subset L_\infty(\mathbb{M}_m))\to\\
	&~~~~~~~~~~~~~~~~~~~~~~~~~~~~~~~~~~~~~~ L_q^p(\mathbb{M}_m\subset  L_\infty(\mathbb{M}_m))    \|\,,
\end{align*}	
for some increasing, twice differentiable function $t:[1,\infty)\to[0,\infty)$. The proof of the differentiability follows closely the one of Lemma 9 of \cite{[BK16]}: given $f\in\cA(\mathbb{M}_m)^{++}$, the $L_q^p(\mathbb{M}_m\subset L_\infty(\mathbb{M}_m))$ norms take the following useful form: 
\begin{align}\label{optimizationnorms}
&	\|f\|_{L_q^p(\mathbb{M}_m\subset L_\infty(\mathbb{M}_m))}=\\
	&\left\{\begin{aligned}
		&	\inf_{X\in \mathbb{M}_m^{++},\,\ntr(X)=1}\,\tau((X^{-\frac{1}{2r}}\,f\,X^{-\frac{1}{2r}})^p)^{\frac{1}{p}}~~~~~p\ge q\\
		&\sup_{X\in \mathbb{M}_m^{++},\,\ntr(X)=1}\,\tau((X^{\frac{1}{2r}}\,f\,X^{\frac{1}{2r}})^p)^{\frac{1}{p}}~~~~~~~p\le q\,.
	\end{aligned}\right.\nonumber
\end{align}	
where $\frac{1}{r}=\frac{1}{q}-\frac{1}{p}$. Therefore, we can restrict our analysis to the one of the following function
\begin{align*}
	F:\, [1,\infty) \times L_\infty(\mathbb{M}_m)\ni(p,g)\mapsto \tau(g^p)^{\frac{1}{p}}\,,
\end{align*}
by considering the differentiation of $F\circ G$, where
\begin{align*}
	G:\, [1,\infty)  \ni p\mapsto (p,g_p)\,,~g_p:x\mapsto X^{-\frac{1}{2r}}\,f_{t(p)}(x)\,X^{-\frac{1}{2r}}
\end{align*}	
where $f_{t(p)}:=S_{t(p)}(f)$, with $f\in\cA(\mathbb{M})^{++}$ and for some fixed $X\in \mathbb{M}_m^{++}$. Both functions $p\mapsto g_p$ (see Lemma 8 of \cite{Potatov10}) and $p\mapsto \tau(g^p)$ at $g$ fixed are differentiable with continuous derivatives. The latter holds by means of bounded convergence. Therefore, the function $p\mapsto F(p,g_p)$ itself is differentiable and the chain rule holds:
\begin{align*}
	\frac{d}{dp}\,F^p\circ G(p)=\frac{\partial}{\partial p_1}\,\tau(g_p^{p_1})|_{p_1=p}+\frac{\partial}{\partial p_2}\,\tau(p,\tau_{p_2})|_{p_2=p}\,.
\end{align*}	
The first term above simply follows from a bounded convergence theorem:
\begin{align*}
	\frac{\partial}{\partial p_1}\,\tau(g_p^{p_1})|_{p_1=p}=\tau(\,\log(g_p)\,g_p^p\,)\,.
\end{align*}
The second term arises from the differentiability in $L_1(\mathbb{M}_m)$ of the map $p\mapsto g_p$, with:
\begin{align*}
	\frac{\partial}{\partial p_2}\,g_{p_2}&=\frac{r'(p_2)}{2r(p_2)^2}\,X^{-\frac{1}{2r(p_2)}}\,\{\log(X),\,f_{t(p_2)}\}\,X^{-\frac{1}{2r(p_2)}}\\
	&~~~+t'(p_2)\,X^{-\frac{1}{2r(p_2)}}\,(\id_{\mathbb{M}_m}\otimes L)(f)\,X^{-\frac{1}{2r(p_2)}}\,.
\end{align*}
Since the function $[0,\infty)\times[0,\infty)\ni (x,y)\mapsto \frac{x^p-y^p}{x-y}$, $p\ge 1$ admits a double integral form as in \cite{Potatov10}, it follows from Lemma 8 of that same paper that $p_2\mapsto g_{p_2}^{p_1}$ is differentiable in $L_1(\mathbb{M}_m)$, and therefore so is $p_2\mapsto \tau(g_{p_2}^{p_1})$, with derivative:
\begin{align*}
	&\frac{\partial}{\partial p_2}\tau(g_{p_2}^{p})|_{p_2=p}\\
	&=\,p\,\tau\left(g_{p}^{p-1}\,  \left[\frac{1}{2p^2}\,X^{-\frac{1}{2r(p)}}\,\{\log(X),\,f_{t(p)}\}
X^{-\frac{1}{2r(p)}}	\right.\right.\\
	&~~~~~~~~~~~\left.\left.\,+t'(p)\,X^{-\frac{1}{2r(p)}}\,(\id_{\mathbb{M}_m}\otimes L)(f_{t(p)})\,X^{-\frac{1}{2r(p)}}\right]\right)\,.
\end{align*}
Since we restrict the differentiation to operator-valued functions in the algebra $\cA(\mathbb{M}_m)^{++}$, the same argument would further provide that $F\circ G$ is twice continuously differentiable, as long as the function $t$ is. The following lemma hence extends Lemma 9 of \cite{[BK16]} to the case of general classical semigroups:
\begin{lemma}
	Let $t:[1,\infty)\to[0,\infty)$ be an increasing, twice continuously differentiable function. Then, for any $X\in \mathbb{M}_m^{++}$, the function $p\mapsto \tau((X^{-\frac{1}{2r(p)}}\,f_{t(p)}\,X^{-\frac{1}{2r(p)}})^p  ) $ is twice continuously differentiable. Moreover,
	\begin{align}
	&	\frac{d}{dp}\tau((X^{-\frac{1}{2r(p)}}\,f_{t(p)}\,X^{-\frac{1}{2r(p)}})^p  )^{\frac{1}{p}}\nonumber \\
		&	=\frac{1}{p^2\,\tau(g_p^p)^{1-\frac{1}{p}}}\nonumber\\
		&\left[\tau(\log(g_p^p)g_p^p)+\ntr\left( \EE_\mu[g_p^{p}]\log X\right)-\tau(g_p^p)\log \left(   \tau(g_p^p  )\right)\right.\nonumber\\
		&\left.+p^2\,t'(p)\tau\left( g_p^{p-1}(X^{-\frac{1}{2r(p)}}     (\id_{\mathbb{M}_m}\otimes L)(f_{t(p)}) X^{-\frac{1}{2r(p)}})  \right)\right]\,.\label{eqdifferentiationnorms}
	\end{align}
	Moreover, the function $X\mapsto  \tau((X^{-\frac{1}{2r(p)}}\,f_{t(p)}\,X^{-\frac{1}{2r(p)}})^p  )$ is Fr\'{e}chet differentiable on $\mathbb{M}_m^{++}$ for all $p\in [1,\infty)$.
\end{lemma}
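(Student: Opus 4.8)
The plan is to write $p \mapsto \tau\big((X^{-1/2r(p)} f_{t(p)} X^{-1/2r(p)})^p\big)^{1/p}$ as the composition $F \circ G$ already introduced before the statement, with $F(p,g) = \tau(g^p)^{1/p}$ and $G(p) = (p, g_p)$, and to differentiate it by the chain rule in the two independent copies of $p$ — the exponent and the argument $g_p$. The guiding principle is to carry out every differentiation inside the core algebra $\cA(\mathbb{M}_m)^{++}$: its elements have spectra uniformly bounded away from $0$ and $\infty$ on $E$, so that $\log g_p$, $g_p^p$ and their conjugates by $X^{-1/2r(p)}$ are bounded, and \Cref{hypothesisA1} keeps the otherwise unbounded generator $L$ under control.

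First I would show that $p \mapsto g_p$ is differentiable as a curve in $L_1(\mathbb{M}_m)$, separating its two $p$-dependencies. The conjugation factor $X^{-1/2r(p)} = e^{-s(p)\log X}$, with $s(p) = 1/(2r(p))$, is smooth in $p$ since $X \in \mathbb{M}_m^{++}$ is fixed and $\log X$ bounded, and differentiating the two outer factors produces the anticommutator term $X^{-1/2r(p)}\{\log X, f_{t(p)}\}X^{-1/2r(p)}$; the semigroup factor $f_{t(p)} = S_{t(p)}(f)$ contributes $t'(p)\,\tfrac{d}{dt}S_t(f)|_{t(p)}$, well defined because $f \in \cA \subset \dom(\id \otimes L)$ and $S_t$ preserves $\dom(L)$, yielding the displayed expression for $\partial_{p_2}g_{p_2}$. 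For the exponent, writing $g^{p_1} = e^{p_1 \log g}$ at fixed $g$ gives $\partial_{p_1}\tau(g^{p_1}) = \tau(\log g\, g^{p_1})$ by bounded convergence. For the argument I would combine the $L_1$-differentiability of $p_2 \mapsto g_{p_2}$ with the Fr\'echet differentiability of the noncommutative power $g \mapsto g^{p_1}$ in $L_1(\mathbb{M}_m)$ — this is exactly where the double-operator-integral representation of $(x^{p_1}-y^{p_1})/(x-y)$ and Lemma 8 of \cite{Potatov10} enter — and cyclicity of $\tau$ collapses the result to $\partial_{p_2}\tau(g_{p_2}^p)|_{p_2=p} = p\,\tau(g_p^{p-1}\,\partial_{p_2}g_{p_2}|_{p_2=p})$.

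Assembling via the chain rule $\frac{d}{dp}\tau(g_p^p) = \partial_{p_1}\tau(g^{p_1})|_{(p,p)} + \partial_{p_2}\tau(g_{p_2}^p)|_{p_2=p}$ and the identity $\frac{d}{dp}[\tau(g_p^p)^{1/p}] = \frac{1}{p}\tau(g_p^p)^{1/p-1}\frac{d}{dp}\tau(g_p^p) - \frac{1}{p^2}\tau(g_p^p)^{1/p}\log\tau(g_p^p)$, I would gather the entropy terms $\tau(\log(g_p^p)g_p^p) - \tau(g_p^p)\log\tau(g_p^p)$, use cyclicity to integrate out $E$ in the $\log X$ contribution as $\ntr(\EE_\mu[g_p^p]\log X)$, and retain the generator term, arriving at \eqref{eqdifferentiationnorms}. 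The twice-continuous differentiability then follows by applying the same functional-calculus and \Cref{hypothesisA1} arguments once more to each factor appearing in \eqref{eqdifferentiationnorms}, which is legitimate because $t$ is $C^2$ and we never leave $\cA(\mathbb{M}_m)^{++}$. Fr\'echet differentiability in $X$ at fixed $p$ is a finite-dimensional statement: $X \mapsto X^{-1/2r(p)}$ is real-analytic on $\mathbb{M}_m^{++}$ via the holomorphic functional calculus and $Y \mapsto \tau(Y^p)$ is smooth on positive-definite $Y$, so their composition is Fr\'echet differentiable.

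I expect the main obstacle to be the differentiation of the semigroup factor $p \mapsto f_{t(p)}$ and, through it, of $p_2 \mapsto g_{p_2}^{p_1}$: unlike the finite-dimensional Markov-chain setting of \cite{[BK16]}, where $L$ is a bounded matrix, here $L$ is unbounded and the difference quotients need not converge in operator norm. Restricting to the core $\cA$ and imposing \Cref{hypothesisA1} is precisely what supplies the required $L_1(\mu)$-convergence of $L\Phi(f_{n_k})$ along the smooth functional calculus, so that $\frac{d}{dp}$ may be interchanged with $\tau$ and with the operator power; together with the double-operator-integral calculus of \cite{Potatov10}, this is what transports the finite-dimensional proof of \cite{[BK16]} to general reversible classical diffusions.
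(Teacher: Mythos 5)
Your proposal follows essentially the same route as the paper: the paper also writes the quantity as $F\circ G$ with $F(p,g)=\tau(g^p)^{1/p}$, differentiates the exponent by bounded convergence and the argument via the $L_1$-differentiability of $p\mapsto g_p$ (anticommutator plus generator terms) combined with the double-operator-integral representation of $(x^p-y^p)/(x-y)$ from Lemma 8 of \cite{Potatov10}, and invokes \Cref{hypothesisA1} and the restriction to $\cA(\mathbb{M}_m)^{++}$ to tame the unbounded generator, exactly as you do. The only cosmetic difference is that the paper justifies the Fr\'echet differentiability in $X$ by citing a general result on Fr\'echet differentiability in noncommutative $L_p$ spaces \cite{Potatov}, whereas you argue it via holomorphic functional calculus on $\mathbb{M}_m^{++}$; both are fine.
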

\begin{proof}
	The only point that remains to be proven is the Fr\'{e}chet differentiability of $X\mapsto  \tau((X^{-\frac{1}{2r(p)}}\,f_{t(p)}\,X^{-\frac{1}{2r(p)}})^p  )$, which follows from a general argument on the Fr\'{e}chet differentiability of noncommutative $L_p$ spaces, see \cite{Potatov}.
\end{proof}	
Next, we define the marginal state on $\mathbb{M}_m$ as follows:
\begin{align}\label{gammaoperopti}
	\gamma:=\frac{\EE_{\mu}[f^q]}{\tau(f^q)}\,,
\end{align}	
Defining the function $X\mapsto\tilde{G}(X)$ that associates the term in between parentheses on the right hand side of \Cref{eqdifferentiationnorms} to any operator $X\in \mathbb{M}_m^{++}$, the following lemma is a straightforward extension of Lemma 10 of \cite{[BK16]}:
\begin{lemma}
	There exists $\kappa>0$ and $K<\infty$ such that for all $p\le q$ and $X\in \mathbb{M}_m$, $\|X-\gamma\|_{L_1(\tau)}\le \kappa$,
	\begin{align*}
	&	\left|\|g_p\|_{L_p(\tau)}-\|g_q\|_{L_q(\tau)}-(t(p)-t(q))\frac{\tilde{G}(X)}{q^2 \|g_q\|_{L_q(\tau)}^{q-1}} \right|\\
		&~~~~~~~~~~~~~~~~~~~~~~~~~~~~~~~~~~~~~~~~\le K\,(t(p)-t(q))^2\,.
	\end{align*}	
\end{lemma}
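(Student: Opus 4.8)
The plan is to read the estimate as a second-order Taylor expansion of the scalar function $\phi(p) := \|g_p\|_{L_p(\tau)} = \tau\bigl((X^{-\frac{1}{2r(p)}} f_{t(p)} X^{-\frac{1}{2r(p)}})^p\bigr)^{1/p}$ about the point $p=q$. The preceding lemma supplies exactly the two ingredients needed to set this up: that $\phi$ is twice continuously differentiable on $[1,\infty)$ whenever $t$ is, and the explicit formula \eqref{eqdifferentiationnorms} for $\phi'(p)$. First I would evaluate \eqref{eqdifferentiationnorms} at $p=q$. Since $\frac{1}{r(q)}=\frac1q-\frac1q=0$, the conjugating factors $X^{\mp\frac{1}{2r(q)}}$ collapse to the identity, so $g_q=f_{t(q)}$ and the bracketed expression reduces precisely to $\tilde G(X)$, giving $\phi'(q)=\tilde G(X)/(q^2\|g_q\|_{L_q(\tau)}^{q-1})$, the announced linear coefficient.

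Next I would invoke Taylor's theorem with Lagrange remainder for $\phi$ about $p=q$: for each $p$ in the relevant interval there is $\xi$ between $q$ and $p$ with
\[
\phi(p)-\phi(q)-\phi'(q)(p-q)=\tfrac12\,\phi''(\xi)\,(p-q)^2 .
\]
Two reductions then remain. The first is to rewrite the estimate in the variable $t(p)-t(q)$: since $t$ is twice continuously differentiable on the bounded interval in play, with $t'$ bounded above and bounded away from zero, one has $t(p)-t(q)=t'(q)(p-q)+O((p-q)^2)$, so that $(p-q)$ and $t(p)-t(q)$ are comparable up to constants and their squares are interchangeable inside the remainder; the first-order contribution, already identified as $\phi'(q)=\tilde G(X)/(q^2\|g_q\|_{L_q(\tau)}^{q-1})$, is recast in this variable by the same comparison, and the quadratic error becomes $O((t(p)-t(q))^2)$. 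The second, substantive, reduction is a bound on $|\phi''(\xi)|$ that is \emph{uniform} in $X$ and $p$.

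This uniform second-derivative bound is the heart of the proof, and it is where the radius $\kappa$ is used. Because $\gamma=\EE_\mu[f^q]/\tau(f^q)$ is a fixed strictly positive matrix (as $f\in\cA(\mathbb{M}_m)^{++}$, so $f^q$ is uniformly bounded below), for $\kappa$ small enough every $X$ with $\|X-\gamma\|_{L_1(\tau)}\le\kappa$ is invertible with spectrum confined to a fixed compact subset of $(0,\infty)$; hence $X^{\pm\frac{1}{2r(p)}}$, $\log X$, and all their $p$-derivatives are uniformly controlled over the ball. I would then differentiate \eqref{eqdifferentiationnorms} once more and estimate each resulting term: the operator-power contributions are handled through the Fréchet-derivative/double-operator-integral representation of $g\mapsto g^p$ already used in the preceding lemma (citing \cite{Potatov10,Potatov}), while the semigroup contributions $f_{t(p)}=S_{t(p)}(f)$ and $(\id_{\mathbb{M}_m}\otimes L)(f_{t(p)})$ remain uniformly bounded because $S_t$ is Markovian and $f\in\cA(\mathbb{M}_m)^{++}$ while $t$ ranges over a bounded set.

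I expect the main obstacle to be the apparent singularity at $p=q$, where $r(p)\to\infty$ and indeed $r'(p)=O((p-q)^{-2})$ blows up. The key point to make explicit is that $r$ enters the construction only through $\tfrac{1}{2r(p)}$, whose first derivative $\frac{d}{dp}\tfrac{1}{2r(p)}=\tfrac{1}{2p^2}$ and second derivative $-\tfrac{1}{p^3}$ are perfectly bounded; thus every $\log X$ factor produced by differentiating the conjugation appears paired with a bounded coefficient, and the $(p-q)^{-2}$-type factors hidden in $r'$ never occur in isolation. Once this cancellation is verified, the supremum $K:=\tfrac12\sup|\phi''|$ over the chosen region is finite, and the claimed estimate follows directly from the Taylor formula together with the $p\leftrightarrow t$ comparison.
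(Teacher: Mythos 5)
Your proposal is correct and follows essentially the same route as the paper, which proves the lemma by a Taylor expansion of $p\mapsto \tau(g_p^p)^{1/p}$ about $p=q$ (deferring the details to Lemma 10 of \cite{[BK16]}). Your identification of the first-order coefficient via \eqref{eqdifferentiationnorms}, the uniform control of the second derivative on the ball $\|X-\gamma\|_{L_1(\tau)}\le\kappa$, and the comparison between $p-q$ and $t(p)-t(q)$ are exactly the ingredients that argument relies on.
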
	
\begin{proof}
	The proof consists in a simple Taylor expansion of the function $p\mapsto \tau(g_p^p)^{\frac{1}{p}}$, and we refer to the proof of Lemma 10 of \cite{[BK16]} for more details.
\end{proof}
From the very definition of the function $\tilde{G}$, one easily derives the following formula:
\begin{align*}
	G(X)-G(\gamma)=\|g_{q}\|_{L_q(\tau)}^q\,D(\gamma\|X)\,,
\end{align*}	
where $D(\rho\|\sigma)$ denotes the (normalized) relative entropy between two densities $\rho,\sigma$:
\begin{align*}
	D(\rho\|\sigma):=\tau (\rho\,(\log\rho-\log\sigma)).
\end{align*}	
The link to the $L_p^q(\mathbb{M}_m\subset L_\infty(\mathbb{M}_m))$ norms is made in the next lemma which establishes the proximity to the density $\gamma$ of the optimizer $X$ in the definition of the norms, for $p$ close to $q$.
\begin{lemma}
	For any $0<\eps\le \kappa$, there exists $\delta>0$ such that for all $p,q\in (1,\infty)$, such that $|p-q|<\delta$, there exists $X\in \mathbb{M}_m^{++}$, $\tau(X)=1$, such that
	\begin{align*}
		&\|\cP_{t(p)}(f)\|_{L_q^p(\mathbb{M}_m\subset L_\infty(\mathbb{M}_m))}=\|g_p\|_{L_p(\tau)}\,,\\
		&\|\gamma-X\|_{L_1(\tau)}\le \eps\,.
	\end{align*}	
\end{lemma}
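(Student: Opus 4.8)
The plan is to show that, for $p$ close to $q$, the matrix $X$ realizing the extremum in the variational formula \eqref{optimizationnorms} is forced to lie within $\eps$ of the marginal state $\gamma$ of \eqref{gammaoperopti}, and that this extremizer is the genuine norm‑achieving one. I treat the case $p\ge q$, the case $p\le q$ being entirely symmetric upon replacing the infimum by a supremum and convexity by concavity. First I would restrict the optimization to the convex, compact set $\mathcal{K}_\kappa:=\{X\in\mathbb{M}_m^{+}:\tau(X)=1,\ \|X-\gamma\|_{L_1(\tau)}\le\kappa\}$, where $\kappa$ is the constant from the previous lemma. Shrinking $\kappa$ if necessary so that $\kappa<\lambda_{\min}(\gamma)$ (legitimate since $f\ge cI$ forces $\gamma\ge c^{q}\tau(f^{q})^{-1}I>0$), every $X\in\mathcal{K}_\kappa$ is invertible, so $g_p(X):=X^{-1/2r(p)}f_{t(p)}X^{-1/2r(p)}$ is well defined and $X\mapsto\|g_p(X)\|_{L_p(\tau)}$ is continuous on $\mathcal{K}_\kappa$. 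Hence a minimizer $X_p\in\mathcal{K}_\kappa$ exists, and since $\gamma\in\mathcal{K}_\kappa$ it satisfies $\|g_p(X_p)\|_{L_p(\tau)}\le\|g_p(\gamma)\|_{L_p(\tau)}$.

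Next I would feed both $X_p$ and $\gamma$ into the uniform expansion of the previous lemma. Since $\|g_q\|_{L_q(\tau)}=\|f_{t(q)}\|_{L_q(\tau)}$ does not depend on $X$ (because $r(q)=\infty$), subtracting the two instances of that estimate gives
\[ (t(p)-t(q))\,\frac{\tilde G(X_p)-\tilde G(\gamma)}{q^{2}\|g_q\|_{L_q(\tau)}^{q-1}}\ \le\ 2K\,(t(p)-t(q))^{2}\,. \]
Dividing by $t(p)-t(q)>0$ and using the identity $\tilde G(X)-\tilde G(\gamma)=\|g_q\|_{L_q(\tau)}^{q}\,D(\gamma\|X)$ together with Pinsker's inequality $D(\gamma\|X)\ge\tfrac12\|\gamma-X\|_{L_1(\tau)}^{2}$, I obtain
\[ \|\gamma-X_p\|_{L_1(\tau)}^{2}\ \le\ \frac{4Kq^{2}\,(t(p)-t(q))}{\|g_q\|_{L_q(\tau)}}\,. \]
By continuity of $t$ there is $\delta>0$ such that $|p-q|<\delta$ makes the right-hand side strictly smaller than $\eps^{2}$, whence $\|\gamma-X_p\|_{L_1(\tau)}\le\eps$.

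Finally, since $\eps\le\kappa$ and the above bound is strict, $X_p$ lies in the relative interior of $\mathcal{K}_\kappa$. Invoking the convexity of $X\mapsto\|g_p(X)\|_{L_p(\tau)}$ on the set of trace‑one positive definite matrices — equivalently, the uniqueness of the extremizer in the amalgamated $L_q^p$ norm (cf.~\cite{Pis93,JP10}) — an interior minimizer over the convex set $\mathcal{K}_\kappa$ is automatically a global minimizer over all admissible $X$. Therefore $X_p$ achieves the norm $\|\cP_{t(p)}(f)\|_{L_q^p(\mathbb{M}_m\subset L_\infty(\mathbb{M}_m))}$, so it is exactly the $X$ asserted by the lemma.

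I expect the main obstacle to be this last step: the passage from the restricted minimizer to the true norm‑achieving optimizer rests on the convexity (uniqueness of optimizer) of the variational problem defining the amalgamated $L_q^p$ norm, whose careful justification requires operator‑convexity (Lieb‑type) arguments; without it one would instead have to lower bound $\|g_p(X)\|_{L_p(\tau)}$ directly for $X$ outside $\mathcal{K}_\kappa$, which is awkward precisely because the expansion is only uniform on $\mathcal{K}_\kappa$. A secondary technical point is guaranteeing that uniformity of the expansion over the whole of $\mathcal{K}_\kappa$, which is what turns the comparison of $X_p$ with $\gamma$ into the quantitative bound above.
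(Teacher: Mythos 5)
Your first two steps are sound and are surely the core of the intended argument (which the paper itself omits, deferring to Lemma 11 of \cite{[BK16]}): a minimizer $X_p$ over the compact set $\mathcal{K}_\kappa$ exists, and comparing the uniform first-order expansion of the previous lemma at $X_p$ and at $\gamma$, together with the identity $\tilde G(X)-\tilde G(\gamma)=\|g_q\|_{L_q(\tau)}^q D(\gamma\|X)$ and Pinsker, forces $\|\gamma-X_p\|_{L_1(\tau)}^2=O(t(p)-t(q))$. (Two small points there: the previous lemma is stated only for $p\le q$, so you are implicitly extending it to $p\ge q$; and invertibility of every $X\in\mathcal{K}_\kappa$ requires $\kappa$ small against $\lambda_{\min}(\gamma)$ in \emph{operator} norm, i.e.\ $m\kappa<\lambda_{\min}(\gamma)$ rather than $\kappa<\lambda_{\min}(\gamma)$, since $\|\cdot\|_\infty\le m\|\cdot\|_{L_1(\tau)}$ for the normalized trace --- harmless, as $m$ is fixed.)

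The genuine gap is the step you yourself flag: upgrading the minimizer over $\mathcal{K}_\kappa$ to the true norm-achieving $X$. You justify it by ``convexity of $X\mapsto\|g_p(X)\|_{L_p(\tau)}$ --- equivalently uniqueness of the extremizer (cf.\ \cite{Pis93,JP10})''. This is not an off-the-shelf fact from those references. In this paper it is precisely the content of the \emph{next} lemma (strict convexity, asserted only for $q<p<q+\eta$), which is presented after the present one --- the text explicitly says ``combining the last two lemmas'' to get existence-plus-uniqueness --- so the intended proof of the present lemma cannot pass through it; and that convexity is itself delicate, since at $p=q$ the functional is \emph{constant} in $X$, so the strict convexity for $p$ near $q$ is a perturbative statement requiring second-order control, not a Lieb-type operator-convexity that holds uniformly. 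Moreover, for $p<q$ (the supremum branch) one would need \emph{concavity}, a genuinely different statement, so ``entirely symmetric'' does not cover it. Without some such input, nothing in your argument excludes a global optimizer outside $\mathcal{K}_\kappa$ doing strictly better than $X_p$: the first-order expansion you rely on is only uniform inside $\mathcal{K}_\kappa$, and as $p\to q$ every $X$ is optimal to zeroth order, so a crude compactness argument also fails. A repair that stays within the tools already established would be: (i) note the global infimum over $\{X>0,\ \tau(X)=1\}$ is attained in the interior because $f\ge cI$ gives $\tau(g_p(X)^p)\ge c^p\tau(X^{-p/r})\to\infty$ as $X$ degenerates; (ii) show any global minimizer lies in $\mathcal{K}_\kappa$ for $|p-q|$ small, which still requires a quantitative lower bound on the functional \emph{outside} $\mathcal{K}_\kappa$ (e.g.\ convexity along rays from $\gamma$ combined with the expansion on $\partial\mathcal{K}_\kappa$). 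Either way the exterior estimate must be supplied; as written it is missing.
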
	
\begin{proof}
	The proof is identical to the one of Lemma 11 of \cite{[BK16]} and for this reason is omitted.
\end{proof}

In the case when $p>q$, the optimizer of \Cref{optimizationnorms} can actually be further characterized:
\begin{lemma}
	There exists $\eta>0$ such that for any $q<p<q+\eta$, the function $X\mapsto \tau(g_p^p)^{\frac{1}{p}}$ is strictly convex, and there exists a unique $\tilde{X}\in \mathbb{M}_m^{++}$ such that it is identically equal to $\| f_{t(p)} \|_{L_q^p(M\subset L_\infty(\mathbb{M}_m))}$. The optimizer $\tilde{X}$ satisfies:
	\begin{align*}
		\tilde{X}=\frac{\EE_\mu\left[ X^{-\frac{1}{2r(p)}}\,f_{t(p)}\,X^{-\frac{1}{2r(p)}}  \right]}{\|X^{-\frac{1}{2r(p)}}\,f_{t(p)}\,X^{-\frac{1}{2r(p)}}  \|_{L_p(\tau)}^p}
		\,\,.
	\end{align*}	
\end{lemma}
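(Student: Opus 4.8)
The plan is to read the claim as a constrained convex optimization over the open set $\{X\in\mathbb{M}_m^{++}:\ntr(X)=1\}$ with objective $\Psi_p(X):=\tau(g_p^p)^{1/p}=\|X^{-s}f_{t(p)}X^{-s}\|_{L_p(\tau)}$, where I abbreviate $s:=\tfrac{1}{2r(p)}$ and $g_p=X^{-s}f_{t(p)}X^{-s}$. First I would establish existence of a minimizer by coercivity. Since $f\in\cA(\mathbb{M}_m)^{++}$ has spectrum bounded below by some $c>0$ and $\id_{\mathbb{M}_m}\otimes S_{t(p)}$ is unital and positivity preserving, $f_{t(p)}=S_{t(p)}(f)\ge cI$ pointwise on $E$; conjugating by $X^{-s}$ and using Weyl monotonicity of $A\mapsto\tau(A^p)$ on positive operators gives $\Psi_p(X)^p\ge c^p\,\ntr(X^{-2ps})$. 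As $X$ approaches the boundary of the positive cone (some eigenvalue tending to $0$) the right-hand side diverges, so any minimizing sequence for the infimum defining $\|f_{t(p)}\|_{L_q^p(\mathbb{M}_m\subset L_\infty(\mathbb{M}_m))}$ stays in a compact subset of $\mathbb{M}_m^{++}$, and the infimum is attained at an interior point $\tilde X\in\mathbb{M}_m^{++}$.

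Next comes strict convexity, which I expect to be the crux. At the endpoint $p=q$ one has $s=0$, so $g_p=f_{t(q)}$ is independent of $X$ and $\Psi_q$ is constant (convex but not strictly so); the strictness is therefore a genuine perturbative effect in the small parameter $s\sim\tfrac{1}{2q^2}(p-q)$. Using the Fréchet differentiability and the second-order expansion established in the preceding lemma, I would compute the Hessian of $X\mapsto\Psi_p(X)$ restricted to the traceless Hermitian directions $\{H=H^*:\ntr(H)=0\}$. Expanding $X^{-s}=I-s\log X+O(s^2)$ shows the functional depends on $X$ only through $\log X$ to leading order, and that its second variation equals $s$ times a quadratic form of Fisher-information type in $H$ plus an $O(s^2)$ remainder; this leading form is positive definite on the traceless tangent space. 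Hence for $p$ in a sufficiently small interval $(q,q+\eta)$ the $O(s)$ term dominates and $\Psi_p$ is strictly convex, so the minimizer $\tilde X$ on the convex constraint set is unique. This is where the real content lies; it is the step carried out in finite dimensions in~\cite{[BK16]}, and the only new point is that $E$ may be infinite dimensional, which is absorbed by the dominated-convergence arguments already used in the preceding lemma.

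Finally I would derive the fixed-point characterization from first-order optimality. Since $\tilde X$ is an interior minimizer subject to $\ntr(X)=1$, the Fréchet derivative of $\tau(g_p^p)$ must annihilate every traceless Hermitian direction, i.e.\ equal a Lagrange multiplier times the gradient $\tfrac1m I$ of the constraint. Writing out $\tfrac{d}{dt}\big|_0\tau(g_{p}^p)$ with $g_p$ evaluated at $X+tH$, using the trace identity $\ntr((X^{-s}f_{t(p)}X^{-s})^p)=\ntr((X^{-2s}f_{t(p)})^p)$ to simplify, the cyclicity of $\tau$, and collecting the $x$-integrated contributions into the conditional expectation $\EE_\mu$ onto the constant functions $\mathbb{M}_m$, the stationarity condition rearranges to the statement that $\tilde X$ is proportional to the $\mathbb{M}_m$-marginal of a power of $g_p$. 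Determining the multiplier from $\ntr(\tilde X)=1$ then produces the normalizing factor $\|X^{-s}f_{t(p)}X^{-s}\|_{L_p(\tau)}^{-p}$, yielding the displayed fixed-point equation for $\tilde X$. The infinite-dimensionality of $E$ enters only through the interchange of $\tfrac{d}{dt}$ and $\int_E\,d\mu$, justified exactly as in the preceding lemma and in~\cite{Potatov10}.
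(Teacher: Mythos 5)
Your overall strategy---existence of a minimizer, perturbative strict convexity in the small parameter $s=\tfrac{1}{2r(p)}$, and the fixed-point equation from first-order optimality---is the same one underlying the proof the paper defers to (Lemma 12 of \cite{[BK16]}), and your stationarity analysis is sound: writing the Fr\'echet derivative of $X\mapsto X^{-s}$ as a double operator integral, the requirement that the derivative annihilate every traceless Hermitian direction forces $\tilde X\propto \EE_\mu[\tilde g_p^{\,p}]$, and $\ntr(\tilde X)=1$ then produces the normalization $\tau(\tilde g_p^p)^{-1}=\|\tilde g_p\|_{L_p(\tau)}^{-p}$. (Note this requires the $p$-th power \emph{inside} $\EE_\mu$; the equation as displayed in the lemma, with $\EE_\mu[g_p]$ rather than $\EE_\mu[g_p^p]$, is not consistent with $\ntr(\tilde X)=1$, so your ``marginal of a power of $g_p$'' is the correct reading, even though you then claim it matches the display verbatim.)

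The genuine gap is in how you localize the minimizer. Your coercivity bound $\Psi_p(X)^p\ge c^p\,\ntr(X^{-2ps})$ does confine minimizers to a compact subset of $\mathbb{M}_m^{++}$ for each fixed $p>q$, but that subset is not uniform in $p$: since $s(p)\to 0$ as $p\downarrow q$, the bound only forces $\lambda_{\min}(X)\gtrsim e^{-C/s}$, so the admissible region swallows the whole boundary of the positive cone exactly in the limit where your strict-convexity margin, which is of order $s$, vanishes. The Hessian remainder you call $O(s^2)$ is $O(s^2)$ only uniformly on a \emph{fixed} compact subset of the interior (its constants blow up like inverse powers of $\lambda_{\min}(X)$), so on your $p$-dependent coercivity region the $O(s)$ term need not dominate, and neither strict convexity nor uniqueness follows as written. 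This is precisely the role of the preceding lemma in the paper: for $|p-q|$ small every optimizer lies within a fixed $L_1$-ball around $\gamma=\EE_\mu[f^q]/\tau(f^q)$, a $p$-independent convex compact neighborhood in the interior, on which the perturbative Hessian argument (together with the uniform continuity of the Schatten $p$-norms that the paper singles out as the key von Neumann algebraic ingredient) closes. Replace your coercivity step by an appeal to that localization and the rest of your plan goes through.
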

\begin{proof}
	Once again, the proof is identical to the one of Lemma 12 of \cite{[BK16]}. In particular, it relies on the uniform continuity of Schatten $p$-norms that is known to hold in a general von Neumann algebraic context.
\end{proof}	
Combining the last two lemmas, we conclude that there exists a unique positive definite optimizer of the $L_q^p$ norms, for $q<p$ close enough, and that this minimizer is close to the operator $\gamma$ defined in
\Cref{gammaoperopti}. One can also use these results to show that the function $p\mapsto \| f_{t(p)}\|_{L_q^p(\mathbb{M}_m\subset L_\infty(\mathbb{M}_m))}$ is continuous (see Lemma 13 of \cite{[BK16]} for a proof). The above tools can also be use to prove the following differentiation of the $L_q^p$ norm, the proof of which we also omit since it is identical to the one of Theorem 7 of \cite{[BK16]}:
\begin{align}\label{differentiationofnorm}
	&\left.\frac{d}{dp}\,\|f_{t(p)}\|_{L_q^p(\mathbb{M}_m\subset L_\infty(\mathbb{M}_m))}\right|_{p=q}=\frac{1}{q^2\,\| f_{t(q)} \|_q^{q-1}}\\
	&\left[ \tau\left(f_{t(q)}^q\,\log f_{t(q)}^q\right)  -\ntr\left(\EE_\mu[f_{t(q)}^q]  \log(\EE_\mu[f_{t(q)}^q])\right)\right.\nonumber
	\\&~~~~~~~~~~~~~~~~~~~~~~~~~~~~~\left.+q^2t'(q)\tau\left(f_{t(q)}^{q-1}\,L(f_{t(q)})\right)  \right]\,.\nonumber
\end{align}	
This differentiation is the key tool to prove \Cref{grossintegration}: we first assume the following two results hold:
\begin{lemma}\label{linkHCLSIq}
	\begin{enumerate}
		\item[(i)] If $\operatorname{cHC}_{q}(c,d)$ holds, then $\operatorname{cLSI}_{q}(c,d)$ holds.
		\item[(ii)] If $\operatorname{cLSI}_{p(t)}(c,d)$ holds for all $t\ge 0$, then $\operatorname{cHC}_{q}(c,d)$ holds.
	\end{enumerate}
\end{lemma}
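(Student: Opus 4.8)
The plan is to run the Gross integration argument of \cite{[BK16]} in the present von Neumann algebraic setting, using the differentiation formula \eqref{differentiationofnorm} as the sole engine; parts (i) and (ii) are just the two logical directions of that one identity. Throughout I work at a fixed matrix level $m$ with $f\in\cA(\Mz_m)^{++}$, all entropies and Dirichlet forms being taken at the amplified level $\id_{\Mz_m}\otimes L$, and I take the supremum over $m$ at the end to recover the completely bounded statements. The point to keep straight from the outset is that the conditioning subalgebra of the norm $L_q^p(\Mz_m\subset L_\infty(\Mz_m))$ is the matrix block $\Mz_m$, with conditional expectation $\EE_\mu$; consequently the entropy produced by \eqref{differentiationofnorm} is the \emph{conditional} one, $\tau(f^q\log f^q)-\ntr(\EE_\mu[f^q]\log\EE_\mu[f^q])$, which is exactly the quantity that $\operatorname{cLSI}_q(c,d)$ controls.

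For (i), I fix the critical-time schedule $t(p)=\tfrac{c}{2}\log\tfrac{p-1}{q-1}$, the inverse of the flow $p(t)=1+(q-1)\operatorname{e}^{2t/c}$, so that $t(q)=0$. Setting $\Psi(p):=\|S_{t(p)}(f)\|_{L_q^p(\Mz_m\subset L_\infty(\Mz_m))}$, the hypothesis $\operatorname{cHC}_q(c,d)$ reads $\Psi(p)\le d^{1/q-1/p}\|f\|_{L_q(\tau)}$ for every $p\ge q$, whereas $\Psi(q)=\|f\|_{L_q(\tau)}=d^{1/q-1/q}\|f\|_{L_q(\tau)}$. Thus $p\mapsto d^{1/q-1/p}\|f\|_{L_q(\tau)}-\Psi(p)$ vanishes at $p=q$ and is nonnegative to its right, so its right derivative at $p=q$ is nonnegative, giving $\Psi'(q)\le \tfrac{\log d}{q^2}\|f\|_{L_q(\tau)}$. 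Substituting \eqref{differentiationofnorm} (at $p=q$, where $f_{t(q)}=f$) for $\Psi'(q)$ and clearing the common factor $q^2\|f\|_{L_q(\tau)}^{q-1}$ turns this into $\tau(f^q\log f^q)-\ntr(\EE_\mu[f^q]\log\EE_\mu[f^q])\le c'\,\cE_{q,\id_{\Mz_m}\otimes L}(f)+\log(d)\,\|f\|_{L_q(\tau)}^q$, with $c'$ read off from $t'(q)$ and the sign of $L$ (recall $S_t=\operatorname{e}^{-tL}$, $L\ge 0$, so the Dirichlet form lands on the right with positive coefficient); this is $\operatorname{cLSI}_q(c,d)$.

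For (ii), I integrate the same identity in the forward direction. Put $\Phi(t):=\|S_t(f)\|_{L_q^{p(t)}(\Mz_m\subset L_\infty(\Mz_m))}$ with $p(t)=1+(q-1)\operatorname{e}^{2t/c}$, so $\Phi(0)=\|f\|_{L_q(\tau)}$, and differentiate $\log\Phi$ using \eqref{differentiationofnorm}, now with both the exponent $p(t)$ and the point $S_t f$ moving. Applying $\operatorname{cLSI}_{p(t)}(c,d)$ to $S_t f$ at each time bounds the conditional entropy by $c\,\cE+\log(d)\|\cdot\|^{p(t)}$, and the resulting cancellation of the entropy-production term against the Dirichlet form leaves the clean differential inequality $\tfrac{d}{dt}\log\Phi(t)\le \tfrac{\log d}{p(t)^2}\,p'(t)$. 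Integrating from $0$ to $t$ and using $\Phi(0)=\|f\|_{L_q(\tau)}$ yields $\Phi(t)\le d^{1/q-1/p(t)}\|f\|_{L_q(\tau)}$, which at $t=t(p)$ is precisely $\operatorname{cHC}_q(c,d)$; taking $\sup_m$ gives the cb statement. This also explains the hypothesis ``for all $t\ge 0$'': the trajectory sweeps through every exponent $p(t)\ge q$, so the differential inequality, and hence the log-Sobolev inequality feeding it, must hold along the entire flow.

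The analytic prerequisites — differentiability of $p\mapsto\|S_{t(p)}f\|_{L_q^p}$, the optimizer description \eqref{optimizationnorms} and its proximity to $\gamma$ in \eqref{gammaoperopti}, and the validity of \eqref{differentiationofnorm} beyond finite dimensions — are all supplied by the preceding lemmas, so the remaining work is bookkeeping. I expect the genuine obstacle to be conceptual rather than computational: one must verify that the entropy generated by \eqref{differentiationofnorm} is the conditional entropy relative to the matrix block $\Mz_m$ (not the scalar entropy against $\tau$), matching it to what $\operatorname{cLSI}_q$ bounds, and simultaneously track the sign of the generator and the constant emerging from $t'$ so that the Dirichlet form appears with a positive coefficient and the constants $(c,d)$ transfer unchanged between the two inequalities, exactly as in \cite{[BK16]}.
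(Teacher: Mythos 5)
Your proposal is correct and follows essentially the same route as the paper, which itself just invokes the differentiation formula \eqref{differentiationofnorm} and declares the argument identical to Theorem 4 of \cite{[BK16]}: part (i) by comparing the right derivative at $p=q$ of $\|S_{t(p)}f\|_{L_q^p}$ with that of $d^{1/q-1/p}\|f\|_{L_q(\tau)}$ along the critical schedule $t(p)=\tfrac{c}{2}\log\tfrac{p-1}{q-1}$, and part (ii) by integrating the resulting differential inequality along the flow $p(t)$. The one point the paper explicitly flags that you omit is the final approximation step in (ii): having established hypercontractivity for $f\in\cA(\Mz_m)^{++}$, one must invoke the density of $\cA$ in the $L_p$ spaces to extend the bound to arbitrary bounded initial data $f_0$.
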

\begin{proof}
	The proof of these implications uses \Cref{differentiationofnorm} and is identical to the one of Theorem 4 of \cite{[BK16]}. The only difference resides in $(ii)$ where one invokes the density of $\cA$ in all the $L_p$ spaces in order to show that hypercontractivity holds for any initial bounded operator valued function $f=f_0$.
\end{proof}
The reduction to $q=2$ follows from a standard Stroock-Varopoulos inequality relating the Dirichlet form $\cE_{\id_{\mathbb{M}_m\otimes L}}(f,f^{q-1})$ to $\cE_{\id_{\mathbb{M}_m\otimes L}}(f^{\frac{q}{2}},\,f^{\frac{q}{2}})$. 
\begin{lemma}\label{regularity}
	For any $f\in\cA(\mathbb{M}_m)^{++}$, and any $q> 1$:
	\begin{align*}
		\cE_{2,\,\id_{\mathbb{M}_m}\otimes \cL}(f)\le \frac{q^2}{4(q-1)}\cE_{q,\,\id_{\mathbb{M}_m}\otimes \cL}(f)\,.
	\end{align*}	
\end{lemma}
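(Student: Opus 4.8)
The plan is to establish the Stroock--Varopoulos inequality in the form announced just before the statement, namely to compare the Dirichlet form of $f^{q/2}$ with that of the pair $(f,f^{q-1})$. With the convention $\cE_{s,\,\id_{\mathbb{M}_m}\otimes\cL}(g)=\tau\!\big(g^{s-1}(\id_{\mathbb{M}_m}\otimes L)(g)\big)$, the left-hand side is read as the Dirichlet form of $f^{q/2}$, so the target is
\begin{align*}
\tau\!\big(f^{q/2}(\id_{\mathbb{M}_m}\otimes L)(f^{q/2})\big)\;\le\;\frac{q^2}{4(q-1)}\,\tau\!\big(f^{q-1}(\id_{\mathbb{M}_m}\otimes L)(f)\big)\,.
\end{align*}
The whole statement will be reduced to an elementary scalar inequality. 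First I would freeze $t>0$ and work with the symmetric, unital, trace-preserving map $\id_{\mathbb{M}_m}\otimes S_t$, proving the inequality for the approximate Dirichlet form $\tfrac1t\,\tau\!\big(g^{s-1}(\id_{\mathbb{M}_m}\otimes(I-S_t))g\big)$; only at the very end would I let $t\to0$.

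The key is the symmetrization of the approximate form. Since $(S_t)_{t\ge0}$ is reversible, its kernel satisfies $k_t(x,y)=k_t(y,x)$ and $\int_E k_t(x,y)\,d\mu(y)=1$, so for any self-adjoint $g,h\in L_\infty(E,\mathbb{M}_m)$ the identity $g(x)(h(x)-h(y))+g(y)(h(y)-h(x))=(g(x)-g(y))(h(x)-h(y))$ (valid at the matrix level, before tracing) yields
\begin{align*}
\tau\!\big(g(\id_{\mathbb{M}_m}\otimes(I-S_t))h\big)=\frac12\int_E\!\!\int_E k_t(x,y)\,\ntr\!\big[(g(x)-g(y))(h(x)-h(y))\big]\,d\mu(x)\,d\mu(y)\,.
\end{align*}
Applying this with $(g,h)=(f^{q/2},f^{q/2})$ and with $(g,h)=(f^{q-1},f)$, and using $k_t\ge0$, it suffices to prove the pointwise matrix inequality at $A=f(x)\ge0$, $B=f(y)\ge0$ (the common factor $1/m$ being immaterial)
\begin{align*}
\frac{4(q-1)}{q^2}\,\tr\!\big[(A^{q/2}-B^{q/2})^2\big]\le \tr\!\big[(A^{q-1}-B^{q-1})(A-B)\big]\,.
\end{align*}
Writing $A=\sum_i a_i\,|\phi_i\rangle\langle\phi_i|$, $B=\sum_j b_j\,|\psi_j\rangle\langle\psi_j|$ in orthonormal eigenbases and setting $p_{ij}=|\langle\phi_i|\psi_j\rangle|^2\ge0$, every trace above expands as $\sum_{i,j}p_{ij}(\cdots)$, with $\sum_i p_{ij}=\sum_j p_{ij}=1$; the matrix inequality then follows term by term from the scalar inequality
\begin{align*}
\frac{4(q-1)}{q^2}\,(a^{q/2}-b^{q/2})^2\le (a^{q-1}-b^{q-1})(a-b)\,,\qquad a,b\ge0\,.
\end{align*}

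The scalar inequality is immediate from Cauchy--Schwarz: for $a\ge b\ge0$ and $q>1$,
\begin{align*}
(a^{q/2}-b^{q/2})^2=\frac{q^2}{4}\Big(\int_b^a s^{\frac q2-1}\,ds\Big)^2\le \frac{q^2}{4}\Big(\int_b^a ds\Big)\Big(\int_b^a s^{q-2}\,ds\Big)=\frac{q^2}{4(q-1)}\,(a-b)(a^{q-1}-b^{q-1})\,,
\end{align*}
which is exactly the claim (the integrals converge at $0$ precisely because $q>1$). Finally, since $f\in\cA(\mathbb{M}_m)^{++}$ is bounded and bounded below, smooth functional calculus together with \Cref{hypothesisA1} places $f,f^{q/2},f^{q-1}$ in $\cA(\mathbb{M}_m)\subset\operatorname{dom}(L)$; hence $\tfrac1t(\id_{\mathbb{M}_m}\otimes(I-S_t))g\to(\id_{\mathbb{M}_m}\otimes L)g$ in $L_2(\tau)$ and the approximate forms converge to $\cE_{2}(f^{q/2})$ and $\cE_{q}(f)$ as $t\to0$, giving the lemma. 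I expect the only genuinely delicate point to be this passage to the limit, i.e. verifying that the symmetrized forms converge to the true Dirichlet forms for the matrix powers of $f$, which is where reversibility of $(S_t)_{t\ge0}$ and \Cref{hypothesisA1} are essential; the algebraic core is entirely captured by the scalar Cauchy--Schwarz bound above.
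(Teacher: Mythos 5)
Your argument is correct, but it is a genuinely different route from what the paper does: the paper's entire proof of this lemma is a citation (``derived under various conditions in the classical and quantum case, see e.g.\ Proposition 3.1 of Bakry's St-Flour notes or the decoherence paper, and readily extends to the finite von Neumann algebraic case''), with no details supplied. You instead give a self-contained Stroock--Varopoulos argument: symmetrize the approximate form $\tfrac1t\tau\big(g\,(\id\otimes(I-S_t))h\big)$ using reversibility of the kernel, reduce the resulting pointwise matrix inequality to a scalar one via the doubly stochastic overlap matrix $p_{ij}=|\langle\phi_i|\psi_j\rangle|^2$, and settle the scalar inequality $(a^{q/2}-b^{q/2})^2\le \tfrac{q^2}{4(q-1)}(a-b)(a^{q-1}-b^{q-1})$ by Cauchy--Schwarz on $\int_b^a s^{q/2-1}\,ds$. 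All three steps check out, and this route has a real advantage here: for matrix-valued functions the naive diffusion chain-rule proof of Stroock--Varopoulos breaks down because $f(x)$ and $f(y)$ need not commute, whereas your eigenbasis expansion handles the noncommutativity cleanly and works uniformly for jumps and diffusions. You also correctly identify the one genuinely delicate point, the passage $t\to0$ from the approximate forms to $\cE_2(f^{q/2})$ and $\cE_q(f)$, and the appeal to \Cref{hypothesisA1} (stability of $\cA$ under smooth functional calculus, so that $f^{q/2},f^{q-1}$ lie in the domain of the generator) is exactly what is needed to close it. In short: the paper buys brevity by outsourcing the lemma to the literature; your proof buys a verifiable, dimension-free argument at the cost of one limit to justify.
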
	

\begin{proof}
	Such an inequality was derived under various conditions in the classical and quantum case (see e.g.~Proposition 3.1 of \cite{bakry1994hypercontractivite}, or \cite{BarEID17}) and readily extends to the finite von Neumann algebraic case.
\end{proof}
The above lemma allows us to show that $\operatorname{cLSI}_2$ implies $\operatorname{cLSI}_q$, so that the proof of \Cref{grossintegration} becomes a simple consequence of \Cref{linkHCLSIq}.

\section{Classical Markov semigroups}\label{examplesfinitegroups}
In this appendix we will briefly review the definitions of some classical Markov semigroups we discussed in the main text. We also list some of the functional inequalities known for these semigroups.
\subsection{Finite groups}\label{finitegroups}
Given a finite group and a discrete time Markov chain of kernel $k(g,h)=k(gh^{-1})$, consider the kernel of the associated continuous time chain $(S_t)_{t\ge 0}$ defined by
\begin{align}\label{ktfromk}
	k_t(x,y)=|\,G\,|\,\operatorname{exp}(-t\,(\,\operatorname{id}-k))(x,y)\,.
\end{align}
By construction, this kernel is right-invariant, and the theory developed in \Cref{sect2} applies. In Theorem 3.7 of \cite{Diaconis1996a}, the authors showed how to obtain bounds of the form of (\ref{eq1}) from estimates on the log-Sobolev constant. Adapting their result to our setting, they showed that for a reversible Markov semigroup $(S_t)_{t\ge 0}$ with associated right-invariant kernel $(k_t)_{t\ge 0}$ on a finite group of cardinality $|G|>3$\footnote{The conventions in \cite{Diaconis1996} are slightly different from the ones that we use in this article: in particular, their constant $\alpha$ is related to our weak log-Sobolev constant $c$ as follows: $2\alpha c=1$.},
\begin{align}\label{kernelestimateformarkovchain}
	\sup_{g\in G}	\| G\ni h\mapsto k_t(gh^{-1})-1 \|_2 \le \operatorname{e}^{1-\gamma},
\end{align}
for
\begin{align*}
	t=\frac{c}{2}\ln\ln |G| +\frac{\gamma}{\lambda(\Delta)},~~\gamma>0,
\end{align*}	
where $\lambda(\Delta)$ denotes the spectral gap of the generator of the chain and $c$ its log-Sobolev constant. Thus, a bound on the spectral gap and log-Sobolev constant are sufficient for our purposes.
We now list some of the constants for some Markov semigroups which are of interest in quantum information theory.
\paragraph{The hypercube:} \label{hypercube}
Let $G=\mathbb{Z}_2^n$ and define the following classical Markov chain: for $i=1,...,n$, let $e_i$ be the vector in $\mathbb{Z}_2^n$ with all coordinates $0$ but in the $i$th coordinate, which is set to be $1$. Next, define a probability mass function $Q$ on $\mathbb{Z}_2^n$ by setting $k(0)=k(e_i)=1/(n+1)$ for $i=1,..,n$, and $k(x)=0$ otherwise. In words, at each time, the discrete-time chain jumps from one vertex to a neighboring one with probability $1/(n+1)$, and stays where it was with same probability.

The strong logarithmic Sobolev constant and the spectral gap for this chain are known \cite{Diaconis1996,diaconis1988group}:
\begin{align*}
	\frac{1}{c\,(S^{\operatorname{Hyp}})}=\lambda(S^{\operatorname{Hyp}})=\frac{1}{n+1}.
\end{align*}
A direct application of \Cref{logsobcq} shows that any associated QMS $(T_t^{\operatorname{Sym}})_{t\ge 0}$ obtained from $(S_t^{\operatorname{Hyp}})_{t\ge 0}$ via \Cref{eq_groupQMS} satisfies $\HC(c,\sqrt{2})$, with
\begin{align*}
	c\le 2\,(n+1)+\frac{(n+1)\log(n\log 2)}{2}.
\end{align*}

\paragraph{The finite circle:}
We now consider the simple random walk on $G=\mathbb{Z}_m$ with $m\ge 4$, of associated kernel $k(x,x\pm1)=1/2$ and uniform stationary measure. The spectral gap of the corresponding continuous time Markov chain $(S^{\operatorname{Cir}}_t)_{t\ge 0}$ is given by the formula \cite{Diaconis1996}
\begin{align*}
	\lambda(S^{\operatorname{Cir}})=1-\cos\frac{2\pi}{m}
\end{align*}
It was shown in \cite{Diaconis1996} that $(S^{\operatorname{Cir}}_t)_{t\ge 0}$ satisfies the following bound:
\begin{align*}
&	\| S^{\operatorname{Cir}}_t-\mu_G:\,L_2(\mathbb{Z}_m)\to L_\infty(\mathbb{Z}_m)\|_{cb}^2\\
	&~~~~~= 	\| S_t^{\operatorname{Cir}}-\mu_G:\,L_2(\mathbb{Z}_m)\to L_\infty(\mathbb{Z}_m)\|^2\\
	&~~~~~\le 2\left( 1+\frac{\sqrt{5}m}{8\sqrt{\pi t}}\right)\operatorname{exp}\left({-\frac{16\pi^2\,t}{5m^2}}    \right)+\frac{m+1}{2}\operatorname{e}^{-2t}.
\end{align*}
In particular, in the case $m\ge 5$, the above expression yields the following simpler bound for $t_\infty=5m^2/16\pi^2$:
\begin{align*}
	\| S_{t_\infty}^{\operatorname{Cir}}-\mu_G:\,L_2(\mathbb{Z}_m)\to L_\infty(\mathbb{Z}_m)\|_{cb}^2\le \operatorname{e}.
\end{align*}

Now, chose the uniform random walk of kernel $k(x,y)=1/m$ for any $x,y\in \ZZ_m$. This is a special case of the Markov chain studied in Theorem A.1 of \cite{Diaconis1996}, for which the strong log-Sobolev constant $c(K)$ was shown to be equal to\footnote{We recall once again that our definition of the strong log-Sobolev constant $c(K)$ corresponds to $1/2\alpha$ in \cite{Diaconis1996}.}
\begin{align}\label{eq5}
	c(K)=\frac{\log\left( m-1  \right)}{2-4/m},~~~~~~~~~ \lambda(S)=1-\frac{1}{m}.
\end{align}

\paragraph{Random Transpositions}
In \Cref{examplessec} we considered two random transposition models on the permutation group $\Si_n$:
\[ L^{RT}(f)(\omega) \lel \frac{1}{n}(\sum_{ij} f(\omega )-f(\si^{ij}\omega) \pl, \]
where $\sigma_{ij}$ is a swap of the $i$th and $j$th subsystems, and
\[ L^{NN}(f)(\omega) \lel \sum_{j=1}^n f(\omega)-f(\si_{j(j+1)}\omega) \pl .\]
We refer to \cite{Yau} for
\[ t_{1,\infty}^{L^{RT}}(\eps) \kl  c(\ln n-\ln \eps) \pl \]
for some constant $c$. In \cite{diaconis1988group}, it was shown that the spectral gap of the corresponding continuous time semigroup $(S_t^{\operatorname{RT}})_{t\ge 0}$ is
\begin{align*}
	\lambda_{\min}(S^{\operatorname{RT}})=\frac{2}{n^2}.
\end{align*}	
More recently, it was also shown that the MLSI constant for this transposition model satisfies the following bounds \cite{gao2003exponential}:
\begin{align}\label{transpomodelMLSI}
	\frac{2}{n^2}\le \alpha_1(S^{\operatorname{RT}})\le \frac{4}{n^2}\,.
\end{align}

\section*{Acknowledgment}

I.B. is partially supported by French A.N.R. grant: ANR-14-CE25-0003 "StoQ". M.J and N.L. are partially supported by DMS NSF 1800872.
D.S.F. acknowledges financial support from VILLUM FONDEN via the QMATH Centre of Excellence (Grant no. 10059), the graduate program TopMath of the Elite Network of Bavaria, the 
TopMath Graduate Center of TUM Graduate School at Technische Universit\"{a}t M\"{u}nchen and by the 
Technische Universit\"at M\"unchen – Institute for Advanced Study,
funded by the German Excellence Initiative and the European Union Seventh Framework
Programme under grant agreement no. 291763. Moreover, D.S.F. acknowledges support from the QuantERA ERA-NET Cofund in Quantum Technologies implemented within the European Union's Horizon 2020 Programme (QuantAlgo project) via the Innovation Fund Denmark. 
C.R. acknowledge financial support from the TUM university Foundation Fellowship. CR acknowledges support by the DFG cluster of excellence 2111 (Munich 
Center for Quantum Science and Technology).

\ifCLASSOPTIONcaptionsoff
  \newpage
\fi

\bibliographystyle{IEEEtran}
\bibliography{biblio}

\end{document}